\documentclass[11pt]{article}

\usepackage{amsmath,amssymb,amsfonts,amsthm,epsfig}
\usepackage[usenames,dvipsnames]{xcolor}
\usepackage{bm,xspace}
\usepackage{tcolorbox}
\usepackage{cancel}
\usepackage{fullpage}
\usepackage{framed}
\usepackage{liyang}
\usepackage{verbatim}
\usepackage{enumitem}
\usepackage{array}
\usepackage{multirow}
\usepackage{afterpage}
\usepackage{mathrsfs}
\usepackage{pifont} 
\usepackage{chngpage}
\usepackage[normalem]{ulem}
\usepackage{boxedminipage}
\usepackage{caption}
\usepackage{subcaption}

\usepackage{forest}

\usepackage{algorithm}
\usepackage{algorithmicx}
\usepackage{algpseudocode}

\usepackage{tikz}

\usetikzlibrary{calc,through,backgrounds,decorations.pathreplacing, calligraphy,arrows.meta}
\usetikzlibrary{positioning,chains,fit,shapes}

\usepackage{tikz-cd}

\usepackage{pgfplots}
\pgfplotsset{width=8cm,compat=newest}

%
%

\def\colorful{1}

\ifnum\colorful=1

\fi
\ifnum\colorful=0

\fi

%
%

\crefname{fact}{fact}{facts}

\newcommand{\error}{\mathrm{error}}

\newcommand{\adv}{\mathrm{Adv}}
\newcommand{\Adv}{\mathrm{Adv}}

\newcommand{\mcD}{\mathcal D}

\newcommand{\mcS}{\mathcal S}

\def\maj{\textsc{Maj}}

\def\XOR{\mathrm{XOR}}

\newcommand{\Stab}{\mathrm{Stab}}
\newcommand{\stab}{\mathrm{Stab}}
\newcommand{\unstab}{\mathrm{UnbalStab}}

\newcommand{\vrho}{\vec{\rho}}
\newcommand{\rhosim}{\overset{\vrho}{\sim}}
\newcommand{\absim}{\overset{a,b}{\sim}}

\newcommand{\smalle}{\eps_{\mathrm{small}}}
\newcommand{\largee}{\eps_{\mathrm{large}}}
\newcommand{\yese}{\eps_{\mathrm{yes}}}
\newcommand{\noe}{\eps_{\mathrm{no}}}
\newcommand{\weak}{\mathcal{T}_{\mathrm{weak}}}
\newcommand{\strong}{\mathcal{T}_{\mathrm{strong}}}

\newcommand{\rhoup}{\rho^{(\mathrm{AM})}}
\newcommand{\rholow}{\rho^{(\mathrm{GM})}}



\DeclarePairedDelimiter\ceil{\lceil}{\rceil}

\DeclarePairedDelimiter\abs{|}{|}

\DeclarePairedDelimiter\paren{(}{)}
\DeclarePairedDelimiter\bracket{[}{]}

\newlist{enumprop}{enumerate}{1} 
\setlist[enumprop]{label=\arabic*.,ref=\theproposition.\arabic*}

\makeatletter
\newtheorem*{rep@theorem}{\rep@title}
\newcommand{\newreptheorem}[2]{
\newenvironment{rep#1}[1]{
 \def\rep@title{#2 \ref{##1}}
 \begin{rep@theorem}\itshape}
 {\end{rep@theorem}}}
\makeatother

\newreptheorem{theorem}{Theorem}

\newcommand{\pparagraph}[1]{\bigskip \noindent {\bf {#1}}}

\title{A strong composition theorem for junta complexity \\ and the boosting of property testers \vspace{10pt} }

\author{ 
Guy Blanc \vspace{6pt} \\ 
\hspace{-7pt} {\sl Stanford} \and 
Caleb Koch \vspace{6pt} \\ 
\hspace{-10pt} { {\sl Stanford}} \and 
 Carmen Strassle \vspace{6pt} \\
 \hspace{-5pt} { {\sl Stanford}} \vspace{15pt}
 \and 
Li-Yang Tan \vspace{6pt}  \\
\hspace{-10pt} {{\sl Stanford}}
}

\date{\small{\today}}

\begin{document}

\maketitle

\begin{abstract} 
We prove a {\sl strong composition theorem} for junta complexity and show how such theorems can be used to generically boost the performance of property testers. 

The $\varepsilon$-approximate junta complexity of a function $f$ is the smallest integer $r$ such that $f$ is $\varepsilon$-close to a function that depends only on $r$ variables. A strong composition theorem states that if $f$ has large $\varepsilon$-approximate junta complexity, then $g \circ f$ has {\sl even larger} $\varepsilon’$-approximate junta complexity, {\sl even for} $\varepsilon’ \gg \varepsilon$. We develop a fairly complete understanding of this behavior, proving that the junta complexity of $g \circ f$ is characterized by that of $f$ along with the multivariate noise sensitivity of $g$. For the important case of symmetric functions $g$, we relate their multivariate noise sensitivity to the simpler and well-studied case of univariate noise sensitivity. 

We then show how strong composition theorems yield {\sl boosting algorithms} for property testers: with a strong composition theorem for any class of functions, a large-distance tester for that class is immediately upgraded into one for small distances. Combining our contributions yields a booster for junta testers, and with it new implications for junta testing. This is the first boosting-type result in property testing, and we hope that the connection to composition theorems adds compelling motivation to the study of both topics.
\end{abstract} 

\thispagestyle{empty}

\newpage 

\tableofcontents
\thispagestyle{empty}

\newpage 

\setcounter{page}{1}

\section{Introduction}

The growth in the sizes of modern datasets is both a blessing and a curse. These datasets, many of which now come with billions of features, contain a wealth of information that machine learning algorithms seek to tap into. On the other hand, their size stands in the way of the opportunities they present, as many of the algorithms that we would like to run on them simply cannot handle their dimensionality. 

Thankfully, for many tasks of interest the vast majority of features are irrelevant. This motivates the design of algorithms that are able to quickly home in on the small number of relevant features, and whose efficiency scales gracefully with the number of such features. Already in the early 1990s Blum~\cite{Blu94}  (see also~\cite{BL97,Blu03}) proposed the clean theoretical challenge  of learning an unknown {\sl $r$-junta}, a function that depends on $r\ll n$ many of its $n$ variables. Quoting~\cite{Blu94}, “It is my belief that some of the most central open problems in computational learning theory are, at their core, questions about finding relevant variables.” This is now known simply as {\sl the junta problem} and is the subject of intensive study~\cite{BHL95,MOS04,AR07,AS07,KLMMV09,AM10,ST11,Bel15,Val15,ABR16,CJLW21,CNY23}, having distinguished itself as ``the single most important open question in uniform distribution learning"~\cite{MOS04}. 

The premise of the junta problem suggests an even more basic algorithmic problem, that of determining if an unknown function is even an $r$-junta to begin with. This is the problem of {\sl testing} juntas, introduced by Fischer, Kindler, Ron, Safra, and Samorodnitsky~\cite{FKRSS04} and subsequently studied in numerous works~\cite{CG04,AS07,Bla08,Bla09,CGSM11,BGSMdW13,STW15,ABR16,ABRdW16,BKT18,Sag18,LCSSYX18,CSTWX18,BCELR18,Bsh19,Bel19,LW19,DMN19,CJLW21,ITW21,PRW22,CNY23}.  Junta testers are also at the heart of the best known testers for numerous {\sl other} classes of functions, the key insight being that many functions are well-approximated by small juntas (see~\cite{DLMORSW07,Ser10} and Chapter 5 of~\cite{Ron10} for more on this connection). The surveys by Blais~\cite{Bla10,Bla16} give broad overviews of various junta testers and their applications throughout theoretical computer science.

\paragraph{This work.}  These algorithmic applications motivate the study of approximability by small juntas as a complexity measure. For a function $f : \bits^n \to \bits$ and a distribution $\mathcal{D}$ over $\bits^n$, the $\varepsilon$-approximate junta complexity of $f$ with respect to $\mathcal{D}$, denoted $J_{\mathcal{D}}(f,\varepsilon)$, is the smallest integer $r$ such that $f$ is $\varepsilon$-close to an $r$-junta. Among the most basic questions one can ask about any complexity measure of functions is how it behaves under {\sl composition}. In the first part of this paper we develop, from the ground up, a fairly complete understanding of this question for junta complexity. We prove a near-optimal composition theorem (\Cref{thm:composition theorem intro 1})  that is built on notions of {\sl noise stability}, both classical and new. In the second part we draw a general connection (\Cref{thm:generic boost intro}) between the type of composition theorem that we prove---a {\sl strong} composition theorem, which we will soon define---and property testing, showing how they can be used to design the first generic boosters for property testers. Combining our two main contributions yields new implications for junta testing. 

\section{Our results and techniques} 

\subsection{First main result: A strong composition theorem for junta complexity}

Composition theorems are statements about {\sl hardness amplification}: the goal is to understand the extent to which the disjoint composition $(g \circ f)(x) \coloneqq g(f(x^{(1)}),\ldots,f(x^{(k)}))$ is more complex than $f$ itself, and how this depends on intrinsic properties of the combining function $g$. For approximate measures such has junta complexity, we are furthermore interested in {\sl strong} composition theorems, statements of the form: 
\begin{equation} 
\label{eq:strong amplification intro} 
J_{\mathcal{D}^k}(g\circ f, \varepsilon_{\mathrm{large}})\gg J_{\mathcal{D}}(f, \varepsilon_{\mathrm{small}})\ \ \text{even for}\ \  \varepsilon_{\mathrm{large}} \gg \varepsilon_{\mathrm{small}}. \tag{$\diamondsuit$}
\end{equation} 
In words, the composed function requires much more resources---in our case, much larger junta approximators---even if one only seeks a much coarser approximation. Strong composition theorems stand in contrast to weak ones that only amplify hardness with respect to one of the two parameters, either resources or approximation quality only. The canonical example in this context is Yao’s XOR lemma~\cite{Yao82}, which says that if $f$ is mildly hard to approximate with size-$s$ circuits, then $\mathrm{XOR} \circ f$ is extremely hard to approximate with size-$s’$ circuits. A long-recognized downside of this important result, inherent to all known proofs of it~\cite{Lev85,GNW11,Imp95,IW97} and its generalizations to arbitrary combining functions~\cite{OD02}, is the fact that it is only known to hold for $s’ \ll s$, whereas intuitively it should hold even for $s’ \gg s$. 

Composition theorems, both weak and strong, have been studied for a variety of complexity measures
 but appear to have been underexplored for junta complexity. One reason may be that the question appears deceptively simple. Indeed, things {\sl are} completely straightforward in the zero-error setting, where we have the intuitive identity $J(g \circ f, 0) = J(g,0)\cdot J(f,0)$. However, we show that the question becomes surprisingly intricate once error is allowed.

\subsubsection{Context and motivation: Counterexamples to natural composition theorems}
\label{subsec:counterexamples}

The question proves to be tricky even in the special case where the combining function $g$ is symmetric. We now state a sequence of three seemingly intuitive conjectures for this special case. While false, these conjectures and their counterexamples will motivate and lead us to the statement of our actual composition theorem. (Details and proofs of the counterexamples discussed in this section are given in~\Cref{app:counterexamples}.)

The following notation will be useful for us throughout this paper: 

\paragraph{Notation.} For a function $f : \bits^n\to\bits$, distribution $\mathcal{D}$ over $\bits^n$, and integer $r$, we write $\tilde{f}_{\mathcal{D},r}$ to denote the best $r$-junta approximator of $f$ with respect to $\mathcal{D}$. When $\mathcal{D}$ is clear from context, we simply write $\tilde{f}_r$.   

\paragraph{Conjecture 1.}  It will be convenient for us to consider composition theorems in their contrapositive form. Suppose we would like to approximate $g \circ f$ with an $R$-junta, say with respect to the uniform distribution. If $g$ is a $k$-variable symmetric function, how would we go about constructing an approximator that achieves the highest accuracy possible? Since $g$ is symmetric, one may be inclined to divide the “junta budget” of $R$ evenly among the $k$ inner functions and conjecture that
\[ 
g \circ \tilde{f}_{R/k} = g(\tilde{f}_{R/k},\ldots,\tilde{f}_{R/k})
\] 
achieves the best, or close to the best, accuracy among all $R$-junta approximators. 

However, this is badly false. Let $g$ be the $k$-variable Majority function and $f$ the $n$-variable Parity function. For any choice of $R$ satisfying $R/k < n$ (i.e.~each inner Parity receiving a budget that falls short of its arity), we have $\mathrm{Pr}[g\circ\tilde{f}_{R/k}\ne g\circ f] = \frac1{2}.$ This is because it is “all or nothing” when it comes to approximating Parity: no ($n-1$)-junta can achieve accuracy better than that of a constant approximator. The best strategy is therefore to allocate a full budget of $n$ to as many of the inner Parities as possible (i.e.~$R/n$ many of them), and a budget of  zero to the others. This shows a gap of $\frac1{2}$ versus $1-o(1)$ in the accuracies of the “divide budget equally” strategy and the optimal one. 

\paragraph{Conjecture 2.}  In light of this counterexample, one may then conjecture that the best strategy is to partition the junta budget optimally among the $k$ inner functions and feed the respective approximators of $f$ into $g$. That is, the conjecture is that the best approximator is of the form:  
\[ 
g(\tilde{f}_{r_1},\ldots,\tilde{f}_{r_k}) \text{ where } \sum_{i=1}^k r_i = R. 
\] 
While this is true for our example above, it is again badly false in general. In fact, the error of such an approximator can be close to $1$, even worse than the trivial bound of $\le \frac1{2}$ achievable with a constant approximator. 

Our counterexample reveals another counterintuitive aspect of the overall problem. Consider an approximator for $g\circ f$ of the form $g(\tilde{f}_{r_1},\ldots,\tilde{f}_{r_k}).$ We show its approximation accuracy can {\sl increase} if we replace one of the inner approximators for $f$ with a {\sl worse} one: e.g.~if we replace $\tilde{f}_{r_1}$ with $\tilde{f}_{{r_1}’}$ where ${r_1}’ < r_1$. In more technical terms that we will soon define: while the noise stability of a function is, as one would expect, monotone in the noise rate, we show that the natural generalization of it where the corruption probabilities of $0$’s and $1$’s are decoupled (defined in~\Cref{sec:proof of lower bound on advantage}) is {\sl not} monotone.   

\paragraph{Conjecture 3.} Finally, we consider a conjecture that is far laxer than either of the previous ones. It simply states that the optimal approximator for the composed function $g\circ f$ is one of {\sl composed form}: 
\[ 
h(q^{(1)},\ldots,q^{(k)}) \text{ for some $h : \bits^k \to \bits$ and $q^{(1)},\ldots,q^{(k)} : \bits^n \to \bits$,} 
\] 
where the relevant variables of $q^{(i)}$ fall within the $i$th block of variables. 
We show (to our own surprise) that this conjecture is {\sl still} false: there are composed functions for which the optimal approximator is not of composed form.   However, unlike the first two conjectures, our work shows that this conjecture is morally true in a precise sense. 

\subsubsection{Our Strong Composition Theorem} 

Our strong composition theorem implies a close quantitative relationship between the error of the optimal approximator and that of the optimal  composed form approximator, and indeed one with a specific structure that we call {\sl canonical}: 

\begin{definition}[Canonical composed form approximators] 
We say that a composed form approximator for $g\circ f$ is {\sl canonical} if it is of the form:
\[ h(\tilde{f}_{r_1},\ldots,\tilde{f}_{r_k}), \] 
where $h : \bits^k\to\bits$ is the function:  
\[ h(y) = \sign\bigg(\Ex_{\bx\sim \mathcal{D}^k}\Big[ (g\circ f)(\bx)\mid \text{$y_i = \tilde{f}_{r_i}(\bx^{(i)})$ for all $i\in [k]$} \Big]\bigg).   \]
\end{definition} 

For intuition regarding the choice of $h$, we note that for the fixed $k$-tuple of functions $\tilde{f}_{r_1},\ldots,\tilde{f}_{r_k}$, it is the combining function that minimizes error with respect to $g\circ f$. 
Canonical composed form approximators are therefore ones whose individual components are ``locally" optimal: each $\tilde{f}_{r_i}$ is the optimal $r_i$-junta approximator for $f$, and $h$ the optimal way of combining the $f_{r_i}$'s. Our strong composition theorem will say that we can get very close to the {\sl globally} optimal approximator this way.

The notion of {\sl noise stability} is central to our work: 
\begin{definition}[Multivariate noise stability]
    \label{def:multi-noise}
    For any $\mu \in (-1,1)$ and vector $\vec{\rho} \in [0,1]^k$, we define the multivariate noise stability of $g$ as
    \begin{equation*}
        \Stab_{\mu,\vec{\rho}}(g) = \Ex[g(\by)g(\bz)]
    \end{equation*}
    where independently for each $i \in [k]$, we  draw $(\by_i$, $\bz_i)$ as follows: Using $\pi_{\mu}$ to denote the unique distribution supported on $\bits$ with mean $\mu$, $\by_i \sim \pi_\mu$, and
    \begin{equation*}
        \bz_i = \begin{cases}
            \by_i &\text{w.p.~}\vec{\rho}_i\\
            \text{Independent draw from $\pi_{\mu}$} &\text{w.p.~}1 - \vec{\rho}_i.
        \end{cases}
    \end{equation*}
When $\mu = 0$ we simply write $\Stab_{\vec{\rho}}(g)$. \end{definition}

This definition allows for a different noise rate for each coordinate, generalizing the more commonly studied definition where the noise rates are the same for every coordinate (see e.g.~Chapter 2 of~\cite{ODBook}). We use the terms {\sl multivariate} noise stability and {\sl univariate} noise stability to distinguish these definitions. Even in the case of symmetric combining functions $g$, our strong composition theorem will naturally involve its multivariate noise stability (necessarily so, as already suggested by the counterexample to Conjecture 1). 

We present our strong composition theorem as a sequence of two parts that each carries a standalone message, the first of which formalizes the fact that the optimal canonical composed form approximator is a good proxy for the actual optimal approximator. It will be more convenient for us to state our results in terms of advantage instead of error, the two quantities being related via the identity $\text{advantage} = 1-2\cdot\text{error}$. Also, for notational clarity we only state here the special case where $f$ is balanced (i.e.~$\E_{\mathcal{D}}[f] = 0$). 
\medskip 

\begin{tcolorbox}[colback = white,arc=1mm, boxrule=0.25mm]
\begin{theorem}[Part I: Canonical composed form approximators are near optimal]
\label{thm:composition theorem intro 1} 
Let $f : \bits^n\to\bits$ and $g:\bits^k \to \bits$ be arbitrary functions and $\mathcal{D}$ be any distribution over $\bits^n$. Assume that $\E_{\mathcal{D}}[f]=0$. For the task of approximating $g \circ f$ under $\mathcal{D}^k$ with an $R$-junta, there is a correlation vector $\vec{\rho} \in [0,1]^k$ such that 
\begin{align*}
    \Stab_{\vec{\rho}}(g)^2 &\le  \textnormal{Advantage of optimal canonical composed form approximator} \\
    &\le \textnormal{Advantage of optimal approximator} \le \sqrt{\Stab_{\vec{\rho}}(g)}. 
\end{align*} 
\end{theorem}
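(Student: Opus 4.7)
The plan is to exhibit a single $\vec{\rho}$ determined by the variable partition of the optimal $R$-junta, and verify each inequality separately. The middle inequality $\mathrm{Adv}_c \le \mathrm{Adv}_{\mathrm{opt}}$ is automatic since every canonical composed-form approximator (with $\sum_i r_i \le R$) is itself an $R$-junta. Let $A^\star$ be the optimal $R$-junta approximator of $g\circ f$, let $S^\star$ be its set of relevant variables, let $S_i^\star$ be those in the $i$-th block, and set $r_i := |S_i^\star|$. Define $\vec{\rho}_i := \E_{\bx\sim\mathcal{D}}[f(\bx)\,\tilde f_{r_i}(\bx)]$, the advantage of the best $r_i$-junta approximator of $f$.

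\textbf{Upper bound.} Write $b(\bx_{S^\star}) := \E[(g\circ f)(\bx)\mid \bx_{S^\star}]$. Since $A^\star$ depends only on $\bx_{S^\star}$, Cauchy-Schwarz gives $\mathrm{Adv}_{\mathrm{opt}}^2 \le \E[b^2]$. Expanding $g = \sum_T \hat g(T)\chi_T$ and using independence of the $k$ blocks under $\mathcal{D}^k$ factorizes $b = \sum_T \hat g(T)\prod_{i \in T} h_i^\star$, where $h_i^\star(\bx^{(i)}_{S_i^\star}) := \E[f\mid \bx_{S_i^\star}]$. Since $\E[f]=0$ forces $\E[h_i^\star] = 0$, the cross terms in $\E[b^2]$ vanish, yielding $\E[b^2] = \mathrm{Stab}_{\vec{\tau}}(g)$ with $\tau_i = \E[h_i^{\star 2}]$. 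Because $|h_i^\star|\le 1$ and $\mathrm{sign}(h_i^\star)$ is itself an $r_i$-junta approximator of $f$, $\tau_i \le \E[|h_i^\star|] \le \vec{\rho}_i$, so monotonicity of $\mathrm{Stab}$ in its correlation vector gives $\mathrm{Adv}_{\mathrm{opt}}^2 \le \mathrm{Stab}_{\vec{\rho}}(g)$.

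\textbf{Lower bound.} Consider the canonical approximator with the same budget $(r_1,\ldots,r_k)$; its advantage lower-bounds the optimal canonical one. With $\by_i := f(\bx^{(i)})$, $\tilde\by_i := \tilde f_{r_i}(\bx^{(i)})$, and $G(\tilde\by) := \E[g(\by)\mid \tilde\by]$, the optimal combiner is $\mathrm{sign}(G)$, achieving advantage $\E[|G|] \ge \E[G^2]$ (as $|G|\le 1$). A Fourier expansion of $g$ together with block-independence gives $G(\tilde\by) = \sum_T \hat g(T)\prod_{i \in T}\E[\by_i\mid \tilde\by_i]$; squaring, the identity $\E[\E[\by_i\mid\tilde\by_i]] = \E[\by_i] = 0$ cancels cross terms, yielding $\E[G^2] = \mathrm{Stab}_{\vec{\rho}\,'}(g)$ with $\rho_i' := \E[\E[\by_i\mid\tilde\by_i]^2]$. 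Cauchy-Schwarz applied to $\vec{\rho}_i = \E[\tilde\by_i\cdot \E[\by_i\mid\tilde\by_i]]$ (using $\E[\tilde\by_i^2] = 1$) gives $\vec{\rho}_i^{\,2}\le \rho_i'$, and a second Cauchy-Schwarz across the Fourier coefficients of $g$ (using Parseval $\sum_T \hat g(T)^2 = 1$) gives
\[
\mathrm{Stab}_{\vec{\rho}}(g)^2 = \Bigl(\sum_T \hat g(T)^2 \prod_{i \in T}\vec{\rho}_i\Bigr)^{\!2} \;\le\; \sum_T \hat g(T)^2 \prod_{i \in T}\vec{\rho}_i^{\,2} \;\le\; \mathrm{Stab}_{\vec{\rho}\,'}(g),
\]
so $\mathrm{Adv}_c \ge \mathrm{Stab}_{\vec{\rho}}(g)^2$.

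\textbf{Main obstacle.} The two halves of the theorem naturally call for different correlation vectors: the upper bound is driven by the $L^2$ quantity $\E[h_i^{\star 2}]$ tied to the optimal approximator's own variable set, while the lower bound is driven by the $L^1$ advantage $\E[f\,\tilde f_{r_i}]$ tied to the best $r_i$-junta approximator on \emph{its} own optimal set. Reconciling them into a single $\vec{\rho}$ relies on the chain $\E[h_i^{\star 2}]\le \E[|h_i^\star|]\le \E[f\,\tilde f_{r_i}]$, which lets the choice $\vec{\rho}_i = \E[f\,\tilde f_{r_i}]$ simultaneously dominate the natural upper-bound vector and equal the natural lower-bound vector. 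The two successive Cauchy-Schwarz losses (one within each block, one across the Fourier coefficients) are precisely what force the quadratic gap between $\mathrm{Stab}_{\vec{\rho}}(g)^2$ and $\sqrt{\mathrm{Stab}_{\vec{\rho}}(g)}$ in the final statement.
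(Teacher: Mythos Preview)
Your proof is correct and follows essentially the same route as the paper: both halves rest on the Fourier expansion of $g$, the vanishing of cross terms from $\E[f]=0$, the comparison $\E[h^2]\le \E[|h|]$ for $|h|\le 1$, and the Jensen/Cauchy--Schwarz step $\Stab_{\vec\rho}(g)^2 \le \Stab_{\vec\rho^{\,2}}(g)$.

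The one substantive difference is your choice of $\vec\rho$. The paper takes $\vec\rho$ to be the one \emph{maximizing} $\Stab_{\vec\rho}(g)$ over all partitions $r_1+\cdots+r_k=R$ (this is what Part~II asserts), whereas you take the partition $(r_1,\dots,r_k)$ induced by the optimal $R$-junta's own variable blocks. Your choice is perfectly valid for Part~I---it gives a tighter upper bound and a looser lower bound than the paper's $\vec\rho$, but still sandwiches the two advantages as required. The tradeoff is that your $\vec\rho$ does not automatically satisfy the characterization in Part~II, while the paper's choice does.
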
 
\end{tcolorbox} 
\medskip

For most applications of composition theorems, including those in this paper, the parameters of interest are such that the quartic gap between the upper and lower bounds above are inconsequential. (In particular, if the advantage of the optimal canonical composed form approximator diminishes to $0$ as $k$ grows,  our bounds imply that the same is true for the actual optimal approximator. Indeed, the two rates of convergence are the same up to a polynomial factor.) 

Part II of~\Cref{thm:composition theorem intro 1} elaborates on the correlation vector $\vec{\rho}$, showing how it is is determined by the junta complexity of $f$ and the noise stability of $g$: 

\medskip 


\begin{tcolorbox}[colback = white,arc=1mm, boxrule=0.25mm]
{\bf Theorem 1} (Part II: Explicit description of $\vec{\rho}$){\bf .} {\it The correlation vector $\vec{\rho} \in [0,1]^k$ in Part I is the vector that maximizes $\Stab_{\vec{\rho}}(g)$, subject to the constraint:
\[ \vec{\rho}_i = \Ex_{\mathcal{D}}\big[f\cdot \tilde{f}_{r_i}\big] \text{ for all $i\in [k]$ where } \ds \sum_{i=1}^k r_i = R. \]}
\end{tcolorbox} 
\medskip

Taken together, the two parts of~\Cref{thm:composition theorem intro 1} show that the junta complexity of $g\circ f$ is tightly characterized by the junta complexity of $f$ and the multivariate noise stability of $g$. It furthermore gives a simple and explicit strategy for constructing a near-optimal approximator: first partition the junta budget optimally among the $k$ inner functions; next approximate each inner function optimally with its allocated budget; and finally combine these approximators in the optimal way. 

Naturally, it would be preferable to understand the strategy for constructing the actual optimal approximator, but our counterexamples suggest that it defies a clean and interpretable description even for symmetric $g$ (indeed, even for $g$ being the And function).

\paragraph{Corollary: Highly noise sensitive functions strongly amplify junta complexity.}  \Cref{thm:composition theorem intro 1} yields a hardness amplification statement of the form~\ref{eq:strong amplification intro} the following way.  Suppose $f$ is mildly hard for $r$-juntas, i.e.~$\Pr[\tilde{f}_r \ne f] \ge \varepsilon_{\mathrm{small}}$. Our goal is to show that $g \circ f$ is extremely hard for $R$-juntas, $\Pr[(g\circ f)_R \ne g\circ f] \eqqcolon \eps_{\mathrm{large}} \gg \varepsilon_{\mathrm{small}}$, even for $R \gg r$.  For any partition of $R = \sum_{i=1}^k r_i$, at most a $0.999$-fraction of the $r_i$'s exceed $1.01R/ k\eqqcolon r$.  \Cref{thm:composition theorem intro 1} therefore tells us that the advantage of the optimal $R$-junta is upper bounded by 
\[ \sqrt{\Stab_{\vec{\rho}}(g)} \text{ where at least a $0.001$-fraction of $\vec{\rho}$'s coordinates are at most $1-2\cdot \eps_{\mathrm{small}}$.}  \]
(Equivalently, at least a $0.001$-fraction of coordinates receive at least an $\eps_{\mathrm{small}}$ amount of noise.) 

This motivates the following definition: 
\begin{definition}[Stability under partial noise] 
\label{def:delta-eps-NS}
The $(\delta,\eps)$-noise stability of a function $g:\bits^k\to\bits$ is the quantity
\[ \max\big\{ \Stab_{\vec{\rho}}(g)\colon  \text{at least a $\delta$-fraction of $\vec{\rho}$'s coordinates are at most $1-2\eps$}\big\}.\]   
By the monotonicity of noise stability, this maximum is achieved by a $\vec{\rho}$ with exactly a $\delta$-fraction of coordinates being exactly $1-2\eps$, and the remaining $(1-\delta)$-fraction being $1$.
\end{definition}

  We have sketched the following corollary of~\Cref{thm:composition theorem intro 1}: 

\begin{corollary}[Highly noise sensitive functions strongly amplify junta complexity]
\label{cor:intro amplify junta complexity}
Let $g : \bits^k \to \bits$ be a function whose $(\frac1{2},\eps_{\mathrm{small}})$-noise stability is at most $\tau$. Then for all functions $f$,
 \[ 
\mathop{\underbrace{J_{\mathcal{D}^k}\big(g\circ f, \tfrac1{2}(1-\sqrt{\tau})\big)}_{\textup{Junta complexity of $g\circ f$}}}_{\textup{for large error}} \ge 0.99k\cdot  \hspace{-12pt} \mathop{\underbrace{J_{\mathcal{D}}(f,\eps_{\mathrm{small}})}_{\textup{Junta complexity of $f$}}}_{\textup{for small error}}\hspace{-13pt}. 
\] 
\end{corollary}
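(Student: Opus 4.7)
The plan is to argue the contrapositive of the corollary. Write $r \coloneqq J_{\mathcal{D}}(f,\eps_{\mathrm{small}})$ and suppose $R < 0.99kr$; my goal is to show that every $R$-junta approximator of $g\circ f$ has error at least $\tfrac{1}{2}(1-\sqrt{\tau})$, i.e.\ advantage at most $\sqrt{\tau}$. By definition of $r$, any $r'$-junta approximator $\tilde{f}_{r'}$ of $f$ with $r' < r$ satisfies $\Pr_{\mathcal{D}}[\tilde{f}_{r'} \ne f] > \eps_{\mathrm{small}}$, which rewrites as the correlation bound $\Ex_{\mathcal{D}}[f \cdot \tilde{f}_{r'}] < 1 - 2\eps_{\mathrm{small}}$ since $f$ is $\pm 1$-valued and balanced.

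Next I apply \Cref{thm:composition theorem intro 1} to an arbitrary $R$-junta approximator of $g \circ f$. Part I bounds its advantage by $\sqrt{\Stab_{\vec{\rho}}(g)}$ for the specific correlation vector $\vec{\rho} \in [0,1]^k$ described in Part II; by Part II, $\vec{\rho}$ is determined by an optimal integer partition $\sum_{i=1}^k r_i = R$ via $\vec{\rho}_i = \Ex_{\mathcal{D}}[f \cdot \tilde{f}_{r_i}]$.

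The final step combines a counting argument with \Cref{def:delta-eps-NS}. Since $\sum r_i = R < 0.99kr$, at most $R/r < 0.99k$ of the indices can have $r_i \ge r$; hence at least a $0.01$-fraction of the $r_i$'s satisfy $r_i < r$, and on these indices the first-paragraph correlation bound gives $\vec{\rho}_i < 1 - 2\eps_{\mathrm{small}}$. By the monotonicity of $\Stab$ in each noise rate noted just after \Cref{def:delta-eps-NS}, $\Stab_{\vec{\rho}}(g)$ is therefore upper bounded by the $(\delta,\eps_{\mathrm{small}})$-noise stability of $g$ for the corresponding constant $\delta$, which by hypothesis is at most $\tau$. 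Plugging back into Part I gives advantage $\le \sqrt{\tau}$, closing the contrapositive.

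The only non-mechanical step is matching the constant fraction of small-$r_i$ coordinates coming out of the counting with the constant $\delta$ in the hypothesis on the noise stability of $g$; this matching is what pins down the leading $0.99k$ constant in the conclusion (and analogous statements with other leading constants follow by choosing $\delta$ accordingly). I do not anticipate any deeper obstacle, since \Cref{thm:composition theorem intro 1} has already done the hard work of converting optimal-$R$-junta questions into multivariate noise stability, and the remaining work is to propagate the definition of junta complexity through the explicit formula for $\vec{\rho}$ provided by Part II.
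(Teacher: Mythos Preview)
Your argument is exactly the paper's sketch in the paragraph preceding \Cref{def:delta-eps-NS}: take the contrapositive, apply \Cref{thm:composition theorem intro 1} to bound the advantage of any $R$-junta by $\sqrt{\Stab_{\vec{\rho}}(g)}$, and then use a counting argument on the partition $r_1+\cdots+r_k=R$ to force a constant fraction of the coordinates of $\vec{\rho}$ below $1-2\eps_{\mathrm{small}}$, so that the hypothesis on the $(\delta,\eps_{\mathrm{small}})$-noise stability of $g$ applies.

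You are also right that the only delicate point is matching the fraction produced by the counting with the $\delta$ in the hypothesis, and here the corollary's stated constants do not actually line up: from $R<0.99kr$ the counting gives only a $0.01$-fraction of indices with $r_i<r$, so what one needs is a bound on the $(0.01,\eps_{\mathrm{small}})$-noise stability of $g$, not the $(\tfrac12,\eps_{\mathrm{small}})$-noise stability (equivalently, keeping $\delta=\tfrac12$ in the hypothesis would yield $0.5k$ rather than $0.99k$ in the conclusion). The paper's own sketch is loose in the same way---it derives a $0.001$-fraction just before \Cref{def:delta-eps-NS}, which likewise does not match the stated $\tfrac12$---so this is an inconsistency in the corollary's constants rather than a gap in your reasoning.
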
 
 In words, $g \circ f$ requires much larger junta approximators, an $\Omega(k)$ multiplicative factor more, even if we allow much larger error, $\frac1{2}(1- \sqrt{\tau}) \eqqcolon \eps_{\mathrm{large}}$ instead of $\eps_{\mathrm{small}}$. As two extreme examples of combining functions $g$, 
\begin{itemize}
    \item[$\circ$] The $(0.001,\eps_{\mathrm{small}})$-noise stability of the $k$-variable Parity function is $(1-2\cdot \eps_{\mathrm{small}})^{\Omega(k)}$, making it an excellent amplifier of junta complexity.
\item[$\circ$] The $(0.001,\eps_{\mathrm{small}})$-noise stability of a dictator function $g(x) = x_i$ is $1$, making it a terrible amplifier of junta complexity as one would expect: if $g$ is a dictator function then $g\circ f \equiv f$ is of course no more complex than $f$ itself.
\end{itemize} 

The partial-noise stability of these two specific examples are straightforward to compute, but the calculations quickly become unwieldy even for other basic functions. In addition to being a quantity of independent technical interest, the upcoming connections between strong composition theorems and the boosting of property testers will also motivate understanding the partial-noise stability of broad classes of functions beyond just parity and dictator. (Roughly speaking, to boost testers for a property $\mathcal{P}$ we need to analyze a function $g$ such that $\mathcal{P}$ is closed under $g$.)

Our next result is a general technique that yields sharp bounds on the partial-noise stability, and more generally the multivariate noise stability, of all {\sl symmetric} functions. 

\paragraph{The multivariate noise sensitivity of symmetric functions.} For a symmetric function $g : \bits^k \to\bits$ one intuits that its multivariate noise stability at a vector $\vec{\rho}\in [0,1]^k$ should be related to its univariate noise stability at a value $\rho^\star \in [0,1]$ that is an ``average" of the coordinates of $\vec{\rho}$. (This is certainly not true for general functions; consider for example the dictator function.)  Using techniques from the study of negative association, we formalize this intuition and prove that indeed it is sandwiched by the arithmetic and geometric means of the coordinates of $\vec{\rho}$: 

\begin{lemma}[Multivariate and univariate noise stabilities of symmetric functions]
\label{lem:intro multivariate to univariate}
Let $g : \bits^k\to\bits$ be a symmetric function, $\mu \in (-1,1)$, and $\vec{\rho}\in [0,1]^k$. Define
\[         \rholow \coloneqq \Bigg(\prod_{i \in [k]}\vec{\rho}_i\Bigg)^{1/k} \quad\text{and}\quad\ \ \rhoup \coloneqq  \frac{1}{k} \sum_{i \in [k]} \vec{\rho}_i.
   \] 
Then 
\[ \Stab_{\mu,\rholow}(g) \le  \Stab_{\mu,\vec{\rho}}(g) \le \Stab_{\mu,\rhoup}(g). \]
Furthermore, the lower bound holds under the weaker assumption that $g$ is transitive. 
\end{lemma}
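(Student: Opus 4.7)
My plan is to expand $g$ in the $\mu$-biased Fourier basis and reduce the lemma to elementary-symmetric-polynomial estimates. A direct computation using that the multivariate noise operator acts diagonally on Fourier characters with eigenvalue $\prod_{i \in S}\vec{\rho}_i$ gives
\[
    \Stab_{\mu,\vec{\rho}}(g)\;=\;\sum_{S \subseteq [k]} \hat{g}(S)^{2}\prod_{i \in S}\vec{\rho}_i.
\]
When $g$ is symmetric, $\hat{g}(S)^{2}$ depends only on $|S|$, so collecting by level gives
\[
    \Stab_{\mu,\vec{\rho}}(g)\;=\;\sum_{j=0}^{k} \alpha_j^{2}\, e_j(\vec{\rho}), \qquad \alpha_j^{2}\ge 0,
\]
where $e_j$ is the $j$th elementary symmetric polynomial. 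Because the coefficients are nonnegative, a level-by-level sandwich $\binom{k}{j}\rholow^{\,j}\le e_j(\vec{\rho})\le \binom{k}{j}\rhoup^{\,j}$ will imply the full statement for symmetric $g$.

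The upper bound on $e_j(\vec{\rho})$ is exactly Maclaurin's inequality $\bigl(e_j(\vec{\rho})/\binom{k}{j}\bigr)^{1/j}\le e_1(\vec{\rho})/k = \rhoup$, which I would simply invoke. A more conceptual alternative is to verify that $\vec{\rho}\mapsto \Stab_{\mu,\vec{\rho}}(g)$ is Schur-concave via the Schur--Ostrowski criterion, using the identity $e_{j-1}(\vec{\rho}_{-i}) - e_{j-1}(\vec{\rho}_{-i'}) = (\vec{\rho}_{i'}-\vec{\rho}_i)\, e_{j-2}(\vec{\rho}_{-i,-i'})$ (obtained by splitting each sum over whether $i'$ or $i$ appears in the set); the conclusion then follows since $\vec{\rho}$ majorizes $(\rhoup,\ldots,\rhoup)$.

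The lower bound on $e_j(\vec{\rho})$ is AM-GM applied to the $\binom{k}{j}$ monomials of $e_j$:
\[
    \frac{e_j(\vec{\rho})}{\binom{k}{j}} \;\ge\; \Bigg(\prod_{|S|=j}\prod_{i\in S}\vec{\rho}_i\Bigg)^{\!1/\binom{k}{j}} \;=\; \prod_{i\in[k]}\vec{\rho}_i^{\,\binom{k-1}{j-1}/\binom{k}{j}} \;=\; \rholow^{\,j},
\]
using that each coordinate appears in $\binom{k-1}{j-1}$ of the size-$j$ monomials and that $\binom{k-1}{j-1}/\binom{k}{j}=j/k$. To promote this to transitive $g$, I would partition the Fourier support by orbits of a stabilizer group $\Gamma\le S_k$ of $g$ that acts transitively on $[k]$. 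Since $g\circ \sigma = g$ implies $\hat{g}(\sigma(S)) = \hat{g}(S)$, the squared Fourier coefficient is constant on each orbit $O$. Transitivity of $\Gamma$ on $[k]$ guarantees by a standard double-counting argument that within any orbit of size-$j$ sets every coordinate is contained in exactly $j|O|/k$ members of $O$; the same AM-GM calculation then yields $\sum_{S\in O}\prod_{i\in S}\vec{\rho}_i \ge |O|\,\rholow^{\,j}$, and summing over orbits recovers $\Stab_{\mu,\rholow}(g)$.

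I expect the main subtlety to be the asymmetry between the two halves of the lemma. The lower bound is orbit-local and so needs only the uniform-incidence property supplied by transitivity on $[k]$. The upper bound, by contrast, genuinely requires that \emph{all} size-$j$ sets carry equal squared Fourier weight: Maclaurin's inequality does not disaggregate orbit-wise when different orbits of the same size have different cardinalities, so mere transitivity is insufficient there.
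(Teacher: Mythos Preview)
Your proposal is correct and shares the paper's high-level structure---expand $\Stab_{\mu,\vec\rho}(g)$ as $\sum_S \hat g(S)^2 \vec\rho^{\,S}$, group by level, and bound level-wise---but the tools you invoke for the level-wise bounds differ from the paper's. For the upper bound the paper does not cite Maclaurin or Schur--Ostrowski; instead it observes that drawing a uniformly random size-$\ell$ subset is the same as taking the first $\ell$ coordinates of a random permutation of $(\vec\rho_1,\ldots,\vec\rho_k)$, and then appeals to the negative association of permutation distributions to get $\E[\prod_{i\le\ell}\bx_i]\le\prod_{i\le\ell}\E[\bx_i]=(\rhoup)^\ell$. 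This is of course the same inequality your Maclaurin argument yields, just packaged differently; your route is arguably more elementary while theirs is more conceptual. For the lower bound the paper applies Jensen to $\exp$ directly at the level of the spectral sample conditioned on $|\bS|=\ell$, using only that transitivity forces $\Pr[i\in\bS\mid|\bS|=\ell]=\ell/k$ for every $i$; it never decomposes into orbits. Your orbit-by-orbit AM-GM is a finer partition that reaches the same conclusion (AM-GM being Jensen for $\log$), so the two arguments are essentially equivalent, with the paper's version slightly slicker in that it sidesteps the orbit decomposition entirely. Your closing remark about why the upper bound genuinely needs full symmetry is on point and matches the paper's statement.
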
 

The more ``reasonable" $\vec{\rho}$ is, the closer the upper and lower bounds of~\Cref{lem:intro multivariate to univariate} are. In particular, we get the following bound on the $(\delta,\eps)$-noise stability of symmetric functions: 

\begin{corollary}[The $(\delta,\eps)$-noise stability of symmetric functions; informal]
\label{cor:delta-eps-noise-stability-intro}
    For any symmetric function $g:\bits^k \to \bits$, $\delta \in (0,1)$, and $\eps \in (0,1/2)$, the $(\delta, \eps)$-noise stability of $g$ is equal to $\stab_{\mu, \rho^\star}(g)$ for some $\rho^\star \in [0,1]$ satisfying
    \begin{equation*}
        1 - 2\delta\eps - O(\eps^2) \leq \rho^\star \leq 1 - 2\delta\eps.
    \end{equation*}
\end{corollary}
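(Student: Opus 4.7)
The plan is to combine the structural characterization of the extremal $\vec{\rho}$ given in \Cref{def:delta-eps-NS} with the sandwich inequality of \Cref{lem:intro multivariate to univariate}, thereby reducing the multivariate quantity to a univariate one, and then invoke the Intermediate Value Theorem to realize the target value with a single univariate parameter $\rho^\star$.

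First I would pin down the extremal correlation vector: by \Cref{def:delta-eps-NS}, the $(\delta,\eps)$-noise stability equals $\Stab_{\mu,\vec{\rho}^{\,\circ}}(g)$ for the vector $\vec{\rho}^{\,\circ}$ with a $\delta$-fraction of coordinates at exactly $1-2\eps$ and the remaining $(1-\delta)$-fraction at $1$. A direct computation gives
\[ \rhoup = \tfrac{1}{k}\sum_i \vec{\rho}^{\,\circ}_i = 1-2\delta\eps \qquad \text{and} \qquad \rholow = \Big(\prod_i \vec{\rho}^{\,\circ}_i\Big)^{1/k} = (1-2\eps)^{\delta}. \]
Since $g$ is symmetric, \Cref{lem:intro multivariate to univariate} then yields
\[ \Stab_{\mu,(1-2\eps)^{\delta}}(g) \;\le\; \Stab_{\mu,\vec{\rho}^{\,\circ}}(g) \;\le\; \Stab_{\mu,1-2\delta\eps}(g). \]
The map $\rho \mapsto \Stab_{\mu,\rho}(g)$ is a polynomial in $\rho$ (via the $\mu$-biased Fourier expansion) and hence continuous, so the Intermediate Value Theorem applied to the interval $[(1-2\eps)^{\delta},\,1-2\delta\eps]$ produces some $\rho^\star$ in this interval with $\Stab_{\mu,\rho^\star}(g)$ equal to the $(\delta,\eps)$-noise stability. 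The claimed upper bound $\rho^\star \le 1-2\delta\eps$ is immediate; it remains only to lower-bound $(1-2\eps)^{\delta}$.

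For this final step I would Taylor-expand $h(t):=(1-t)^{\delta}$ at $t=0$ with second-order remainder: for some $\xi \in [0,2\eps]$,
\[ (1-2\eps)^{\delta} \;=\; 1 - 2\delta\eps \,+\, 2\eps^2\cdot \delta(\delta-1)(1-\xi)^{\delta-2}. \]
Since $\delta\in(0,1)$ forces $|\delta(\delta-1)|\le 1/4$, and $(1-\xi)^{\delta-2}$ is bounded by an absolute constant for $\eps$ bounded away from $1/2$ (the regime $\eps$ close to $1/2$ being handled separately by the trivial bound $\rho^\star \ge 0 \ge 1-2\delta\eps - O(\eps^2)$), this delivers $(1-2\eps)^{\delta} \ge 1 - 2\delta\eps - O(\eps^2)$. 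I do not anticipate a genuine obstacle: once \Cref{lem:intro multivariate to univariate} is in hand, the corollary is essentially a quantitative repackaging of the symmetric-function sandwich, with the only real computation being this second-order Taylor estimate.
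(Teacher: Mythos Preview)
Your proposal is correct and follows essentially the same route as the paper: identify the extremal $\vec{\rho}$ from \Cref{def:delta-eps-NS}, compute its arithmetic and geometric means, sandwich via \Cref{lem:intro multivariate to univariate}, and then lower-bound $(1-2\eps)^{\delta}$ by $1-2\delta\eps-O(\eps^2)$. The only difference is in this last calculus step: the paper proves the clean global inequality $(1-x)^c \ge 1 - cx - x^2$ for all $x,c\in[0,1]$ directly (via a concavity argument on $q_c(x)=(1-x)^c-1+cx+(1-c)x^2$), which yields the explicit constant $4$ in $4\eps^2$, whereas you use a second-order Taylor remainder plus a case split near $\eps=1/2$; your argument is fine for the informal $O(\eps^2)$ claim but gives a looser constant.
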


Recall that $\eps$ corresponds to the initial inapproximability factor $\eps_{\mathrm{small}}$ in~\Cref{cor:intro amplify junta complexity}, and so the additive gap of $O(\eps^2)$ between the upper and lower bounds is indeed small for our intended application.


\subsection{Second main result: Composition theorems and boosting of property testers} 

Composition theorems are most naturally thought of as statements about hardness amplification, and indeed that is how they are most commonly used. As our second main contribution, we show how they can be used fruitfully in their contrapositive form as meta-algorithms. In more detail, we show how they can be used to generically boost the performance guarantees of {\sl property testers}. While boosting is a story of success in both the theory and practice of machine learning, to our knowledge the analogous concept in property testing has not yet been considered. The connection that we draw can be instantiated with either strong or weak composition theorems, but as we now see, the parameters are qualitatively better in case of strong composition theorems.  

Within property testing, a major strand of research, initiated by Parnas, Ron, and Samorodnitsky~\cite{PRS02}, concerns testing  whether an unknown function has a {\sl concise representation}. Consider any parameterized property $\mathcal{P} = \{ \mathcal{P}_s\}_{s \in \mathbb{N}}$ of boolean functions: size-$s$ parities, size-$s$ juntas, size-$s$ decision trees, $s$-sparse polynomials over various fields, and so on. The task is as follows: 

\begin{definition}[$(\varepsilon,s,s')$-testing of $\mathcal{P}$ under $\mathcal{D}$]
\label{def:testing of P under D}
Given queries to an unknown function $f : \bits^n \to \bits$, access to i.i.d.~draws from a distribution $\mathcal{D}$, and parameters $s,s'\in \N$ and $\eps > 0$, distinguish between:
\begin{itemize} 
 \item[$\circ$] Yes: $f \in \mathcal{P}_s$
 \item[$\circ$] No: $f$ is $\varepsilon$-far under $\mathcal{D}$ from every function in $\mathcal{P}_{s'}$.
 \end{itemize} 
\end{definition} 
Note that the task is more challenging as $\varepsilon$ gets smaller, and as the gap between $s$ and $s'$ gets smaller. We show how a composition theorem for $\mathcal{P}$ allows one to trade off these two parameters: a tester for large $\varepsilon$ can be upgraded into one for small $\varepsilon$, at the price of larger gap between $s$ and~$s'$. The stronger the composition theorem, the more favorable this tradeoff is, and with an optimally strong composition theorem one is able to improve the $\varepsilon$-dependence without any associated price in the multiplicative gap between $s$ and $s'$:  
\medskip

\begin{tcolorbox}[colback = white,arc=1mm, boxrule=0.25mm]
\begin{theorem}[Composition theorems yield boosters for testers; informal]
Let $\mathcal{P} = \{ \mathcal{P}_s \}_{s\in\N}$ be a property  and $g : \bits^k\to\bits$ be such that $\mathcal{P}$ behaves linearly w.r.t.~$g$. Suppose that $\mathcal{P}$ admits an $(\eps_{\mathrm{small}}, \eps_{\mathrm{large}},\lambda)$-composition theorem w.r.t.~$g$.  Then any  $(\eps_{\mathrm{large}},ks,\lambda ks')$-tester for $\mathcal{P}$ can be converted in to an $(\eps_{\mathrm{small}}, s,s')$-tester for $\mathcal{P}$. 
\label{thm:generic boost intro}     
\end{theorem}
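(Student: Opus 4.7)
The plan is to use the hypothesized $(\eps_{\mathrm{large}}, ks, \lambda ks')$-tester as a black box on the composed function $g \circ f$, and to argue correctness using the two hypotheses in tandem: linearity turns a Yes instance into a Yes instance, and the composition theorem turns a No instance into a No instance. Concretely, given query access to an unknown $f : \bits^n \to \bits$ and i.i.d.~samples from $\mathcal{D}$, I would simulate query access to $g \circ f : \bits^{kn} \to \bits$ under $\mathcal{D}^k$ as follows: a query at $x = (x^{(1)}, \ldots, x^{(k)})$ is answered by making the $k$ queries $f(x^{(1)}), \ldots, f(x^{(k)})$ to $f$ and returning $g$ of the results, and a sample from $\mathcal{D}^k$ is obtained by concatenating $k$ independent draws from $\mathcal{D}$. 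We then run the given tester on this simulated oracle for $g \circ f$ with parameter $\eps_{\mathrm{large}}$ and return its verdict.

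For correctness I would verify the two cases. In the Yes case $f \in \mathcal{P}_s$, and the hypothesis that $\mathcal{P}$ behaves linearly w.r.t.~$g$ gives $g \circ f \in \mathcal{P}_{ks}$, which is exactly the Yes condition for the black-box tester. In the No case $f$ is $\eps_{\mathrm{small}}$-far under $\mathcal{D}$ from every function in $\mathcal{P}_{s'}$, and the $(\eps_{\mathrm{small}}, \eps_{\mathrm{large}}, \lambda)$-composition theorem promises that $g \circ f$ is $\eps_{\mathrm{large}}$-far under $\mathcal{D}^k$ from every function in $\mathcal{P}_{\lambda ks'}$, which is exactly the No condition. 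Passing through the black-box tester's output therefore solves the original $(\eps_{\mathrm{small}}, s, s')$-testing task. The query and sample complexities inflate by a factor of $k$ per black-box oracle call, which is the only quantitative overhead of the reduction.

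The proof is essentially a definitional unpacking, so there is no real technical obstacle in the argument itself. The main thing to get right is to state cleanly what the two ingredients in the hypothesis mean, and to verify the parameter bookkeeping: $\mathcal{P}$ ``behaves linearly w.r.t.~$g$'' must mean the forward direction $f \in \mathcal{P}_s \Rightarrow g \circ f \in \mathcal{P}_{ks}$ (which holds for size-$s$ juntas, $s$-sparse polynomials, etc., since the representation sizes add across the $k$ disjoint blocks), while the $(\eps_{\mathrm{small}}, \eps_{\mathrm{large}}, \lambda)$-composition theorem is the contrapositive guarantee that $\mathcal{D}$-distance at least $\eps_{\mathrm{small}}$ from $\mathcal{P}_{s'}$ amplifies to $\mathcal{D}^k$-distance at least $\eps_{\mathrm{large}}$ from $\mathcal{P}_{\lambda ks'}$. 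With these definitions pinned down, the reduction above is immediate, and the quantitative moral that stronger composition theorems (smaller $\lambda$) yield better boosters follows by reading off the relationship between $s'$ and $\lambda k s'$.
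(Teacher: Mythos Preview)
Your proposal is correct and matches the paper's proof essentially line for line: simulate the given tester on $g\circ f$ by answering each query with $k$ queries to $f$ and each sample with $k$ draws from $\mathcal{D}$, then use linearity for the Yes case and the composition theorem for the No case, incurring a factor-$k$ overhead in queries. One small slip in your closing commentary: in this paper's convention larger $\lambda$ (closer to $1$) is the stronger composition theorem and yields the better booster, not smaller $\lambda$.
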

\end{tcolorbox}
\medskip 

We defer the precise definitions of the terms ``$(\eps_{\mathrm{small}},\eps_{\mathrm{large}},\lambda)$-composition theorem" and ``behaves linearly" to the body of the paper, mentioning for now that $\lambda \in [0,1]$ measures the strength of the composition theorem: such a theorem says that the composed function requires $\lambda k$ more resources to achieve $\eps_{\mathrm{large}}$ error than original function to achieve $\eps_{\mathrm{small}}$ error. Therefore $\lambda = \frac1{k}$ can be viewed as the threshold separating weak and strong composition theorems, with $\lambda = 1$ corresponding to an optimally strong one. (\Cref{cor:intro amplify junta complexity}, for example, achieves $\lambda = 0.99$.) Note that if $\lambda = 1$ in~\Cref{thm:generic boost intro}, then an $(\eps_{\mathrm{large}},s,s)$-tester for all $s$ yields an $(\eps_{\mathrm{small}},s,s)$-tester for all $s$. 

The formal version of~\Cref{thm:generic boost intro} will also show that it upgrades uniform-distribution testers to strong uniform-distribution testers, and distribution-free testers to strong distribution-free testers. This stands in contrast to standard boosting in learning which can only upgrade distribution-free learners.

\subsubsection{Example applications of~\Cref{thm:generic boost intro}: New implications for junta testing} 

As mentioned in the introduction, juntas are among the most basic and intensively-studied function classes in property testing. Owing to two decades of research, the complexity of testing juntas in the {\sl non-tolerant}  setting is now fairly well-understood: we have highly-efficient adaptive~\cite{Bla09}, non-adaptive~\cite{Bla08}, and distribution-free testers~\cite{LCSSYX18,Bsh19}, all of them achieving query complexities that are essentially optimal~\cite{CG04,Bla08,STW15,CSTWX18}.

  The picture is much less clear in the more challenging {\sl tolerant} setting. For the uniform distribution, the best known testers require exponentially many queries~\cite{BCELR18,ITW21}, and there are no known distribution-free testers. By generalization~\Cref{thm:generic boost intro} to the tolerant setting and instantiating it with our strong composition theorem for juntas, we obtain new implications, both positive and negative, that help clarify this picture. 

\paragraph{Positive implication: boosting of tolerant junta testers.} First, any tolerant junta tester for large distance parameter can now be converted into one for small distance parameters, at the price of a slight gap in the junta sizes of the Yes and No cases. For example, for both the uniform and distribution-free settings we get: 

\begin{corollary}[Boosting of tolerant junta testers; special case] Suppose we have a $\poly(r)$-query tester that distinguishes between 
\label{cor:boost juntas intro}
\begin{itemize} 
\item[$\circ$] Yes: $f$ is $\frac1{4}$-close to an $r$-junta 
\item[$\circ$] No: $f$ is $\frac1{3}$-far from every $r$-junta. 
\end{itemize} 
Then for every $\eps > 0$ we have a $\poly(r/\eps)$-query tester that distinguishes between 
\begin{itemize} 
\item[$\circ$] Yes: $f$ is $\eps$-close to an $r$-junta 
\item[$\circ$] No: $f$ is $\Omega(\eps)$-far from every $1.001r$-junta. 
\end{itemize} 
\end{corollary}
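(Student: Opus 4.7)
The plan is to invoke the tolerant generalization of Theorem~\ref{thm:generic boost intro} with combining function $g = \XOR_k$ for $k = \lfloor 1/(4\eps)\rfloor$, instantiated with the property $\mathcal{P}_s = \{s\text{-juntas}\}$. Juntas behave linearly w.r.t.\ $g$ since the disjoint composition of $k$ many $r$-juntas is itself a $kr$-junta, so the structural hypothesis of the boosting theorem is immediate. Each query to $g\circ f$ is simulated by $k$ queries to $f$, and a sample from $\mathcal{D}^k$ by $k$ independent samples from $\mathcal{D}$, so a $\poly(kr)$-query large-distance tester becomes a $\poly(r/\eps)$-query tester on the base side.

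For the Yes case I would just use a union bound: if $f$ is $\eps$-close to an $r$-junta $f^\star$ under $\mathcal{D}$, then $g\circ f$ disagrees with the $kr$-junta $g\circ f^\star$ on at most a $k\eps \le 1/4$ fraction of inputs under $\mathcal{D}^k$, so the hypothesized large-distance tester says Yes. For the No case, I would invoke the strong composition theorem. Suppose $f$ is $c\eps$-far from every $1.001r$-junta for a sufficiently large absolute constant $c$. Corollary~\ref{cor:intro amplify junta complexity} with $\eps_{\mathrm{small}}=c\eps$ and $g=\XOR_k$ yields
\[
J_{\mathcal{D}^k}\bigl(g\circ f,\, \tfrac{1}{2}(1-\sqrt{\tau})\bigr) \ \ge\ \lambda k\cdot J_{\mathcal{D}}(f,c\eps)\ >\ 1.001\,\lambda\,kr,
\]
where $\lambda$ is the composition-theorem constant and $\tau$ is the $(\tfrac{1}{2}, c\eps)$-noise stability of $\XOR_k$. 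By Corollary~\ref{cor:delta-eps-noise-stability-intro}, $\tau = \Stab_{\rho^\star}(\XOR_k) = (\rho^\star)^k$ for some $\rho^\star \le 1 - c\eps$, so $\tau \le e^{-c\eps k} = e^{-\Theta(c)}$; picking $c$ large enough (an absolute constant) forces $\tau \le 1/9$ and thus $\tfrac{1}{2}(1-\sqrt{\tau}) \ge \tfrac{1}{3}$. As long as $1.001\,\lambda \ge 1$, the display certifies that $g\circ f$ is $\tfrac{1}{3}$-far from every $kr$-junta, and the large-distance tester says No.

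The main obstacle is tightening the $0.99$ in Corollary~\ref{cor:intro amplify junta complexity} to some $\lambda \ge 1/1.001$, since otherwise the composed-side junta sizes in the Yes and No cases fail to sandwich the large-distance tester's junta-size parameter~$kr$. Inspecting its derivation, the factor $0.99$ arises from a Markov-type argument: at most a $1/1.01$ fraction of the $r_i$'s in the optimal partition exceed the threshold $1.01\,R/k$. I would replace $1.01$ by $1+\mu$ and the matching fraction by $1-\gamma$, giving $\lambda \ge (1-\gamma)/(1+\mu)$, and take $\mu,\gamma \ll 0.001$. The only coupling is that the $(\gamma, c\eps)$-noise stability of $\XOR_k$, which equals $(1-2c\eps)^{\gamma k}$, must still be $\le 1/9$; enlarging $c$ proportionally restores this while preserving the $\Omega(\eps)$ farness constant in the No case.
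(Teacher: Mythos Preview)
Your approach is essentially the paper's: both set $g=\XOR_k$ with $k=\Theta(1/\eps)$, use a union bound for the Yes case, and invoke the strong composition theorem (via a counting argument on the partition $r_1+\cdots+r_k$) for the No case, tuning the slack parameter $\lambda$ close to $1$ at the cost of a larger constant in the $\Omega(\eps)$. The paper streamlines your route by deriving an $\XOR$-specific lemma (their Lemma~\ref{lem:par-boost-error}) that directly gives $\error_{\mcD^k}(\XOR_k\circ f,h)\ge \tfrac{1}{2}\bigl(1-\sqrt{\prod_i(1-2\,\error_{\mcD}(f,\tilde f_{r_i}))}\bigr)$ for \emph{arbitrary} $\mu$; this bypasses both your appeal to Corollary~\ref{cor:delta-eps-noise-stability-intro} (the stability of $\XOR_k$ is just $\prod_i\rho_i$) and the need to revisit the balanced-$f$ hypothesis baked into Corollary~\ref{cor:intro amplify junta complexity}.
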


The resulting gap between the junta sizes of the Yes and No cases, while mild, is admittedly not ideal. As alluded to above, this stems from the fact that the ``strength parameter" of~\Cref{cor:intro amplify junta complexity} is $\lambda = 0.99$ and not $\lambda = 1$.  Designing boosters that do not incur this gap, either via an optimally strong composition theorem or otherwise, is a natural avenue for future work. 

On the other hand, we now show that even with this gap,~\Cref{cor:boost juntas intro} already carries with it an interesting consequence. This consequence crucially relies on our composition theorem for juntas being strong; the proof would not have gone through had the strength parameter of~\Cref{cor:intro amplify junta complexity} only been $\lambda = \frac1{k}$.

\paragraph{Negative implication: NP-hardness in the distribution-free setting.} This implication concerns the {\sl time} rather than query complexity of testers. The same proof of~\Cref{cor:boost juntas intro} also converts a $\poly(r,n)$-time tester into a $\poly(r,1/\eps,n)$-time tester. Implicit in the work of Hancock, Jiang, Li, and Tromp~\cite{HJLT96} is an NP-hardness result for tolerantly testing juntas in the distribution-free setting. One downside of their result is that it only holds in the regime of $\eps = 1/\poly(n)$. Applying the time-analogue of~\Cref{cor:boost juntas intro}, we lift this hardness up to the standard regime of constant $\eps$:  

\begin{corollary}[NP-hardness in the distribution-free testing]
\label{cor:NP hardness intro} 
The following task is NP-hard under randomized reductions. Given queries to a function $f : \bits^n\to\bits$, access to i.i.d.~draws from a distribution $\mathcal{D}$, and parameters $r\in \N$ and $\eps > 0$, distinguish between: 
\begin{itemize} 
\item[$\circ$] Yes: $f$ is $\frac1{4}$-close under $\mathcal{D}$ to an $r$-junta; 
\item[$\circ$] No: $f$ is $\frac1{3}$-far under  $\mathcal{D}$ from every $r$-junta. 
\end{itemize} 
\end{corollary}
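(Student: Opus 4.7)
The plan is to establish NP-hardness by a contrapositive application of the time-complexity analogue of Corollary~\ref{cor:boost juntas intro}, combined with the small-$\eps$ NP-hardness implicit in~\cite{HJLT96}.

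First, I would extract from~\cite{HJLT96} a hardness statement of the following form in the distribution-free setting: it is NP-hard under randomized reductions to distinguish
\begin{itemize}
\item[$\circ$] Yes: $f$ is $\eps_0$-close under $\mathcal{D}$ to an $r$-junta;
\item[$\circ$] No: $f$ is $\Omega(\eps_0)$-far under $\mathcal{D}$ from every $1.001r$-junta,
\end{itemize}
for some $\eps_0 = 1/\poly(n)$. Their reduction produces (from an NP-hard problem such as Minimum Set Cover) a function specified by a labeled sample, together with the empirical distribution over that sample; completeness gives a small junta in the Yes case, while soundness rules out even slightly larger juntas from achieving nontrivial approximation in the No case. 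The slight gap in junta sizes ($r$ versus $1.001 r$) is exactly what will accommodate the slack introduced by our booster.

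Second, I would verify a time-complexity analogue of Corollary~\ref{cor:boost juntas intro}. The booster of Theorem~\ref{thm:generic boost intro} is a black-box reduction: it simulates a query to the composed function $g \circ f$ on inputs of length $kn$ by making a bounded number of queries to $f$, each incurring only polynomial overhead in $n$, $k$, and $1/\eps$. Consequently, a hypothesized $\poly(r,n)$-time tester for the large-$\eps$ task is converted into a $\poly(r, 1/\eps, n)$-time tester for the boosted small-$\eps$ task.

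Third, I would combine the two pieces. Assume for contradiction that the task in Corollary~\ref{cor:NP hardness intro} admits a randomized $\poly(r,n)$-time algorithm. The time-complexity analogue of Corollary~\ref{cor:boost juntas intro} then yields, for every $\eps > 0$, a $\poly(r, 1/\eps, n)$-time algorithm that distinguishes the two cases displayed above. Instantiating with $\eps = \eps_0 = 1/\poly(n)$ produces a $\poly(r, n)$-time randomized algorithm for the NP-hard problem from~\cite{HJLT96}, a contradiction.

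The main obstacle is the first step: extracting from~\cite{HJLT96} the precise hardness statement with a $1.001$ multiplicative gap in junta sizes that matches the slack introduced by Corollary~\ref{cor:intro amplify junta complexity}. It is crucial here that our composition theorem is \emph{strong} ($\lambda = 0.99$); a weak composition theorem ($\lambda = 1/k$) would produce a booster with an unbounded blowup in junta size, for which no gapped hardness statement from~\cite{HJLT96} could possibly match. Verifying that the reduction in~\cite{HJLT96} (or a mild modification thereof) naturally produces the required gap between the Yes and No junta sizes is where the bulk of the technical work lies.
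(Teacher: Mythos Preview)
Your proposal is essentially the same approach as the paper's: reduce from Set Cover to get small-$\eps$ hardness (which the paper credits to~\cite{HJLT96}), then invoke the time-complexity analogue of the booster to amplify to constant $\eps$. The paper carries this out explicitly rather than deferring to~\cite{HJLT96}: it writes down the standard Set Cover $\to$ junta-testing reduction, uses the Raz--Safra inapproximability of Set Cover to obtain an $\Omega(\log n)$ multiplicative gap in junta sizes (rather than the $1.001$ gap you aim for), and applies the simpler $(0,\eps,r,r')$-booster of Theorem~\ref{thm:boosting 0 error juntas} rather than the full tolerant booster of Corollary~\ref{cor:boost juntas intro}. This last choice is what lets the paper state the stronger conclusion that hardness holds even when the Yes case is ``$0$-close under $\mathcal D$ to an $r$-junta''; your route through Corollary~\ref{cor:boost juntas intro} would still prove the stated corollary but would not immediately yield that strengthening. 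Your identification of the ``main obstacle'' is accurate---the paper resolves it precisely by spelling out the Set Cover reduction and importing the needed gap from~\cite{RS97}.
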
 

This implies a fairly dramatic separation  between the non-tolerant versus tolerant  versions of the problem. The recent $\poly(r)$-query non-tolerant testers~\cite{LCSSYX18,Bsh19} are also time efficient, running in $\poly(r,n)$ time.~\Cref{cor:NP hardness intro} shows that any tolerant tester, regardless of query efficiency, must have  time complexity that is as bad as that of SAT: e.g.~if SAT requires randomized exponential time, then so does any tolerant tester. 

In fact, our actual result is stronger than as stated in~\Cref{cor:NP hardness intro}: we prove that the task is NP-hard even if the Yes case states that $f$ is {\sl 0-close} under $\mathcal{D}$ to an $r$-junta. We therefore show  that the testers of~\cite{LCSSYX18,Bsh19} are quite fragile in the sense that they break if the Yes case in the definition of non-tolerant testing is changed from ``$f$ is an $r$-junta" to ``$f$ is $0$-close under $\mathcal{D}$ to an $r$-junta".  

\section{Other related work} 

\paragraph{O'Donnell's generalization of Yao's XOR lemma.}  
Yao's XOR lemma states that if $f$ is $\eps$-hard against circuits of size $s$, meaning every size-$s$ circuit differs from $f$ on at least an $\eps$-fraction of inputs, then $\mathrm{XOR}_k\circ f$ is $(\frac1{2} + \frac1{2}(1-2\eps)^k + \delta)$-hard against circuits of size $s'$ where 
\[ s'= \Theta\bigg(\frac{\delta^2}{\log(1/\eps)}\bigg)\cdot s. \]
The $(1-2\eps)^k$ term in the resulting inapproximability factor agrees precisely with the (univariate) noise stability of $\mathrm{XOR}_k$ at $\rho = 1-2\eps$.   In~\cite{OD02} O'Donnell showed that this is no coincidence. He proved a far-reaching generalization of Yao's XOR lemma that allows for an arbitrary combining function $g : \bits^k \to \bits$ instead of XOR, and showed that the resulting inapproximability of $g\circ f$ is given by the ``expected bias" of $g$, a quantity that is closely related to the (univariate) noise stability of $g$.  

Like Yao's XOR lemma,~\cite{OD02}'s  composition theorem is weak in the sense that the hardness of $g\circ f$ only holds against size $s'$ circuits where $s'  \ll s$.  (In fact,~\cite{OD02} incurs an additional multiplicative loss of $k$ in the resulting circuit size.) Our composition theorem concerns a different resource, juntas instead of circuits, and as emphasized in the introduction, our main focus is on proving a composition theorem that is strong in the sense of amplifying both the amount of resource required and the inapproximability factor. 

Both our work and~\cite{OD02} utilize  Fourier analysis in our proofs, which is to be expected given the centrality of noise stability to both works. That aside, our overall approach and techniques are entirely different from~\cite{OD02}'s---necessarily so, as we elaborate next.

\paragraph{Hardness amplification via boosting.} 
In~\cite{KS03} Klivans and Servedio observed that most known hardness amplification results are proved via a boosting-type argument. For example, for Yao's XOR lemma and~\cite{OD02}'s generalization of it, one proceeds by contradiction: one assumes that $\mathrm{XOR}_k\circ f$ can be mildly approximated by a size-$s'$ circuit $C$  (in the language of boosting, $C$ is a weak hypothesis for $\mathrm{XOR}_k \circ f$), and  one constructs a larger circuit $C^\star$ of size $s$ that well-approximates $f$ (i.e.~$C^\star$ is a strong hypothesis for $f$).  In boosting, the strong hypothesis is built out of many weak hypotheses; likewise, in Yao's XOR lemma the size-$s$ circuit $C^\star$ is built out of many size-$s'$ circuits that are like $C$. The work of~\cite{KS03} formalizes this connection. 

From this perspective, it becomes clear why such approaches are fundamentally limited to weak composition theorems where $s' \ll s$.  Strong composition theorems therefore necessitate a different tack, and indeed our proof proceeds via the forward implication instead of the contrapositive: we reason directly about the inapproximability of $g\circ f$ under the assumption about the inapproximability of $f$.  Somewhat ironically, our second main contribution is then an application {\sl of} strong composition theorems {\sl to} the boosting of property testers, which goes in the opposite direction to~\cite{KS03}'s ``Boosting $\Rightarrow$ Hardness Amplification" observation above.

\paragraph{Independent work of Chen and Patel \cite{CP23}.} A recent work of Chen and Patel also gives new lower bounds for tolerant junta testing. For the problem of testing whether an unknown function is $\eps_1$-close to or $\eps_2$-far from a $k$-junta under the {\sl uniform distribution}, they prove a {\sl query} lower bound of $k^{\Omega(\log(1/(\eps_2-\eps_1)))}$, which is  superpolynomial when the gap $\eps_2-\eps_1$ is {\sl subconstant}. This yields the first superpolynomial query complexity separation between tolerant and non-tolerant testing for a natural property of boolean functions. 

Their result is incomparable to \Cref{cor:NP hardness intro} in several respects. We give a {\sl time} lower bound when the gap $\eps_2-\eps_1$ is a {\sl fixed constant} in the {\sl distribution-free} setting. Being an NP-hardness result, our lower bound is conditional whereas theirs is unconditional. 

\section{Discussion and future work} 

Complexity measures can behave in highly counterintuitive ways under composition, which makes composition theorems, and strong composition theorems in particular,  tricky to prove.  
 A motivating goal of this work is to develop an  understanding of strong composition theorems from first principles, and hence our focus on junta complexity, perhaps the most basic complexity measure of a function. We are optimistic that our techniques can apply to other measures, though we believe that as in this work, much of the challenge will lie in first figuring out the right statement to prove. 
 
 Consider for example decision tree complexity, a natural next step from junta complexity. There are existing strong XOR lemmas for decision tree complexity, but they come with limitations and do not appear to be the final word. (Briefly, the XOR lemma of~\cite{Dru12} is only strong when the initial inapproximability factor $\eps_{\mathrm{small}}$ is at least a constant, and the strong XOR lemma of~\cite{BB19,BKLS20} only holds for decision trees that are allowed to ``abort".) Indeed, Shaltiel~\cite{Sha04} has shown that certain hoped-for strong XOR lemmas for decision tree complexity are false, though as he remarked, his counterexample ``seems to exploit defects in the formation of the problem rather than show that our general intuition for direct product assertions is false".  We hope that our results, and specifically the new connections to various notions of noise stability, can serve as a guide to the right statement for decision tree complexity and other measures. 

As for our second main result, the general connection between strong composition theorems and the boosting of property testers, we believe that it adds compelling algorithmic motivation to the study of composition theorems, a topic traditionally considered to be mostly of complexity-theoretic interest.  Likewise, we hope that our work spurs future research on this new notion of boosting for property testers, a notion that we believe is of interest  independent of the connections to composition theorems. For example, an ambitious goal for future work is to broadly understand when and how a tester for constant distance parameter $\eps$ can be automatically upgraded into one with the optimal $\eps$-dependence, as well as the associated costs of such a transformation.

\section{Preliminaries}

\pparagraph{Distributions and random variables.} We use \textbf{bold font} (e.g $\bx \sim \mcD$) to denote random variables. 
For any set $S$, we use $\bx \sim S$ as shorthand for $\bx \sim \mathrm{Unif}(S)$ where $\mathrm{Unif}(\cdot)$ denotes the uniform distribution. Of particular importance to this work will be $\mu$-biased distributions over the Boolean hypercube.
\begin{definition}[$\mu$-biased distribution]
    \label{def:mu-biased}
    For any $\mu \in (-1,1)$, we use $\pi_\mu$ to denote the unique distribution over $\bits$ with mean $\mu$. Formally, for $\by \sim \pi_{\mu}$,
    \begin{equation*}
        \by = \begin{cases}
            1 &\text{with probability } \frac{1 + \mu}{2} \\
            -1 &\text{with probability } \frac{1 - \mu}{2}.
        \end{cases}
    \end{equation*} 
    Similarly, for ${\vrho} \in [-1,1]^k$, we use $\pi_{\vrho}$ to denote the product distribution $\pi_{\vrho_1} \times \cdots \times \pi_{\vrho_k}$.
\end{definition}
\begin{definition}[$\vrho$-correlated]
    Fix some bias $\mu \in (-1,1)$. For any $\vrho \in [0,1]^k$ and $y \in \bits^k$, we write $\bz \rhosim y$ to denote that for each $i \in [k]$, $\bz_i$ is independently drawn as
    \begin{equation*}
        \bz_i = \begin{cases}
            y_i &\text{with probability }\vrho_i\\
            \text{Drawn from $\pi_{\mu}$} &\text{with probability }1 - \vrho_i.
        \end{cases}
    \end{equation*}
\end{definition}
Whenever we use the above notation, the choice of $\mu$ will be clear from context. This gives the following more succinct way to express \Cref{def:multi-noise}, defining multivariate noise stability,
\begin{equation*}
    \Stab_{\mu,\vrho}(g) \coloneqq \Ex_{\by \sim (\pi_{\mu})^k,\bz \rhosim \by}[g(\by)g(\bz)].
\end{equation*}

\pparagraph{Some useful sets.} For any integers $a \leq b$, we use $[a,b]$ as shorthand for the set $\{a, a+1, \ldots, b\}$. Similarly, for $b \geq 1$, we use $[b]$ as shorthand for the set $[1,b]$. For any set $S$ and $\ell \leq |S|$, we use $\binom{S}{\ell}$ to denote all subsets of $S$ with cardinality $\ell$.

\pparagraph{Junta complexity.} For any function $f: \bits^n \to \bits$, and $S \subseteq [n]$, we say that $f$ is an $S$-junta if for all $x,y \in \bits^n$ for which $x_i = y_i$ whenever $i \in S$ it holds that $f(x) = f(y)$. With a slight abuse of notation, when $r \in [n]$ is an integer, we say that $f$ is an $r$-junta if there is a set $|S| \leq r$ for which $f$ is an $r$-junta.

\pparagraph{Advantage.} 
For any functions $f, g:\bits^n \to \bits$ and distribution $\mcD$ over $\bits^n$, we define
\begin{equation*}
    \adv_\mcD(f,g) \coloneqq \Ex_{\bx \sim \mcD}[f(\bx) g(\bx)].
\end{equation*}
With a slight abuse of notation, we define for $f:\bits^n \to \bits$ and $S \subseteq [n]$,
\begin{equation*}
    \adv_{\mcD}(f,S) \coloneqq \max_{S\text{-junta }g:\bits^n \to \bits} \adv_{\mcD}(f,g).
\end{equation*}
Similarly, for $r \in [n]$,
\begin{equation*}
    \adv_{\mcD}(f,r) \coloneqq \max_{r\text{-junta }g:\bits^n \to \bits} \adv_{\mcD}(f,g).
\end{equation*}
When the base distribution $\mcD$ is clear, we will drop it from our notation. Furthermore, for any function $f:\bits^n \to \bits$ and $S \subseteq [n]$ or $r \in [n]$, we use $\tilde{f}_S$ and $\tilde{f}_r$ to denote the $S$-junta and $r$-junta respectively maximizing the above two advantages.

\pparagraph{Function composition.}
For a function $f: \bits^n \to \bits$, its direct product $f^{\otimes k}:\paren*{\bits^{n}}^{k} \to \bits^k$ is defined as
\begin{equation*}
        f^{\otimes k}(x^{(1)}, \ldots, x^{(k)}) = (f(x^{(1)}), \ldots, f(x^{(k)})).
\end{equation*}
For any $g:\bits^k \to \bits$, we use $g \circ f:\paren*{\bits^{n}}^{k} \to \bits$ as shorthand for $g\circ f^{\otimes k}$, meaning,
\begin{equation*}
    (g\circ f)(x^{(1)}, \ldots, x^{(k)}) = g(f(x^{(1)}), \ldots, f(x^{(k)})).
\end{equation*}

\pparagraph{Vector powers.} For any vector $v \in \R^k$ and set $S \subseteq [k]$, we'll use the notation $v^S$ as shorthand for
\begin{equation*}
    v^S \coloneqq \prod_{i \in S} v_i.
\end{equation*}

\subsection{Fourier Analysis}

Our proof of \Cref{thm:composition theorem intro 1} will make heavy use of Fourier analysis over the $\mu$-biased hypercube, $(\pi_\mu)^k$. In this section, we will review relevant definitions and facts. A more complete exposition is given in \cite{ODBook}.

For any $\mu \in (-1,1)$, we define $\phi_\mu(x) \coloneqq \frac{x-\mu}{\sigma}$ where $\sigma \coloneqq \sqrt{1 - \mu^2}$. Every $g: \bits^k \to \R$ can be uniquely decomposed as
\begin{equation*}
    g(y) = \sum_{S \subseteq [k]} \hat{g}_{\mu}(S) \prod_{i \in S}\phi_\mu(y_i)\quad\quad\text{where}\quad\quad \hat{g}_{\mu}(S) = \Ex_{\by \sim (\pi_\mu)^k}\bracket*{g(\by) \prod_{i \in S} \phi_\mu(\by_i)}.
\end{equation*}
This decomposition has a number of useful properties stemming from the fact that transforming $g$ from its representation as a truth table to its Fourier coefficients $\hat{g}_{\mu}(S)$ is an orthonormal transformation.
\begin{fact}[Basic facts about the Fourier decomposition]
    \label{fact:pp-fourier}
    \leavevmode
    \begin{enumerate}
        \item \textnormal{\textbf{Plancherel's theorem:}} For any $g, h: \bits^k \to \R$ and $\mu \in (-1,1)$,
        \begin{equation*}
            \Ex_{\by \sim (\pi_\mu)^k}[g(\by)h(\by)] = \sum_{S \subseteq [k]} \hat{g}_\mu(S)\hat{h}_\mu(S).
        \end{equation*}
        \item \textnormal{\textbf{Parseval's theorem:}} For any $g: \bits^k \to \R$ and $\mu \in (-1,1)$,
        \begin{equation*}
            \Ex_{\by \sim (\pi_\mu)^k}[g(\by)^2] = \sum_{S \subseteq [k]} \hat{g}_\mu(S)^2.
        \end{equation*}
    \end{enumerate}
\end{fact}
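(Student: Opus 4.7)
The plan is to reduce both Plancherel's and Parseval's identities to the orthonormality of the biased Fourier basis $\chi_S(y) \coloneqq \prod_{i \in S} \phi_\mu(y_i)$ under the inner product $\langle g, h\rangle \coloneqq \E_{\by \sim (\pi_\mu)^k}[g(\by) h(\by)]$. Once orthonormality is established, Plancherel follows by expanding $g$ and $h$ in the basis and using bilinearity, and Parseval is the specialization $g = h$.

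First I would handle a single coordinate: since $\by \sim \pi_\mu$ has mean $\mu$ and variance $\sigma^2 = 1 - \mu^2$, a direct computation gives $\E[\phi_\mu(\by)] = \E[(\by - \mu)/\sigma] = 0$ and $\E[\phi_\mu(\by)^2] = \mathrm{Var}(\by)/\sigma^2 = 1$. Then for $S, T \subseteq [k]$, the product $\chi_S(\by)\chi_T(\by) = \prod_{i \in S \triangle T}\phi_\mu(\by_i) \cdot \prod_{i \in S \cap T}\phi_\mu(\by_i)^2$ factors across independent coordinates under $(\pi_\mu)^k$, so
\begin{equation*}
\Ex_{\by \sim (\pi_\mu)^k}[\chi_S(\by)\chi_T(\by)] = \prod_{i \in S \triangle T}\Ex[\phi_\mu(\by_i)] \cdot \prod_{i \in S \cap T}\Ex[\phi_\mu(\by_i)^2] = \mathbb{1}[S = T].
\end{equation*}

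Next I would verify that $\{\chi_S\}_{S \subseteq [k]}$ is in fact a basis for the $2^k$-dimensional vector space of functions $\bits^k \to \R$: orthonormality implies linear independence, and there are $2^k$ such functions, so they span. Consequently every $g$ admits the unique expansion $g = \sum_S \hat{g}_\mu(S) \chi_S$, and taking the inner product of both sides against $\chi_T$ recovers the stated formula $\hat{g}_\mu(S) = \E[g(\by) \chi_S(\by)]$. For Plancherel, I expand both functions to get
\begin{equation*}
\Ex_{\by \sim (\pi_\mu)^k}[g(\by)h(\by)] = \sum_{S, T \subseteq [k]} \hat{g}_\mu(S) \hat{h}_\mu(T) \Ex[\chi_S(\by)\chi_T(\by)] = \sum_{S \subseteq [k]} \hat{g}_\mu(S) \hat{h}_\mu(S),
\end{equation*}
and Parseval is the case $h = g$.

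There is no real obstacle here; the statement is a standard textbook fact recorded for reference, so the proof is essentially a two-line check of orthonormality of the $\chi_S$'s followed by bilinear expansion. The only point worth being careful about is the normalization $\sigma = \sqrt{1 - \mu^2}$, which is precisely what is needed to make $\E[\phi_\mu(\by)^2] = 1$; had a different normalization been chosen, extra factors of $\sigma$ would appear in Plancherel.
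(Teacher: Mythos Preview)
Your proposal is correct and is the standard argument. The paper does not actually prove this fact; it simply records it as a known result, noting only that it ``stem[s] from the fact that transforming $g$ from its representation as a truth table to its Fourier coefficients $\hat{g}_{\mu}(S)$ is an orthonormal transformation'' and pointing to \cite{ODBook} for a fuller exposition---which is exactly the orthonormality you establish and then use.
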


In particular, when $g$ has a range of $\bits$, Parseval's theorem guarantees that the sum of its squared Fourier coefficients is $1$. As a result, the following distribution is well defined.
\begin{definition}[Spectral sample]
    For any $g: \bits^k \to \bits$ and bias $\mu \in (-1,1)$, the \emph{spectral sample} of $g$, denoted $\mcS_\mu(g)$, is the probably distribution over subsets of $[k]$ in which the set $S$ has probability $\hat{g}_\mu(S)^2$.
\end{definition}

The Fourier decomposition gives a concise way to represent important quantities, as in the following results.
\begin{proposition}[Multivariate noise stability from the Fourier spectrum.]
    \label{prop:stab-fourier}
    For any $\mu \in (-1,1)$ and $\vrho \in [0,1]^k$, $\Stab_{\mu, \vrho}$ can be related to $g$'s $\mu$-biased Fourier decomposition as,
    \begin{equation*}
        \Stab_{\mu, \vrho}(g) = \sum_{S \subseteq [k]} \hat{g}(S)^2 \vrho^S = \Ex_{\bS \sim \mcS_{\mu}(g)}[(\vrho)^{\bS}].
    \end{equation*}
\end{proposition}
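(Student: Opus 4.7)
The plan is to prove this by expanding $g(\by)g(\bz)$ using the $\mu$-biased Fourier decomposition and then exploiting the product structure of the noise distribution. First I would write
\begin{equation*}
    g(\by)g(\bz) = \sum_{S,T \subseteq [k]} \hat{g}_\mu(S)\hat{g}_\mu(T)\prod_{i \in S}\phi_\mu(\by_i)\prod_{j \in T}\phi_\mu(\bz_j),
\end{equation*}
take expectations, and pull the sum outside. Because the pairs $(\by_i,\bz_i)$ are mutually independent across $i \in [k]$, the expectation of the product factors coordinatewise, which is the crucial simplification.

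Next I would carry out the single-coordinate computation, which is the technical heart. For each $i$, I need $\Ex[\phi_\mu(\by_i)\phi_\mu(\bz_i)]$. Conditioning on whether the ``copy'' or ``resample'' branch is taken (probabilities $\vrho_i$ and $1-\vrho_i$ respectively), one branch contributes $\Ex_{\by_i \sim \pi_\mu}[\phi_\mu(\by_i)^2] = 1$ (since $\phi_\mu$ is the standardization of a $\pi_\mu$-distributed variable to mean $0$ and variance $1$), while the other contributes $\Ex[\phi_\mu(\by_i)]\Ex[\phi_\mu(\bz_i)] = 0$ by independence and the mean-zero property of $\phi_\mu$. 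This gives $\Ex[\phi_\mu(\by_i)\phi_\mu(\bz_i)] = \vrho_i$, and analogously $\Ex[\phi_\mu(\by_i)] = \Ex[\phi_\mu(\bz_i)] = 0$ and $\Ex[\phi_\mu(\by_i)^2] = \Ex[\phi_\mu(\bz_i)^2] = 1$.

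Combining these, the coordinatewise expectation $\Ex\bigl[\prod_{i \in S}\phi_\mu(\by_i)\prod_{j \in T}\phi_\mu(\bz_j)\bigr]$ factors as a product over $i \in [k]$ of terms that are $0$ unless $i$ appears in both $S$ and $T$ or in neither (so that no lone $\phi_\mu(\by_i)$ or $\phi_\mu(\bz_i)$ survives), which forces $S = T$. On the surviving diagonal, each $i \in S$ contributes $\vrho_i$, so the overall expectation equals $\prod_{i \in S}\vrho_i = \vrho^S$. Assembling the pieces yields
\begin{equation*}
    \Stab_{\mu,\vrho}(g) = \Ex[g(\by)g(\bz)] = \sum_{S \subseteq [k]} \hat{g}_\mu(S)^2\, \vrho^S,
\end{equation*}
which is the first claimed equality. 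The second equality is then immediate from the definition of the spectral sample $\mcS_\mu(g)$, since by definition $\Pr[\bS = S] = \hat{g}_\mu(S)^2$ and so the sum above is exactly $\Ex_{\bS \sim \mcS_\mu(g)}[\vrho^{\bS}]$.

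I do not expect any real obstacle here; the only subtle step is verifying that non-diagonal terms $(S \neq T)$ vanish, which boils down to $\phi_\mu$ being mean-zero under $\pi_\mu$ together with the independence of the $(\by_i,\bz_i)$ across $i$. Everything else is bookkeeping via Plancherel-style expansion.
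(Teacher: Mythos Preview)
Your proof is correct and follows essentially the same Fourier-analytic route as the paper. The only cosmetic difference is organizational: the paper first computes the Fourier coefficients of the smoothed function $g^{(\vrho)}(y) = \Ex_{\bz \rhosim y}[g(\bz)]$ (showing $\widehat{g^{(\vrho)}}_\mu(S) = \vrho^S\,\hat{g}_\mu(S)$) and then applies Plancherel, whereas you expand both factors directly and kill the off-diagonal terms; the underlying single-coordinate calculation $\Ex[\phi_\mu(\by_i)\phi_\mu(\bz_i)] = \vrho_i$ is identical.
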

\begin{proof}
    We define $g^{(\vrho)}(y) \coloneqq \Ex_{\bz \rhosim y}[g(\bz)]$. Then, by Plancherel's theorem,
    \begin{equation*}
        \Stab_{\mu, \vrho}(g) = \Ex_{\by \sim (\pi_{\mu})^k}[g(\by) g^{(\vrho)}(\by)] = \sum_{S \subseteq [k]} \wh{g}_\mu(S) \wh{g^{(\vrho)}}_\mu(S).
    \end{equation*}
    Next, we compute the Fourier decomposition of $g^{(\vrho)}$.
    \begin{align*}
        \wh{g^{(\vrho)}}_\mu(S) &= \Ex_{\by \sim (\pi_{\mu})^k}\bracket*{g^{(\vrho)}(\by) \prod_{i \in S} \phi_\mu(\by_i)}\\
        &= \Ex_{\by \sim (\pi_{\mu})^k, \bz \rhosim \bx}\bracket*{g(\bz) \prod_{i \in S} \phi_\mu(\by_i)}\\
        &= \Ex_{\by \sim (\pi_{\mu})^k, \bz \rhosim \bx}\bracket*{g(\by) \prod_{i \in S} \phi_\mu(\bz_i)} \tag{$(\by,\bz)$ distributed identically to $(\bz, \by)$}\\
        &= \Ex_{\by\sim (\pi_\mu)^k}\bracket*{g(\by) \cdot \Ex_{\bz \rhosim \by}\bracket*{ \prod_{i \in S} \phi_\mu(\bz_i)}}.
    \end{align*}
    Applying the independence of $\bz_1, \ldots, \bz_k$ conditioned on $\by$ and that $\Ex[\phi_\mu(\bz_i)] = \vrho_i \phi_\mu(\by_i)$,
    \begin{align*}
         \wh{g^{(\vrho)}}_\mu(S) &= \Ex_{\by\sim (\pi_\mu)^k}\bracket*{g(\by) \cdot \prod_{i \in S} \vrho_i \phi_\mu(\by_i)} \\
        &= (\vrho)^S \cdot\Ex_{\by\sim (\pi_\mu)^k}\bracket*{g(\by) \cdot \prod_{i \in S}\phi_\mu(\by_i)} = (\vrho)^S \wh{g}_\mu(S).
    \end{align*}
    Putting the above together,
    \begin{equation*}
         \Stab_{\mu, \vrho}(g) = \sum_{S \subseteq [k]}\hat{g}_\mu(S)^2 (\vrho)^S. \qedhere
    \end{equation*}
\end{proof}
One immediate corollary of the above is that multivariate noise stability is monotone.
\begin{corollary}[Multivariate noise stability is monotone]
    \label{cor:multi-monotone}
    For any $\mu \in (-1,1)$, $g:\bits^k \to \bits$, and $\vrho, \vec{\rho'} \in [0,1]^k$ satisfying $\vrho_i \leq \vec{\rho'}_i$ for all $i \in [k]$,
    \[\stab_{\mu, \vrho}(g) \leq \stab_{\mu, \vec{\rho'}}(g).\]
\end{corollary}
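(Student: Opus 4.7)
The plan is to deduce monotonicity directly from the Fourier representation of multivariate noise stability established in \Cref{prop:stab-fourier}. That proposition gives
\[
\Stab_{\mu,\vrho}(g) = \sum_{S \subseteq [k]} \hat{g}_\mu(S)^2 \, \vrho^S,
\]
and the same identity holds for $\vec{\rho'}$. Since $\hat{g}_\mu(S)^2 \geq 0$ for every $S$, it suffices to verify the termwise inequality $\vrho^S \leq (\vec{\rho'})^S$ for each $S \subseteq [k]$.

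For the termwise comparison, I would observe that by assumption $0 \leq \vrho_i \leq \vec{\rho'}_i \leq 1$ for every $i \in [k]$. Then for any $S \subseteq [k]$,
\[
\vrho^S = \prod_{i \in S} \vrho_i \leq \prod_{i \in S} \vec{\rho'}_i = (\vec{\rho'})^S,
\]
using that a product of numbers in $[0,1]$ is monotone in each factor (which follows by a trivial induction on $|S|$, since replacing a single factor $\vrho_j$ by the larger $\vec{\rho'}_j$ multiplies the product by $\vec{\rho'}_j / \vrho_j \geq 1$ when $\vrho_j > 0$, and the inequality is immediate when $\vrho_j = 0$ since the left-hand side is then $0$). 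The empty-product convention handles $S = \emptyset$, where both sides equal $1$.

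Summing the termwise inequality against the nonnegative weights $\hat{g}_\mu(S)^2$ yields
\[
\Stab_{\mu,\vrho}(g) = \sum_{S \subseteq [k]} \hat{g}_\mu(S)^2 \, \vrho^S \leq \sum_{S \subseteq [k]} \hat{g}_\mu(S)^2 \, (\vec{\rho'})^S = \Stab_{\mu,\vec{\rho'}}(g),
\]
which is the desired conclusion. There is no real obstacle here: the work was done in proving \Cref{prop:stab-fourier}, and the corollary is essentially a one-line consequence of the nonnegativity of squared Fourier coefficients together with the coordinatewise monotonicity of products of numbers in $[0,1]$.
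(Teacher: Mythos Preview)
Your proposal is correct and matches the paper's approach: the paper states this corollary as an immediate consequence of \Cref{prop:stab-fourier} without further proof, and your argument---termwise comparison of $\vrho^S \le (\vec{\rho'})^S$ summed against the nonnegative weights $\hat{g}_\mu(S)^2$---is exactly the intended one-line derivation.
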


Recall that for any $\nu \in [-1,1]^k$, the distribution $\pi_{\nu}$ is the unique product distribution supported on $\bits^k$ with mean $\nu$. The Fourier decomposition of $g$ also gives a useful way to compute $\Ex_{\by \sim \pi_{\nu}}[g(\by)]$.
\begin{fact}
    \label{fact:fourier-sample}
    For any $g: \bits^k \to \R$, $\mu \in (-1,1)$, and $\nu \in [-1,1]^k$,
    \begin{equation*}
        \Ex_{\by \sim \pi_{\nu}}[g(\by)] = \sum_{S \subseteq [k]}\hat{g}_\mu(S) \prod_{i \in S}\phi_\mu(\nu_i).
    \end{equation*}
\end{fact}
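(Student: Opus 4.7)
The plan is to expand $g$ in its $\mu$-biased Fourier basis, push the expectation inside by linearity, and then evaluate the single-coordinate expectations $\Ex_{\by_i \sim \pi_{\nu_i}}[\phi_\mu(\by_i)]$ using the independence of coordinates under the product measure $\pi_\nu$.

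First I would write
\[
g(\by) = \sum_{S \subseteq [k]} \hat{g}_\mu(S) \prod_{i \in S}\phi_\mu(\by_i),
\]
which is valid for any $g: \bits^k \to \R$ since the $\mu$-biased characters form a basis. Taking $\Ex_{\by \sim \pi_\nu}$ of both sides and applying linearity of expectation, the right-hand side becomes
\[
\sum_{S \subseteq [k]} \hat{g}_\mu(S) \cdot \Ex_{\by \sim \pi_\nu}\!\left[\prod_{i \in S}\phi_\mu(\by_i)\right].
\]
Since $\pi_\nu = \pi_{\nu_1} \times \cdots \times \pi_{\nu_k}$ is a product distribution, the coordinates $\by_1,\ldots,\by_k$ are independent, so the expectation of the product is the product of expectations, reducing the problem to the single-coordinate computation $\Ex_{\by_i \sim \pi_{\nu_i}}[\phi_\mu(\by_i)]$.

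The key single-coordinate calculation is then immediate from the definition $\phi_\mu(x) = (x-\mu)/\sigma$ with $\sigma = \sqrt{1-\mu^2}$: because $\Ex_{\by_i \sim \pi_{\nu_i}}[\by_i] = \nu_i$, linearity gives
\[
\Ex_{\by_i \sim \pi_{\nu_i}}[\phi_\mu(\by_i)] = \frac{\nu_i - \mu}{\sigma} = \phi_\mu(\nu_i).
\]
Substituting this back yields $\prod_{i \in S} \phi_\mu(\nu_i)$ inside the sum, which is exactly the claimed identity.

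There is no real obstacle here; the statement is essentially a direct computation once one observes that $\phi_\mu$, viewed as a function of a single bit, has the convenient property that its expectation under an arbitrary biased single-bit distribution $\pi_{\nu_i}$ equals $\phi_\mu$ evaluated at the bias $\nu_i$. The only mild subtlety worth flagging in the write-up is that $\phi_\mu$ is defined on $\bits$ but we are applying it to $\nu_i \in [-1,1]$ on the right-hand side via the same affine formula; this is legitimate because $\phi_\mu$ is affine and therefore commutes with taking expectations of $\bits$-valued random variables whose mean lies in $[-1,1]$.
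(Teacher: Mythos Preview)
Your proposal is correct and follows essentially the same approach as the paper: expand $g$ in the $\mu$-biased Fourier basis, apply linearity of expectation, factor using independence of the coordinates under $\pi_\nu$, and reduce to the single-coordinate identity $\Ex_{\by_i\sim\pi_{\nu_i}}[\phi_\mu(\by_i)]=\phi_\mu(\nu_i)$ via the affine definition of $\phi_\mu$.
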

\begin{proof}
    We expand $g$ into it's Fourier decomposition
    \begin{align*}
        \Ex[g(\by)] &= \sum_{S \subseteq [k]} \hat{g}_\mu(S) \Ex\bracket*{\prod_{i \in S}\phi_\mu(\by_i)} \tag{Linearity of expectation} \\
        &= \sum_{S \subseteq [k]} \hat{g}_\mu(S)\prod_{i \in S} \Ex\bracket*{\phi_\mu(\by_i)} \tag{$\by_1, \ldots, \by_k$ are independent} \\
        &=  \sum_{S \subseteq [k]} \hat{g}_\mu(S)\prod_{i \in S} \Ex\bracket*{\frac{\by_i - \mu}{\sigma}} \tag{Definition of $\phi_\mu$}\\
        &=  \sum_{S \subseteq [k]} \hat{g}_\mu(S)\prod_{i \in S} \phi_\mu(\nu_i). \tag{Linearity of expectation}
    \end{align*}
\end{proof}

\section{A strong composition theorem for juntas}

In this section, we characterize the junta size required to approximate $g \circ f$ in terms of the \emph{multivariate noise stability} of $g$, and the junta size required to approximate $f$.
\begin{theorem}[A strong composition theorem for junta complexity, generalization of \Cref{thm:composition theorem intro 1}]
    \label{thm:strong-formal}
    For any $g: \bits^k \to \bits$, $f: \bits^n \to \bits$ and base distribution $\mcD$ over $\bits^n$, let $\mu = \Ex_{\bx\sim \mcD}[f(\bx)]$.
    \begin{enumerate}
        \item \textbf{Lower bound on advantage:} For any approximators $q^{(1)}, \ldots, q^{(k)}: \bits^n \to \bits$, define the \emph{lower normalized correlations}, for each $i \in [k]$ as
        \begin{equation*}
            \alpha_i \coloneqq \max\paren*{0, \frac{\adv_{\mcD}(f, q^{(i)})^2 - \mu^2}{1 - \mu^2}}.
        \end{equation*}
        Then, there is an $h:\bits^k \to \bits$ for which
        \begin{equation*}
            \adv_{\mcD^k}(g\circ f, h (q^{(1)}, \ldots, q^{(k)})) \geq \stab_{\mu, \alpha}(g).
        \end{equation*}
        \item \textbf{Upper bound on advantage:} For any $S_1,\ldots, S_k$, define the \emph{upper normalized correlation} as
        \begin{equation*}
            \beta_i \coloneqq \max\paren*{0,\frac{\adv_\mcD(f, S_i) - \mu^2}{1 - \mu^2}},
        \end{equation*}
        construct $S \subseteq [n] \times [k]$ by taking $S_1$ from the first block, $S_2$ from the second block, and so on (formally $S \coloneqq \cup_{i \in [k], j \in S_i}\{(j,i)\}$). Then, 
        \begin{equation*}
            \adv_{\mcD^k}(g\circ f, S) \leq \sqrt{\stab_{\mu, \beta}(g)}.
        \end{equation*}
    \end{enumerate}
\end{theorem}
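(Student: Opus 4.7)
My plan is to handle both parts via the Fourier expansion of $g$ under the $\mu$-biased distribution, combined with the block-independence of $\mcD^k$. The unifying observation is that in both parts the natural object is a conditional-expectation function whose Fourier spectrum can be computed exactly by \Cref{fact:fourier-sample}, after which the argument reduces to an $L^1$–$L^2$ comparison on this function.

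\textbf{Part 1 (Lower bound).} Given $q^{(1)}, \ldots, q^{(k)}$, I would define $h^\star(z) \coloneqq \Ex[(g \circ f)(\bx) \mid q^{(i)}(\bx^{(i)}) = z_i \text{ for all } i]$ and take $h \coloneqq \sign \circ h^\star$. Since $|h^\star| \le 1$, the advantage of the composed approximator $h(q^{(1)}, \ldots, q^{(k)})$ with respect to $g\circ f$ equals $\Ex[|h^\star|]$, which dominates $\Ex[(h^\star)^2]$; it thus suffices to lower bound the second moment. By block independence, conditional on $\{q^{(i)}(\bx^{(i)}) = z_i\}_i$ the values $f(\bx^{(i)})$ are independent $\pm 1$-valued with biases $\mu_i(z_i) \coloneqq \Ex[f \mid q^{(i)} = z_i]$, so \Cref{fact:fourier-sample} yields $h^\star(z) = \sum_{T \subseteq [k]} \hat{g}_\mu(T) \prod_{i \in T}\phi_\mu(\mu_i(z_i))$. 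Writing $\mu'_i \coloneqq \Ex[q^{(i)}]$ and $\rho_i \coloneqq \Ex[\phi_\mu(f)\,\phi_{\mu'_i}(q^{(i)})]$, a short computation (using that $\phi_\mu(\mu_i(\cdot))$ is a mean-zero function on $\bits$) gives $\phi_\mu(\mu_i(z)) = \rho_i\,\phi_{\mu'_i}(z)$, and in particular $\Ex[\phi_\mu(\mu_i(q^{(i)}(\bx^{(i)})))^2] = \rho_i^2$. Expanding $(h^\star)^2$ and invoking block independence to kill the cross terms gives $\Ex[(h^\star)^2] = \sum_T \hat{g}_\mu(T)^2 \prod_{i \in T} \rho_i^2 = \stab_{\mu,(\rho_1^2,\ldots,\rho_k^2)}(g)$. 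Finally, the algebraic identity $(\adv_\mcD(f,q^{(i)}) - \mu\mu'_i)^2 - (\adv_\mcD(f,q^{(i)})^2 - \mu^2)(1-{\mu'_i}^2) = (\adv_\mcD(f,q^{(i)})\mu'_i - \mu)^2 \ge 0$ shows $\alpha_i \le \rho_i^2$, and \Cref{cor:multi-monotone} closes the argument.

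\textbf{Part 2 (Upper bound).} For any $S$-junta $h$ one has $\adv_{\mcD^k}(g\circ f, h) = \Ex[h(\bx_S)\, G(\bx_S)]$ where $G(\bx_S) \coloneqq \Ex[(g\circ f)(\bx) \mid \bx_S]$, maximized over $\pm 1$-valued $h$ at $\Ex[|G(\bx_S)|]$. Block independence and \Cref{fact:fourier-sample} again give $G(\bx_S) = \sum_{T \subseteq [k]} \hat{g}_\mu(T) \prod_{i \in T} \phi_\mu(\mu_i(\bx^{(i)}_{S_i}))$, where now $\mu_i(\vec{z}) \coloneqq \Ex[f \mid \bx^{(i)}_{S_i} = \vec{z}]$. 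Cauchy--Schwarz gives $\Ex[|G|]^2 \le \Ex[G^2]$; expanding $G^2$ the cross terms vanish since $\Ex[\phi_\mu(\mu_i(\bx^{(i)}_{S_i}))] = 0$, leaving $\Ex[G^2] = \stab_{\mu,(v_1/\sigma^2,\ldots,v_k/\sigma^2)}(g)$ with $v_i \coloneqq \Ex[\mu_i^2] - \mu^2$ and $\sigma^2 \coloneqq 1-\mu^2$. Because $\mu_i$ is a conditional mean of a $\pm 1$-valued random variable, $|\mu_i| \le 1$, hence $\mu_i^2 \le |\mu_i|$ pointwise, so $\Ex[\mu_i^2] \le \Ex[|\mu_i|] = \adv_\mcD(f, S_i)$ and therefore $v_i/\sigma^2 \le \beta_i$. \Cref{cor:multi-monotone} now yields $\Ex[G^2] \le \stab_{\mu,\beta}(g)$, and hence $\adv_{\mcD^k}(g\circ f, S) \le \sqrt{\stab_{\mu,\beta}(g)}$.

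\textbf{Main obstacle.} The Fourier calculations are essentially forced once one commits to the conditional-expectation viewpoint; the real subtlety is reconciling the $\adv^2$-versus-$\adv$ asymmetry between the two parts. The lower bound naturally produces factors of $\rho_i^2$, yet the theorem states $\alpha_i$ purely in terms of $\adv(f,q^{(i)})^2$ and $\mu$, with no reference to $\mu'_i$; verifying $\alpha_i \le \rho_i^2$ across all biases of $q^{(i)}$ requires the algebraic identity above. Symmetrically, the upper bound naturally produces $v_i/\sigma^2$, and showing that this is dominated by $\beta_i$ (which uses the raw $\adv$, not $\adv^2$) relies crucially on $|\mu_i| \le 1$, a property that is specific to $\pm 1$-valued $f$. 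Lining up both asymmetric inequalities so that the stated $\alpha$- and $\beta$-bounds emerge cleanly, rather than the natural-but-messier intermediate quantities $\rho_i^2$ and $v_i/\sigma^2$, is the principal place where care is needed.
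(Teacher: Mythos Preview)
Your proposal is correct and follows essentially the same route as the paper: both parts proceed by passing to the conditional-expectation function, applying \Cref{fact:fourier-sample} to expand it in the $\mu$-biased Fourier basis of $g$, using block independence to kill cross terms so that the second moment becomes a multivariate noise stability, and then comparing $L^1$ to $L^2$; the paper packages the second-moment identity as \Cref{lem:avg-to-stab} and the coordinatewise comparisons as \Cref{prop:adv-to-square}, but the content is the same. The one place you deviate is in Part~1: you compute $\Ex[\phi_\mu(\mu_i(q^{(i)}))^2]$ \emph{exactly} as $\rho_i^2$ via the observation that $\phi_\mu(\mu_i(\cdot))$, being mean-zero on $\bits$ under $\pi_{\mu'_i}$, must equal $\rho_i\,\phi_{\mu'_i}(\cdot)$, and then verify $\alpha_i \le \rho_i^2$ through the algebraic identity $(a_i\mu'_i - \mu)^2 \ge 0$; the paper instead chains two inequalities, bounding $\Ex[\phi_\mu(\nu_i)^2] \ge (\Ex[|\nu_i|]^2-\mu^2)/(1-\mu^2)$ via Jensen and then $\Ex[|\nu_i|] \ge |\adv(f,q^{(i)})|$ via $|\nu_i| \ge \pm y_i \nu_i$. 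Your exact computation is arguably cleaner and makes the role of the bias $\mu'_i$ of $q^{(i)}$ explicit, while the paper's bounding argument has the advantage of extending verbatim to the more general setting of \Cref{lem:unbal-best-h} where $\by_i$ need not be $\pm1$-valued with a well-defined $\phi_{\mu'_i}$.
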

Our goal is to understand the error of the best $R$-junta approximating $g \circ f$. \Cref{thm:strong-formal} says that for \emph{any} way to partition $R = r_1 + \cdots r_k$, the approximator $h (\tilde{f}_{r_1}, \ldots, \tilde{f}_{r_k})$ achieves nearly optimal advantage across all $R$-juntas that partition their budget this way. Of course, by maximizing both sides across all partitions, we can conclude that there is some partitioning and function $h$ for which $h (\tilde{f}_{r_1}, \ldots, \tilde{f}_{r_k})$ has nearly optimal advantage among all $R$-juntas. Indeed, as a simple corollary of \Cref{thm:strong-formal}, we can show that the \emph{error} of the optimal canonical composed form approximator is within a factor of $4$ of the optimal approximator. Recall that $\error_{\mcD}(q_1,q_2) = \Pr_{\bx \sim \mcD}[q_1(\bx) \neq q_2(\bx)]$ and is related to advantage via the equality $\adv = 1 - 2\cdot \error$.
\begin{corollary}[Canonical composed form achieves nearly optimal error]
    \label{cor:error-4}
    For any $g: \bits^k \to \bits, f:\bits^n \to \bits$, junta budget $R$, and base distribution $\mcD$, there is an $h:\bits^n \to \bits$ and partition of the budget $r_1 + \cdots + r_k = R$ for which,.
    \begin{equation*}
        \error_{\mcD^k}(g\circ f, h (\tilde{f}_{r_1}, \ldots, \tilde{f}_{r_k})) \leq 4 \cdot \error_{\mcD^k}(g\circ f, R).
    \end{equation*}
\end{corollary}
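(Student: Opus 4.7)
The plan is to prove $1 - A_c \leq 4(1 - A^\ast)$, where $A_c$ and $A^\ast$ denote the advantages of the canonical-form approximator and the optimal $R$-junta respectively; dividing by $2$ gives the target error bound. The factor of $4$ will decompose as $2 \cdot 2$: one factor from a new \emph{linear} comparison between the two $\stab$ quantities appearing in \Cref{thm:strong-formal}, and the other from the elementary $1 - t^2 \leq 2(1 - t)$ absorbing the square root on the upper-bound side.

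The first move is to apply both halves of \Cref{thm:strong-formal} against the partition realized by the optimal $R$-junta. Let $(S_1^\ast, \ldots, S_k^\ast)$ be its block-wise supports and set $r_i^\ast \coloneqq |S_i^\ast|$, $y_i \coloneqq \adv_{\mcD}(f, S_i^\ast)$, and $x_i \coloneqq \adv_{\mcD}(f, r_i^\ast) \geq y_i$ (where the inequality uses the optimality of $\tilde{f}_{r_i^\ast}$). Part 2 of the theorem gives $A^\ast \leq \sqrt{\stab_{\mu, \beta}(g)}$ with $\beta_i = \max(0, (y_i - \mu^2)/(1 - \mu^2))$. Part 1, applied to the canonical form with this partition and $q^{(i)} = \tilde{f}_{r_i^\ast}$, produces an $h$ with $A_c \geq \stab_{\mu, \alpha}(g)$ where $\alpha_i = \max(0, (x_i^2 - \mu^2)/(1 - \mu^2))$. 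The technical crux is then
\[ \stab_{\mu, \alpha}(g) \;\geq\; 2\,\stab_{\mu, \beta}(g) - 1. \]
Expanding both sides through \Cref{prop:stab-fourier} and using Parseval's identity $\sum_T \hat{g}_\mu(T)^2 = 1$, this reduces to showing $\prod_{i \in T}\alpha_i - 2\prod_{i \in T}\beta_i + 1 \geq 0$ for every $T \subseteq [k]$. The pointwise base case $\alpha_i \geq 2\beta_i - 1$ clears the $(1-\mu^2)$ denominator and becomes $x_i^2 \geq 2 y_i - 1$, which follows from $x_i \geq y_i$ together with the trivial $y_i^2 \geq 2y_i - 1$. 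An induction on $|T|$, whose inductive step uses $2\beta_i \leq 1 + \alpha_i$ and collapses cleanly to $(1 - \prod_{j \in T \setminus \{i\}}\beta_j)(1 - \alpha_i) \geq 0$, extends the inequality to all $T$.

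Chaining everything gives
\[ 1 - A_c \;\leq\; 1 - \stab_{\mu, \alpha}(g) \;\leq\; 2\bigl(1 - \stab_{\mu, \beta}(g)\bigr) \;\leq\; 4\bigl(1 - \sqrt{\stab_{\mu, \beta}(g)}\bigr) \;\leq\; 4(1 - A^\ast), \]
as desired. The main obstacle is identifying the correct linear bridge: the tempting first guess $\stab_{\mu, \alpha}(g) \geq \stab_{\mu, \beta}(g)^2$ (which would cleanly give $A_c \geq (A^\ast)^4$) actually \emph{fails} for $\mu \neq 0$---one can verify a strict violation with $g = \mathrm{AND}$ and $\mu$ close to $1$---because the coordinate-wise inequality $\alpha_i \geq \beta_i^2$ is valid only at $\mu = 0$. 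Making the bridge linear rather than quadratic in $\stab_{\mu, \beta}(g)$ is what renders it compatible with the Parseval normalization $\sum_T \hat{g}_\mu(T)^2 = 1$, and hence provable by a term-by-term Fourier argument.
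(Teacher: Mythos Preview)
Your proof is correct and follows essentially the same route as the paper: choose the block partition realized by the optimal $R$-junta, sandwich the optimal and canonical advantages between $\stab_{\mu,\alpha}(g)$ and $\sqrt{\stab_{\mu,\beta}(g)}$ via the two halves of \Cref{thm:strong-formal}, and then chain $1-\stab_{\mu,\alpha}(g)\le 2(1-\stab_{\mu,\beta}(g))\le 4(1-\sqrt{\stab_{\mu,\beta}(g)})$. The only substantive difference is in how the product inequality ``$1-\alpha_i\le 2(1-\beta_i)$ for all $i$ implies $1-\prod_{i\in T}\alpha_i\le 2(1-\prod_{i\in T}\beta_i)$'' is established: the paper (\Cref{prop:prod-iq}) interpolates via $q(x)=1-\prod_i(1-x(1-\alpha_i))$ and invokes concavity, whereas you run an induction on $|T|$ that, using both the pointwise bound $2\beta_j\le 1+\alpha_j$ and the inductive hypothesis $A\ge 2B-1$, telescopes to $(1-\alpha_j)(1-B)\ge 0$. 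Both arguments are short and equally clean; your inductive proof is perhaps more elementary. A cosmetic difference is that the paper first relaxes $\adv_\mcD(f,S_i^\ast)$ to $\adv_\mcD(f,r_i^\ast)$ before defining $\beta$ (so that $x_i=y_i$), while you keep them distinct and use $x_i\ge y_i$ in the pointwise step; either way the needed inequality $x_i^2\ge 2y_i-1$ holds.
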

When $\mu = 0$, the guarantee of \Cref{thm:strong-formal} can further be given in the concise form of \Cref{thm:composition theorem intro 1}: For an appropriately chosen $\vrho \in [0,1]^k$,
\begin{align*}
    \Stab_{\vec{\rho}}(g)^2 &\le  \textnormal{Advantage of optimal canonical composed form approximator} \\
    &\le \textnormal{Advantage of optimal approximator} \le \sqrt{\Stab_{\vec{\rho}}(g)}.
\end{align*}
We include the proofs of \Cref{cor:error-4} and \Cref{thm:composition theorem intro 1} in \Cref{subsec:corollary-proofs}.

\subsection{Proof of the lower bound on advantage} 
\label{sec:proof of lower bound on advantage} 

In this subsection, we show that  $(x_1, \ldots, x_k) \to h(\tilde{f}_{r_1}(x_1), \ldots, \tilde{f}_{r_k}(x_k))$ is close to the best $R$-junta approximator for $g \circ f$. Here, the function $h$ can be different than $g$, and this is necessary as shown in the counterexample to conjecture 2 in \Cref{subsec:counterexamples}.

\begin{lemma}[Part 1 of \Cref{thm:strong-formal}: Lower bound on advantage]
    \label{lem:upper-junta}
    For any $g:\bits^k \to \bits$, $f:\bits^n \to \bits$, and approximators $q^{(1)}, \ldots, q^{(k)}$, there is some $h:\bits^k \to \bits$ for which
    \begin{equation*}
        \adv_{\mcD^k}(g\circ f, h \circ (q^{(1)}, \ldots, q^{(k)})) \geq \stab_{\mu, \alpha}(g),
    \end{equation*}
    where $\mu = \Ex_{\bx \sim \mcD}[f(\bx)]$ and for each $i \in [k]$,
    \begin{equation*}
        \alpha_i \coloneqq \max\paren*{0, \frac{\adv(f, q^{(i)})^2 - \mu^2}{1 - \mu^2}}.
    \end{equation*}
\end{lemma}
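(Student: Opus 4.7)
The plan is to take $h$ to be the optimal combining function for the fixed approximators $q^{(1)}, \ldots, q^{(k)}$. Set
$$m(y) \coloneqq \Ex_{\bx \sim \mcD^k}\!\bracket*{(g\circ f)(\bx) \,\big|\, q^{(i)}(\bx^{(i)}) = y_i \text{ for all } i \in [k]}$$
and $h(y) \coloneqq \sign(m(y))$. Writing $\by_i \coloneqq q^{(i)}(\bx^{(i)})$ for $\bx \sim \mcD^k$, the advantage of $h(q^{(1)},\ldots,q^{(k)})$ is exactly $\Ex_{\by}[|m(\by)|]$. Since $g \circ f$ is $\bits$-valued we have $|m(\by)| \leq 1$ and hence $|m(\by)| \geq m(\by)^2$, so it suffices to establish $\Ex_{\by}[m(\by)^2] \geq \stab_{\mu, \alpha}(g)$. (This pointwise relaxation is precisely where the quartic gap appearing in \Cref{thm:composition theorem intro 1} originates.)

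Next I would exploit the block structure of the conditioning. Because $q^{(i)}(\bx^{(i)}) = y_i$ factors over $i$, conditioned on $\by$ the blocks $\bx^{(1)}, \ldots, \bx^{(k)}$ remain mutually independent, and the conditional mean of $f(\bx^{(i)})$ is $\nu_i(\by_i) \coloneqq \Ex[f(\bx^{(i)}) \mid q^{(i)}(\bx^{(i)}) = \by_i]$. Applying \Cref{fact:fourier-sample} to the conditional product distribution on $\bits^k$ with mean vector $(\nu_1(\by_1), \ldots, \nu_k(\by_k))$,
$$m(\by) = \sum_{S \subseteq [k]} \hat{g}_\mu(S) \prod_{i \in S} \phi_\mu(\nu_i(\by_i)).$$
Squaring and taking expectation over $\by$, the $\by_i$'s are mutually independent (as functions of disjoint blocks), so cross terms $S \neq T$ vanish: any $i \in S \triangle T$ contributes a factor $\Ex[\phi_\mu(\nu_i(\by_i))] = \sigma^{-1}(\Ex[\nu_i(\by_i)] - \mu) = 0$ by the tower law. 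Therefore
$$\Ex_{\by}\bracket*{m(\by)^2} = \sum_{S \subseteq [k]} \hat{g}_\mu(S)^2 \prod_{i \in S} \Ex\bracket*{\phi_\mu(\nu_i(\by_i))^2}.$$

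Finally I would lower bound each factor $\Ex[\phi_\mu(\nu_i(\by_i))^2] = \sigma^{-2}\Var(\nu_i(\by_i))$ by $\alpha_i$ via Cauchy--Schwarz. By the tower law, $\adv(f, q^{(i)}) = \Ex_{\by_i}[\nu_i(\by_i)\cdot \by_i]$, and since $\by_i \in \bits$,
$$\adv(f, q^{(i)})^2 \leq \Ex_{\by_i}\bracket*{\nu_i(\by_i)^2}\cdot \Ex_{\by_i}\bracket*{\by_i^2} = \Var(\nu_i(\by_i)) + \mu^2.$$
Hence $\sigma^{-2}\Var(\nu_i(\by_i)) \geq (\adv(f,q^{(i)})^2 - \mu^2)/(1-\mu^2)$, and since the left side is always nonnegative, it is at least $\alpha_i$. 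Plugging back and invoking \Cref{prop:stab-fourier} gives $\Ex_{\by}[m(\by)^2] \geq \sum_S \hat{g}_\mu(S)^2 \alpha^S = \stab_{\mu, \alpha}(g)$, completing the proof.

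The main obstacle is the middle step: identifying the correct conditional product distribution, writing $m(\by)$ through \Cref{fact:fourier-sample}, and verifying that independence of the $\by_i$'s kills all off-diagonal Fourier contributions. Once that reduction to a diagonal sum is in hand, Cauchy--Schwarz cleanly converts the per-coordinate factor into the advantage of $q^{(i)}$, and the rest is bookkeeping against the definition of $\alpha_i$ and \Cref{prop:stab-fourier}.
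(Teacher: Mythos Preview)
Your proposal is correct and follows essentially the same route as the paper: choose $h = \sign(m)$, relax $|m|$ to $m^2$, expand $m(\by)$ via \Cref{fact:fourier-sample} to obtain a diagonal Fourier sum (this is exactly the content of the paper's \Cref{lem:avg-to-stab}), and then lower-bound each coordinate factor $\Ex[\phi_\mu(\nu_i(\by_i))^2]$ by $\alpha_i$. The only cosmetic difference is in this last step: the paper passes through the intermediate inequality $\Ex[\bnu^2] \ge \Ex[|\bnu|]^2 \ge \Ex[\by_i\,\bnu]^2$ (Jensen plus $|x| \ge cx$ for $c\in\bits$, packaged as \Cref{prop:adv-to-square}), whereas you obtain $\Ex[\bnu^2] \ge \Ex[\by_i\,\bnu]^2$ in one shot via Cauchy--Schwarz with $\Ex[\by_i^2]=1$; the two are equivalent here.
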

 
Note $\alpha_i$ naturally interpolates between $0$ and $1$. Setting $q^{(i)}$ to the better of the constant $-1$ or the constant $+1$ function will lead to $\alpha_i = 0$, while setting $q^{(i)} = f$ gives $\alpha_i = 1$.

\subsubsection{Characterizing the advantage of composed form approximators} 

To ease notation, we begin with a simpler setting. Suppose we use the same budget, $r \coloneqq R/k$, in each of the $k$ pieces. Our goal is to understand
\begin{equation*}
    \max_{h:\bits^k \to \bits} \adv(g\circ f, h\circ \tilde{f}_{r})
\end{equation*}
in terms of the noise sensitivity of $g$ and $\adv(f, \tilde{f}_r)$. To do so, we will consider \emph{unbalanced} noise stability.
\begin{definition}[Unbalanced noise stability]
    For any $x \in \bits^k$, we use the notation $\by \absim x$ to denote that for each $i \in [k]$, $\by_i$ is independently drawn as
    \begin{enumerate}
        \item If $x_i = -1$, with probability $a$, we set $\by_i = x_i$ and otherwise set $\by_i = -x_i$
        \item If $x_i = 1$, with probability $b$, we set $\by_i = x_i$ and otherwise set $\by_i = -x_i$.
    \end{enumerate}
    For any $g,h:\bits^k \to \bits$, $\mu \in [-1,1]$ and $a,b \in [0,1]$, we define the \emph{unbalanced} noise stability as 
    \begin{equation*}
        \unstab_{\mu, (a,b)}(g,h) = \Ex_{\bx \sim (\pi_\mu)^k, \by \absim \bx}[g(\bx)h(\by)].
    \end{equation*}
\end{definition}
We refer to the above notion as \emph{unbalanced} because when drawing $\by \absim x$, the probability of the $i^{\text{th}}$ coordinate flipping from $-1$ to $1$ and from $1$ to $-1$ may differ. Unbalanced noise stability is useful in our setting due to the following proposition.
\begin{proposition}
    For any $f, \tilde{f}: \bits^n \to \bits$ and $g,h:\bits^k \to \bits$,
    \begin{equation*}
        \Ex_{\bx \sim \mcD^k}[(g \circ f)(\bx) \cdot (h \circ \tilde{f})(\bx)] = \unstab_{\mu, (a,b)}(g,h),
    \end{equation*}
    where
    \begin{equation*}
        \mu \coloneqq \Ex_{\bx \sim \mcD}[f(\bx)],\quad\quad
        a \coloneqq \Prx_{\bx \sim \mcD}[\tilde{f}(\bx) = -1 \mid f(\bx) = -1],\quad\quad
        b \coloneqq \Prx_{\bx \sim \mcD}[\tilde{f}(\bx) = 1 \mid f(\bx) = 1].
    \end{equation*}
\end{proposition}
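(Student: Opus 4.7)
The plan is to show that both sides of the claimed identity are expectations of $g(\cdot) h(\cdot)$ under the \emph{same} joint distribution on $\bits^k \times \bits^k$. Concretely, define for each $i \in [k]$ the random variables
\begin{equation*}
\boldsymbol{u}_i \coloneqq f(\bx^{(i)}) \quad\text{and}\quad \boldsymbol{v}_i \coloneqq \tilde{f}(\bx^{(i)}),
\end{equation*}
so the left-hand side equals $\Ex[g(\boldsymbol{u}) \, h(\boldsymbol{v})]$. By the definition of unbalanced noise stability, the right-hand side equals $\Ex[g(\bx) \, h(\by)]$ where $\bx \sim (\pi_\mu)^k$ and $\by \absim \bx$. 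It therefore suffices to prove that $(\boldsymbol{u}, \boldsymbol{v})$ and $(\bx, \by)$ have the same joint distribution on $\bits^k \times \bits^k$.

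The next step is to reduce to a coordinate-wise comparison. Since $\bx^{(1)}, \ldots, \bx^{(k)}$ are i.i.d.\ draws from $\mcD$ and $(\boldsymbol{u}_i, \boldsymbol{v}_i)$ is a function of $\bx^{(i)}$ alone, the $k$ pairs $(\boldsymbol{u}_i, \boldsymbol{v}_i)$ are mutually independent; the pairs $(\bx_i, \by_i)$ are independent by construction in the definition of $\absim$. So it is enough to check that $(\boldsymbol{u}_i, \boldsymbol{v}_i)$ and $(\bx_i, \by_i)$ have the same law in a single coordinate $i$.

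For the marginal, $\Ex[\boldsymbol{u}_i] = \Ex_{\bx \sim \mcD}[f(\bx)] = \mu$, and since $\boldsymbol{u}_i$ is $\bits$-valued, this pins down its law as $\pi_\mu$, matching the marginal of $\bx_i$. For the conditional, the definition of $\by_i \absim \bx_i$ gives $\Pr[\by_i = -1 \mid \bx_i = -1] = a$ and $\Pr[\by_i = 1 \mid \bx_i = 1] = b$, and these are exactly the values $a = \Pr[\tilde{f}(\bx) = -1 \mid f(\bx) = -1]$ and $b = \Pr[\tilde{f}(\bx) = 1 \mid f(\bx) = 1]$ defining the conditional law of $\boldsymbol{v}_i$ given $\boldsymbol{u}_i$. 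Hence the conditionals agree, so the joint distributions agree, and the proposition follows.

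There is no real technical obstacle here; the only thing to watch out for is matching the sign conventions in the $\absim$ definition (where $a$ governs what happens on the $-1$ branch and $b$ on the $+1$ branch) with the definitions of $a,b$ in the proposition statement. Once this correspondence is set up, the proposition is essentially a bookkeeping identity asserting that unbalanced noise stability is precisely the right abstraction for expressing $\Ex_{\bx \sim \mcD^k}[(g \circ f)(\bx) \cdot (h \circ \tilde{f})(\bx)]$.
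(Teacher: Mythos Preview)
Your proof is correct and follows essentially the same approach as the paper: both argue that the joint distribution of $(f^{\otimes k}(\bx),\tilde{f}^{\otimes k}(\bx))$ coincides with that of $(\by,\bz)$ where $\by\sim(\pi_\mu)^k$ and $\bz\absim\by$. The paper's proof states this distributional equivalence in a single sentence without justification, whereas you have spelled out the independence and coordinatewise marginal/conditional checks explicitly.
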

\begin{proof}
    Draw $\bx \sim\mcD^k$ and then define $\by \coloneqq f^{\otimes k}(\bx), \tilde{\by} \coloneqq \tilde{f}^{\otimes k}(\bx)$. Clearly,
    \begin{equation*}
        \Ex_{\bx \sim \mcD^k}[(g \circ f)(\bx) \cdot (h \circ \tilde{f})(\bx)] = \Ex[g(\by) h(\tilde{\by})].
    \end{equation*}
    Furthermore, the distribution of $\by, \tilde{\by}$ is equivalent to if we drew $\by \sim (\pi_\mu)^k, \tilde{\by} \absim \by$. The above quantity therefore matches the definition of $\unstab_{\mu, (a,b)}(g,h)$.
\end{proof}

\subsubsection{Unbalanced noise stability behaves strangely}

The most basic requirement of our approximation for $g \circ f$ is that it have advantage at least $0$, as either the constant $-1$ or the constant $+1$ function is guaranteed to have such an advantage. Indeed, in the balanced case, it is well known that the approximation will satisfy this basic requirement even if we take $h = g$. 
\begin{fact}
    For any $g:\bits^k \to \bits$ and $a \in [0,1/2]$,
    \begin{equation*}
        \unstab_{0, (a,a)}(g,g) \geq 0.
    \end{equation*}
\end{fact}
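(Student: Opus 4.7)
The plan is to reduce the claim to a Fourier-analytic identity over the uniform measure on $\bits^k$. When $b=a$, the draw $\by \absim \bx$ admits the product representation $\by_i = \bx_i \cdot \boldsymbol{\epsilon}_i$, where $\boldsymbol{\epsilon}_1,\ldots,\boldsymbol{\epsilon}_k$ are independent $\pm 1$ signs with $\Pr[\boldsymbol{\epsilon}_i = 1] = a$, so $\Ex[\boldsymbol{\epsilon}_i] = 2a-1$. Because the flip probability is the same in both sign classes, $\by$ is marginally uniform, so the uniform-measure Fourier basis is the natural setting.

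Expanding $g(x) = \sum_S \hat{g}_0(S)\chi_S(x)$ and using the character identity $\chi_S(\bx\boldsymbol{\epsilon}) = \chi_S(\bx)\chi_S(\boldsymbol{\epsilon})$, I would compute
\[
 g(\bx)g(\by) \;=\; \sum_{S,T \subseteq [k]} \hat{g}_0(S)\hat{g}_0(T)\,\chi_{S \triangle T}(\bx)\,\chi_T(\boldsymbol{\epsilon}).
\]
Taking expectations, orthogonality under uniform $\bx$ kills the off-diagonal terms, and independence of the $\boldsymbol{\epsilon}_i$'s gives $\Ex[\chi_S(\boldsymbol{\epsilon})] = (2a-1)^{|S|}$, yielding the closed form
\[
 \unstab_{0,(a,a)}(g,g) \;=\; \sum_{S \subseteq [k]} \hat{g}_0(S)^2\,(2a-1)^{|S|}.
\]

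The main obstacle is establishing nonnegativity of this sum in the stated range $a\in[0,1/2]$, where $2a-1 \in [-1,0]$ makes the summands alternate in sign with $|S|$: the even-$|S|$ terms contribute nonnegatively while the odd-$|S|$ terms contribute nonpositively, so a termwise bound cannot close the proof. To bypass this, I would use a two-point coupling tailored to the regime: for $a\le 1/2$, realize $\by$ as the mixture $\by = -\bx$ with probability $1-2a$ and $\by$ an independent uniform draw with probability $2a$ (one checks coordinatewise that this reproduces the $(a,a)$-marginals). Under this coupling,
\[
 \unstab_{0,(a,a)}(g,g) \;=\; (1-2a)\,\Ex_{\bx}[g(\bx)g(-\bx)] \;+\; 2a\,\Ex[g]^2,
\]
which isolates the entire obstruction into the single term $\Ex[g(\bx)g(-\bx)] = \sum_S \hat{g}_0(S)^2(-1)^{|S|}$. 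The delicate concluding step, which I expect to be the crux, is to argue that the net of these two contributions is nonnegative for every $g:\bits^k\to\bits$; I would combine Plancherel (via $\sum_S \hat{g}_0(S)^2 = 1$, relating the odd- and even-level Fourier masses) with the slack afforded by the range bound $a\le 1/2$ to absorb the negative odd-level contribution against $2a\,\Ex[g]^2$ and the nonneg even-level portion.
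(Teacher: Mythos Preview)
Your Fourier identity
\[
\unstab_{0,(a,a)}(g,g) \;=\; \sum_{S\subseteq[k]} \hat g_0(S)^2\,(2a-1)^{|S|}
\]
is correct, and it is the entire intended argument (the paper states the Fact as ``well known'' and gives no proof). But everything after that point goes wrong, for two reasons.

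First, your coupling is invalid. Declaring ``$\by=-\bx$ with probability $1-2a$, else $\by$ a fresh uniform draw'' is a \emph{global} mixture, not a product measure, so it does not realize the coordinatewise-independent $(a,a)$-noise. You only checked single-coordinate marginals; the joint law differs. For instance, with $k=2$ your coupling gives $\Pr[\by_1=x_1,\by_2=x_2\mid\bx=x]=2a\cdot\tfrac14=a/2$, whereas the true $(a,a)$-noise gives $a^2$. Consequently the decomposition $(1-2a)\,\Ex[g(\bx)g(-\bx)]+2a\,\Ex[g]^2$ is not equal to $\unstab_{0,(a,a)}(g,g)$ in general (try $g(x)=x_1x_2$ at $a=1/4$: the true value is $1/4$, your expression gives $1/2$).

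Second, and more fundamentally, the ``delicate concluding step'' you leave open cannot be closed, because the inequality as printed with $a\in[0,1/2]$ is \emph{false}. Take $g(x)=x_1$: your own formula gives $\unstab_{0,(a,a)}(g,g)=2a-1$, which is negative for every $a<1/2$. The stated range is a typo; it should read $a\in[1/2,1]$ (this is the regime where $a,b$ are ``keep'' probabilities of a sensible approximator). With that correction, $2a-1\in[0,1]$ and every term in your Fourier sum is nonnegative, so the one-line identity you already proved finishes the argument. There is no obstacle to overcome and no need for a coupling.
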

However, in the unbalanced case, this basic requirement no longer holds.
\begin{proposition}
    \label{prop:unbal-weird}
    For any $k \geq 0$, and $a,b \in [0,1]$ for which $|a-b| \geq 0.01$, there is a function $g:\bits^k \to \bits$ for which
    \begin{equation*}
        \unstab_{0, (a,b)}(g,g) \leq -(1-2^{-\Omega(k)}).
    \end{equation*}
\end{proposition}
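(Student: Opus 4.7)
The proof will exhibit an explicit $g$---a threshold (``biased majority'') function---that witnesses the claimed bound. The guiding intuition is that the asymmetric noise process $\by \absim \bx$, when applied to a uniform $\bx$, produces a $\by$ whose marginal is noticeably biased whenever $a \ne b$, and a threshold function is perfectly suited to distinguish a balanced input from a biased one.

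First I would reduce to the case $b - a \geq 0.01$. Note that $\unstab_{0,(a,b)}(g,g)$ is unchanged under the substitution $g \mapsto \tilde g$ with $\tilde g(x) \coloneqq g(-x)$: in the joint $(-\bx, -\by)$, the marginal $-\bx$ is still uniform, and a case check shows $-\by$ is related to $-\bx$ by the $\absim$ process with the roles of $a$ and $b$ swapped. So WLOG $b - a \geq 0.01$.

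Second, since the noise is applied coordinate-wise independently, $\by$ itself is a product distribution whose $i$-th marginal satisfies $\Pr[\by_i = +1] = \tfrac{1}{2}(1-a) + \tfrac{1}{2}b = \tfrac{1 + (b-a)}{2}$, so $\by \sim (\pi_{b-a})^k$. In particular $\Ex[\sum_i \bx_i] = 0$ while $\Ex[\sum_i \by_i] = k(b-a) \geq 0.01\, k$, a gap that is linear in $k$. I would then take
\[
    g(x) \coloneqq \sign\!\Big(\textstyle\sum_{i \in [k]} x_i - \tfrac{k(b-a)}{2}\Big).
\]
Two applications of Hoeffding's inequality show that $\sum_i \bx_i$ lies below the threshold (so $g(\bx) = -1$) except with probability $2^{-\Omega(k)}$, and that $\sum_i \by_i$ lies above it (so $g(\by) = +1$) except with probability $2^{-\Omega(k)}$. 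A union bound gives $g(\bx) g(\by) = -1$ except with probability $2^{-\Omega(k)}$, which yields $\unstab_{0,(a,b)}(g,g) \leq -(1 - 2^{-\Omega(k)})$ as desired.

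The only non-routine step is the opening observation that the asymmetric noise shifts the marginal of each $\by_i$ away from that of $\bx_i$ by $b - a$, opening a linear-in-$k$ gap between the means of $\sum_i \bx_i$ and $\sum_i \by_i$; once this is noted the threshold function is essentially forced and the remaining concentration argument is standard, so I do not anticipate any serious obstacle.
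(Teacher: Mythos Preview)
Your proposal is correct and follows essentially the same argument as the paper: reduce to $b \ge a + 0.01$, take $g$ to be a threshold function separating the mean of $\sum_i \bx_i$ from that of $\sum_i \by_i$, and finish with two Hoeffding bounds plus a union bound. The only cosmetic difference is that the paper places the threshold at the fixed value $0.005k$ rather than your data-dependent $k(b-a)/2$, and does not explicitly compute the marginal of $\by$; both choices work equally well.
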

\begin{proof}
    Without loss of generality, we assume $b \geq a  + 0.01$. We define
    \begin{equation*}
        g(x) \coloneqq \begin{cases}
            1 &\text{if }\sum_{i \in [k]}x_i \geq 0.005k,\\
            -1&\text{otherwise.}
        \end{cases}
    \end{equation*}
    Draw $\bx \sim (\pi_\mu)^k, \by \absim \bx$. Then,
    \begin{equation*}
        \Ex\bracket*{\sum_{i \in [k]}\bx_i} = 0 ,\quad\quad\quad\quad \Ex\bracket*{\sum_{i \in [k]}\by_i} = k(b-a).
    \end{equation*}
    Furthermore, a standard application of Hoeffding's inequality implies that
    \begin{equation*}
        \Prx[g(\bx) = 1] \leq 2^{-\Omega(k)} ,\quad\quad\quad\quad \Prx[g(\by) = -1]\leq 2^{-\Omega(k)}.
    \end{equation*}
    By union bound, with probability at least $2^{-\Omega(k)}$, we have that both $g(\bx) = -1$ and $g(\by) = 1$. This implies the desired result.
\end{proof}

\subsubsection{Unbalanced noise stability behaves well if we use the best $h$}
Surprisingly, we show that if we use the best $h$, our approximation does meet this most basic requirement. Furthermore, we can relate it to the classical notion of \emph{balanced} noise stability.   The below Lemma directly implies \Cref{lem:upper-junta}.

\begin{lemma}
    \label{lem:unbal-best-h}
    For any $g:\bits^k \to \bits$ and distribution over $\bx, \by$ each in $\bits^k$ satisfying,
    \begin{enumerate}
        \item The pairs $(\bx_1, \by_1), \ldots, (\bx_k, \by_k)$ are independent of one another.
        \item The means satisfy $\Ex[\bx_1] = \cdots = \Ex[\bx_k] = \mu$.
    \end{enumerate}
    Define the correlations $\alpha_1, \ldots, \alpha_k$ as
    \begin{equation*}
        \alpha_i \coloneqq \max\paren*{0,\frac{\Ex[\bx_i \by_i]^2 - \mu^2}{1 - \mu^2}}.
    \end{equation*}
    Then, there is an $h:\bits^k \to \bits$ for which
    \begin{equation*}
        \Ex[g(\bx)h(\by)] \geq \stab_{\mu, \alpha}(g).
    \end{equation*}
\end{lemma}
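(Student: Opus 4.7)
The plan is to take $h$ to be the Bayes-optimal combining function, $h(y) \coloneqq \sign\!\bigl(\Ex[g(\bx) \mid \by = y]\bigr)$, with ties broken arbitrarily. For this choice, $\Ex[g(\bx) h(\by)] = \Ex\bigl[\lvert \Ex[g(\bx) \mid \by] \rvert\bigr]$, and since the inner conditional expectation lies in $[-1,1]$, I will apply the pointwise inequality $\lvert t \rvert \geq t^2$ to obtain
\[
\Ex[g(\bx) h(\by)] \;\geq\; \Ex\bigl[\Ex[g(\bx) \mid \by]^2\bigr] \;=\; \Ex[g(\bx) g(\bx')],
\]
where $\bx'$ is drawn conditionally i.i.d.\ with $\bx$ given $\by$. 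So it suffices to show $\Ex[g(\bx) g(\bx')] \geq \stab_{\mu, \alpha}(g)$.

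Next, I expand $g$ in its $\mu$-biased Fourier basis. The hypothesis that the pairs $(\bx_i, \by_i)$ are mutually independent across $i \in [k]$ upgrades to the statement that the triples $(\bx_i, \by_i, \bx'_i)$ are independent across $i$; moreover each $\bx'_i$ has the same marginal as $\bx_i$, namely $\pi_\mu$. Consequently $\Ex[\phi_\mu(\bx_i)] = \Ex[\phi_\mu(\bx'_i)] = 0$, which kills every cross term $S \neq T$ in the double Fourier expansion of $\Ex[g(\bx) g(\bx')]$ and leaves
\[
\Ex[g(\bx) g(\bx')] \;=\; \sum_{S \subseteq [k]} \hat{g}_\mu(S)^2 \, \gamma^S, \qquad \gamma_i \coloneqq \Ex[\phi_\mu(\bx_i) \phi_\mu(\bx'_i)].
\]
By \Cref{prop:stab-fourier}, it now suffices to prove $\gamma_i \geq \alpha_i$ coordinatewise, since then $\gamma^S \geq \alpha^S$ for every $S$ (both vectors being nonnegative).

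To finish, let $c(y) \coloneqq \Ex[\bx_i \mid \by_i = y]$. Conditional independence of $\bx_i, \bx'_i$ given $\by_i$ together with the constraint $\Ex[c(\by_i)] = \Ex[\bx_i] = \mu$ give, after a short expansion,
\[
\gamma_i \;=\; \Ex\bigl[\Ex[\phi_\mu(\bx_i) \mid \by_i]^2\bigr] \;=\; \frac{\Ex[c(\by_i)^2] - \mu^2}{1 - \mu^2}.
\]
The tower property yields $\Ex[\bx_i \by_i] = \Ex[c(\by_i) \by_i]$, and Cauchy-Schwarz then bounds $\Ex[\bx_i \by_i]^2 \leq \Ex[c(\by_i)^2] \cdot \Ex[\by_i^2] = \Ex[c(\by_i)^2]$. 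Dividing by $1 - \mu^2$ and subtracting $\mu^2/(1-\mu^2)$ yields exactly $\alpha_i \leq \gamma_i$.

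The main conceptual obstacle I anticipate is reverse-engineering the unusual form of $\alpha_i$: it features $\Ex[\bx_i \by_i]^2$ rather than the more immediate $\Ex[\bx_i \by_i]$. This square is forced by the $\lvert t \rvert \geq t^2$ step, which effectively squares the correlation carried by the approximator, and Cauchy-Schwarz is then the natural tool for bounding that square by the conditional variance $\gamma_i$. Recognizing this interaction is what makes the statement nontrivial to anticipate, particularly in view of the pathological behavior of unbalanced noise stability highlighted by \Cref{prop:unbal-weird}.
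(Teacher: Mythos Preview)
Your proof is correct and follows essentially the same route as the paper: choose $h(y)=\sign(\Ex[g(\bx)\mid\by=y])$, apply $|t|\ge t^2$, Fourier-expand the resulting second moment, and then show the coordinatewise bound $\gamma_i\ge\alpha_i$. The only cosmetic differences are that the paper packages the Fourier computation via an intermediate lemma (\Cref{lem:avg-to-stab}) on conditional means $\nu(\by)$ rather than your ghost-copy $\bx'$, and obtains the coordinate bound via Jensen plus the pointwise inequality $|c(\by_i)|\ge \by_i c(\by_i)$ instead of your direct Cauchy--Schwarz; these are equivalent here.
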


Comparing to \Cref{prop:unbal-weird}, if $\mu = 0$, then $\alpha_i = \max(0,1-a-b)$ for all $i \in [k]$. Since $\stab_{\mu, \alpha}(g) \geq 0$ whenever $\alpha \geq 0$, \Cref{lem:unbal-best-h} shows that the phenomenon in \Cref{prop:unbal-weird} cannot occur if we use the best approximator $h$.

The following Lemma will be useful in the proof of \Cref{lem:unbal-best-h}.

\begin{lemma}
    \label{lem:avg-to-stab}
    For any function $g: \bits^k \to \bits$, let $\bnu_1, \ldots, \bnu_k$ be independent random variables each with mean $\mu$ and supported on $[-1,1]$. Then,
    \begin{equation*}
        \Ex_{\bnu}\bracket*{\Ex_{\by \sim \pi_{\bnu}}[g(\by)]^2} = \Stab_{\mu, (\Ex[\phi_\mu(\bnu_1)^2], 
        \ldots, \Ex[\phi_\mu(\bnu_k)^2])}(g).
    \end{equation*}
\end{lemma}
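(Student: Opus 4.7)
The plan is to expand the inner expectation $\Ex_{\by \sim \pi_{\bnu}}[g(\by)]$ using the $\mu$-biased Fourier decomposition of $g$, square the resulting sum, and then push the outer expectation over $\bnu$ inside, using independence of the $\bnu_i$'s to reduce the double sum to its diagonal.

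Concretely, I would first invoke \Cref{fact:fourier-sample} conditionally on any realization of $\bnu$, which gives
\[
    \Ex_{\by \sim \pi_{\bnu}}[g(\by)] \;=\; \sum_{S \subseteq [k]} \hat{g}_\mu(S) \prod_{i \in S} \phi_\mu(\bnu_i).
\]
Squaring this identity and taking expectation over $\bnu$ yields
\[
    \Ex_{\bnu}\!\left[\Ex_{\by \sim \pi_{\bnu}}[g(\by)]^2\right] \;=\; \sum_{S,T \subseteq [k]} \hat{g}_\mu(S)\hat{g}_\mu(T) \; \Ex_{\bnu}\!\left[\prod_{i \in S} \phi_\mu(\bnu_i)\prod_{j \in T} \phi_\mu(\bnu_j)\right].
\]

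Next I would use the independence of the $\bnu_i$'s to factor the inner expectation across coordinates. Writing $a_i(S,T) := \mathbf{1}[i \in S] + \mathbf{1}[i \in T] \in \{0,1,2\}$, independence gives
\[
    \Ex_{\bnu}\!\left[\prod_{i \in S} \phi_\mu(\bnu_i)\prod_{j \in T} \phi_\mu(\bnu_j)\right] \;=\; \prod_{i=1}^k \Ex\!\left[\phi_\mu(\bnu_i)^{a_i(S,T)}\right].
\]
The critical observation is that $\Ex[\phi_\mu(\bnu_i)] = \Ex[(\bnu_i - \mu)/\sigma] = 0$ since $\Ex[\bnu_i] = \mu$. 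Therefore, whenever $S \neq T$ there exists some $i \in S \triangle T$ with $a_i(S,T) = 1$, and that coordinate contributes a factor of $0$, killing the entire term. Only the diagonal $S = T$ contributes, where $a_i = 2$ on $i \in S$ and $a_i = 0$ otherwise, leaving
\[
    \Ex_{\bnu}\!\left[\Ex_{\by \sim \pi_{\bnu}}[g(\by)]^2\right] \;=\; \sum_{S \subseteq [k]} \hat{g}_\mu(S)^2 \prod_{i \in S} \Ex[\phi_\mu(\bnu_i)^2].
\]

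Finally, setting $\vec{\rho}_i \coloneqq \Ex[\phi_\mu(\bnu_i)^2]$ and applying \Cref{prop:stab-fourier}, the right-hand side is exactly $\Stab_{\mu, \vec{\rho}}(g)$, which is the desired identity. There is no real obstacle here: the proof is essentially a bookkeeping exercise, and the only substantive input is the vanishing of $\Ex[\phi_\mu(\bnu_i)]$, which is immediate from $\Ex[\bnu_i] = \mu$ and the definition of $\phi_\mu$.
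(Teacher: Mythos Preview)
Your proposal is correct and follows essentially the same approach as the paper: expand $\Ex_{\by\sim\pi_{\bnu}}[g(\by)]$ via \Cref{fact:fourier-sample}, square to a double sum over $S,T$, use independence of the $\bnu_i$ together with $\Ex[\phi_\mu(\bnu_i)]=0$ to kill the off-diagonal terms, and identify the surviving diagonal sum with $\Stab_{\mu,\vec{\rho}}(g)$ via \Cref{prop:stab-fourier}. The argument and its ingredients match the paper's proof step for step.
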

\begin{proof}
    We'll use the $\mu$-biased Fourier expansion of $g$. Applying \Cref{fact:fourier-sample},
    \begin{align*}
        \Ex_{\bnu}\bracket*{\Ex_{\by \sim \pi_{\bnu}}[g(\by)]^2} &= \Ex_{\bnu}\bracket*{\paren*{\sum_{S \subseteq [k]}\hat{g}(S) \prod_{i \in S}\phi_\mu(\bnu_i)}^2} \\
        &= \sum_{S_1, S_2 \subseteq [k]}\hat{g}(S_1)\hat{g}(S_2)\Ex\bracket*{\prod_{i \in S_1}\phi_\mu(\bnu_i)\prod_{i \in S_2}\phi_\mu(\bnu_i)}.
    \end{align*}
    We claim that, in the above sum, any term in which $S_1 \neq S_2$ is equal to $0$. Let $S_1 \triangle S_2$ denote the symmetric difference of $S_1$ and $S_2$. Then, due to the independence of $\bnu_1, \ldots, \bnu_k$,
    \begin{equation*}
        \Ex\bracket*{\prod_{i \in S_1}\phi_\mu(\bnu_i)\prod_{i \in S_2}\phi_\mu(\bnu_i)} = \prod_{i \in S_1 \cap S_2} \Ex[\phi_\mu(\bnu_i)^2] \prod_{i \in S_1 \triangle S_2} \Ex[\phi_\mu(\bnu_i)].
    \end{equation*}
    Since the mean of $\bnu_i$ is $\mu$, $\Ex[\phi_\mu(\bnu_i)] = \phi_\mu(\mu) = 0$. If $S_1 \neq S_2$, there is at least one element in $S_1 \triangle S_2$, and so the term is $0$. We are therefore left with,
    \begin{align*}
        \Ex_{\bnu}\bracket*{\Ex_{\by \sim \mcD(\bnu)}[g(\by)]^2} &= \sum_{S \subseteq [k]}\hat{g}(S)^2\prod_{i \in S}\Ex\bracket*{\phi_\mu(\bnu_i)^2}.
    \end{align*}
    This is exactly the Fourier expansion for the claimed result.
\end{proof}

We'll also use the following proposition.
\begin{proposition}
    \label{prop:adv-to-square}
    For any random variable $\bnu$ bounded on $[-1,1]$ almost surely and with mean $\mu$,
    \begin{equation*}
        \max\paren*{0,\frac{\Ex[\abs{\bnu}]^2 - \mu^2}{1 -  \mu^2}} \leq\Ex[\phi_\mu(\bnu)^2] \leq \frac{\Ex[\abs{\bnu}] - \mu^2}{1 -  \mu^2} .
    \end{equation*}
\end{proposition}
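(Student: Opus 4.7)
The plan is to reduce both inequalities to elementary statements about $\Ex[\bnu^2]$ compared with $\mu^2$, $\Ex[|\bnu|]^2$, and $\Ex[|\bnu|]$. Using the definition $\phi_\mu(x) = (x-\mu)/\sqrt{1-\mu^2}$ together with $\Ex[\bnu] = \mu$, I would first compute
\[ \Ex[\phi_\mu(\bnu)^2] = \frac{\Ex[(\bnu-\mu)^2]}{1-\mu^2} = \frac{\Ex[\bnu^2] - \mu^2}{1-\mu^2}. \]
Since the definition of $\phi_\mu$ requires $\mu \in (-1,1)$, the denominator $1-\mu^2$ is strictly positive, so the claimed two-sided bound is equivalent to showing
\[ \max\bigl(\mu^2,\, \Ex[|\bnu|]^2\bigr) \;\le\; \Ex[\bnu^2] \;\le\; \Ex[|\bnu|]. \]

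For the lower bound I would observe two things: first, $\Ex[\bnu^2] \ge \Ex[\bnu]^2 = \mu^2$ by variance nonnegativity; and second, $\Ex[\bnu^2] = \Ex[|\bnu|^2] \ge \Ex[|\bnu|]^2$ by Jensen's inequality applied to the convex function $t\mapsto t^2$ and the nonnegative random variable $|\bnu|$. Combining these and dividing by $1-\mu^2$ recovers the lower bound. (The outer $\max(0,\cdot)$ is in fact redundant, since Jensen already forces $\Ex[|\bnu|]\ge|\mu|$ and hence $\Ex[|\bnu|]^2\ge\mu^2$, but it does no harm to state the bound that way.)

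For the upper bound, the boundedness hypothesis $|\bnu|\le 1$ almost surely gives the pointwise inequality $\bnu^2 = |\bnu|\cdot|\bnu| \le |\bnu|$, which integrates to $\Ex[\bnu^2]\le \Ex[|\bnu|]$ and, after dividing by $1-\mu^2$, yields the claimed upper bound. There is essentially no obstacle here: the proposition is a compact packaging of three textbook facts---variance nonnegativity, Jensen for $t\mapsto t^2$, and the pointwise bound $\bnu^2\le|\bnu|$ from $|\bnu|\le 1$---arranged so that the squared ``Fourier coefficient'' $\Ex[\phi_\mu(\bnu)^2]$ is sandwiched in terms of $\Ex[|\bnu|]$ alone, which is precisely the form that will be consumed when this proposition is plugged into \Cref{lem:avg-to-stab} during the proof of \Cref{lem:unbal-best-h}.
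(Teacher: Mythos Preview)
Your proof is correct and follows essentially the same approach as the paper: compute $\Ex[\phi_\mu(\bnu)^2] = (\Ex[\bnu^2]-\mu^2)/(1-\mu^2)$, then apply Jensen's inequality for the lower bound and the pointwise bound $\bnu^2\le|\bnu|$ for the upper bound. Your additional remark that the outer $\max(0,\cdot)$ is redundant (since $\Ex[|\bnu|]\ge|\mu|$) is a nice observation the paper does not make explicit; the paper instead treats the $\ge 0$ bound separately via nonnegativity of a square.
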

\begin{proof}
    We expand, using linearity of expectation,
    \begin{equation*}
        \Ex[\phi_\mu(\bnu)^2] = \Ex\bracket*{\frac{(\bnu - \mu)^2}{1 - \mu^2}} = \frac{\Ex[\rho^2] - 2\mu\Ex[\bnu] + \mu^2}{1 - \mu^2}.
    \end{equation*}
    Since $\Ex[\bnu] = \mu$, we have that $\Ex[\phi_\mu(\bnu)^2] = \frac{\Ex[\bnu^2] -  \mu^2}{1 - \mu^2}$. Therefore, by Jensen's inequality,
    \begin{equation*}
        \frac{\Ex[\abs{\bnu}]^2 - \mu^2}{1 -  \mu^2} \leq\Ex[\phi_\mu(\bnu)^2].
    \end{equation*}
    Furthermore, since $\bnu^2 \leq \abs{\bnu}$,
    \begin{equation*}
        \Ex[\phi_\mu(\bnu)^2] \leq \frac{\Ex[\abs{\bnu}] - \mu^2}{1 -  \mu^2}.
    \end{equation*}
    Lastly, $\Ex[\phi_\mu(\bnu)^2] \geq 0$ follows from non-negativity.
\end{proof}

Finally, we are ready to prove \Cref{lem:unbal-best-h}.
\begin{proof}[Proof of \Cref{lem:unbal-best-h}]
    For any $y \in \bits^n$, we define
    \begin{equation*}
        g_{\avg}(y) = \Ex[g(\bx) \mid \by = y].
    \end{equation*}
    Then, setting $h(y) \coloneqq \sign(g_{\avg}(y))$,
    \begin{equation*}
        \Ex[g(\bx)h(\by)] = \Ex_{\by}\bracket*{\abs*{g_{\avg}(\by)}} \geq \Ex_{\by}\bracket*{\paren*{g_{\avg}(\by)}^2}.
    \end{equation*}
    Note that, conditioning on $\by = y$, the distribution of $\bx$ is still product. Let $\nu(y)$ be the mean of this distribution, so that
    \begin{equation*}
        g_{\avg}(y) = \Ex_{\bz \sim \pi_{\nu(y)}}\bracket*{g(\bz)}.
    \end{equation*}
    By \Cref{lem:avg-to-stab},
    \begin{equation*}
        \Ex_{\by}\bracket*{\paren*{\Ex_{\bz \sim \pi_{\nu(\by)}}\bracket*{g(\bz)}}^2} = \stab_{\mu, (\Ex[\phi_{\mu}(\nu(\by)_1)^2], \ldots, \Ex[\phi_{\mu}(\nu(\by)_k)^2]}(g).
    \end{equation*}
    For each $i \in [k]$,
    \begin{align*}
        \Ex[\phi_{\mu}(\nu(\by)_i)^2] &\geq  \max\paren*{0,\frac{\Ex_{\by}[\abs{\nu(\by)_i}]^2 - \mu^2}{1 -  \mu^2}}  \tag{ \Cref{prop:adv-to-square}}\\
        &\geq   \max\paren*{0,\frac{\Ex_{\by}[\by_i\nu(\by)_i]^2 - \mu^2}{1 -  \mu^2}} \tag{$\abs{x} \geq cx$ when $c \in \bits$} \\
        &=   \max\paren*{0,\frac{\Ex_{\bx,\by}[\by_i\bx_i]^2 - \mu^2}{1 -  \mu^2}} \tag{Definition of $\nu(y)$} \\
        & = \alpha_i.
    \end{align*}
    Putting all of the above together,
    \begin{align*}
        \Ex[g(\bx)h(\by)] &\geq \stab_{\mu, (\Ex[\phi_{\mu}(\nu(\by)_1)^2], \ldots, \Ex[\phi_{\mu}(\nu(\by)_k)^2]}(g) \\
        &\geq \stab_{\mu, \rho}(g),
    \end{align*}
    where the final inequality follows from the monotonicity of noise stability.
\end{proof}

\subsection{Proof of the upper bound on advantage} 

In this section, we prove the following.

\begin{lemma}[Part 2 of \Cref{thm:strong-formal}: Upper bound on advantage]
    \label{lem:lower}
    For any $g: \bits^k\to\bits$, $f:\bits^n \to \bits$, $\mu \coloneqq \Ex_{\bx \sim \mcD}[f(\bx)]$, and $S_1,\ldots, S_k$, define the \emph{upper normalized correlation} as
    \begin{equation*}
        \beta_i \coloneqq \frac{\adv_{\mcD}(f, S_i) - \mu^2}{1 - \mu^2}.
    \end{equation*}
    For $S \subseteq [n] \times [k]$ constructed by taking $S_1$ from the first block, $S_2$ from the second block, and so on (formally $S \coloneqq \cup_{i \in [k], j \in S_i}\{(j,i)\}$).. Then, 
    \begin{equation*}
        \adv_{\mcD^k}(g\circ f, S) \leq \sqrt{\stab_{\mu, \beta}(g)}.
    \end{equation*}
\end{lemma}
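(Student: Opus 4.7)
The plan is to mirror the strategy used for the lower bound, but with the inequalities flipped: pass to the best $S$-junta (which is the sign of the conditional expectation of $g \circ f$), apply Jensen to bound the expected absolute value by the square root of the expected square, and then recognize the inner expression as a multivariate noise stability of $g$ at a correlation vector that is coordinate-wise dominated by $\beta$.

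First I would reduce to the following quantity. The optimal $S$-junta approximator for $g \circ f$ under $\mcD^k$ is $h^\star(x) = \sign(\Ex[(g\circ f)(\bx)\mid \bx_S = x_S])$, and hence
\[
\adv_{\mcD^k}(g\circ f, S) = \Ex_{\bx_S}\bigl[\,\bigl|\Ex[(g\circ f)(\bx)\mid \bx_S]\bigr|\,\bigr] \le \sqrt{\Ex_{\bx_S}\bigl[\Ex[(g\circ f)(\bx)\mid \bx_S]^2\bigr]},
\]
by Jensen's inequality. So it suffices to bound the inner second moment by $\stab_{\mu,\beta}(g)$.

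Next I would use the product structure of $\mcD^k$. Conditioned on $\bx_S$, the blocks $\bx^{(1)},\ldots,\bx^{(k)}$ remain independent, so the values $\by_i := f(\bx^{(i)})$ are conditionally independent with conditional means $\nu_i(\bx^{(i)}_{S_i}) := \Ex_{\mcD}[f(\bx^{(i)})\mid \bx^{(i)}_{S_i}]$. Thus
\[
\Ex[(g\circ f)(\bx)\mid \bx_S] = \Ex_{\by\sim \pi_{\nu(\bx_S)}}[g(\by)],
\]
where $\nu(\bx_S) = (\nu_1(\bx^{(1)}_{S_1}),\ldots,\nu_k(\bx^{(k)}_{S_k}))$. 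Applying \Cref{lem:avg-to-stab} with the random vector $\bnu := \nu(\bx_S)$ (whose coordinates are independent with mean $\mu$) gives
\[
\Ex_{\bx_S}\bigl[\Ex[(g\circ f)(\bx)\mid \bx_S]^2\bigr] = \Stab_{\mu,\,\gamma}(g), \quad \text{where } \gamma_i := \Ex\bigl[\phi_\mu(\bnu_i)^2\bigr].
\]

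The remaining step is to show $\gamma_i \le \beta_i$ for each $i$, after which \Cref{cor:multi-monotone} finishes the proof. By \Cref{prop:adv-to-square}, $\gamma_i \le (\Ex[|\bnu_i|] - \mu^2)/(1-\mu^2)$, so I need to identify $\Ex[|\bnu_i|]$ with $\adv_\mcD(f,S_i)$. This is exactly where the definition of $\adv_\mcD(f,S_i)$ plays in: the optimal $S_i$-junta approximator for $f$ is $\tilde f_{S_i}(x) = \sign(\nu_i(x_{S_i}))$, and hence
\[
\adv_\mcD(f,S_i) = \Ex_{\bx}\bigl[f(\bx)\,\sign(\nu_i(\bx_{S_i}))\bigr] = \Ex_{\bx_{S_i}}\bigl[|\nu_i(\bx_{S_i})|\bigr] = \Ex[|\bnu_i|].
\]
Substituting gives $\gamma_i \le \beta_i$, and then monotonicity of multivariate noise stability yields $\Stab_{\mu,\gamma}(g) \le \Stab_{\mu,\beta}(g)$, completing the bound.

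None of the individual steps is particularly treacherous since the heavy lifting (\Cref{lem:avg-to-stab} and \Cref{prop:adv-to-square}) was already set up for the lower bound. The one place one has to be a little careful is the conditional-independence step: the inner functions $f$ act on \emph{disjoint} blocks of $\mcD^k$, and it is precisely this product structure that makes $\bnu_1,\ldots,\bnu_k$ independent so that \Cref{lem:avg-to-stab} applies cleanly. Once that is pinned down, the chain Jensen $\to$ avg-to-stab $\to$ adv-to-square $\to$ identification $\Ex|\bnu_i| = \adv_\mcD(f,S_i)$ $\to$ monotonicity proceeds mechanically.
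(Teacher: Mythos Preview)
Your proposal is correct and follows essentially the same route as the paper's own proof: reduce $\adv_{\mcD^k}(g\circ f,S)$ to $\Ex[|\Ex_{\bz\sim\pi_{\nu(\bx)}}[g(\bz)]|]$ via the product structure (the paper's \Cref{prop:g-avg} and \Cref{prop:def-rho-x}), apply Jensen, invoke \Cref{lem:avg-to-stab}, bound each $\Ex[\phi_\mu(\bnu_i)^2]$ by $\beta_i$ using \Cref{prop:adv-to-square} together with the identification $\Ex[|\bnu_i|]=\adv_\mcD(f,S_i)$, and conclude by monotonicity. The only cosmetic difference is that the paper packages the first and last identifications as separate propositions, whereas you unfold them inline.
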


To begin with, we rewrite advantage in the following form.
\begin{proposition}
    \label{prop:g-avg}
    For any function $q: \bits^m \to \bits$, distribution $\mcD$ over $\bits^m$, and $S \subseteq [m]$, define
    \begin{equation}
        \label{eq:def-avg}
        q_{S, \mcD}^{\avg}(x) \coloneqq \Ex_{\by \sim \mcD}[q(\by) \mid \by_S = x_S],
    \end{equation}
    where $y_S = x_S$ is shorthand for $x_i = y_i$ for all $i \in S$. Then,
    \begin{equation*}
        \Adv_{\mcD}(q, S) = \Ex_{\bx \sim \mcD}\bracket*{\abs*{q_{S, \mcD}^{\avg}(\bx)}}.
    \end{equation*}
\end{proposition}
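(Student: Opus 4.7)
The plan is to use the tower property of conditional expectation to reduce the maximization over arbitrary $S$-juntas to a pointwise optimization, at which point the optimal choice becomes the sign of $q_{S,\mcD}^{\avg}$.

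First, I would unwind the definition: $\Adv_{\mcD}(q,S) = \max_{h} \Ex_{\bx \sim \mcD}[q(\bx) h(\bx)]$, where the maximum ranges over all $S$-juntas $h : \bits^m \to \bits$. Since $h$ only depends on $\bx_S$, I can condition on $\bx_S$ and pull $h$ out of the inner expectation: for any such $h$,
\begin{equation*}
\Ex_{\bx \sim \mcD}[q(\bx) h(\bx)] = \Ex_{\bx \sim \mcD}\bigl[ h(\bx) \cdot \Ex_{\by \sim \mcD}[q(\by) \mid \by_S = \bx_S] \bigr] = \Ex_{\bx \sim \mcD}\bigl[ h(\bx) \cdot q_{S, \mcD}^{\avg}(\bx) \bigr],
\end{equation*}
using the definition~\eqref{eq:def-avg}. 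Since $h(\bx) \in \bits$, the inner product is pointwise bounded by $|q_{S,\mcD}^{\avg}(\bx)|$, giving the upper bound $\Adv_\mcD(q,S) \leq \Ex_{\bx}[|q_{S,\mcD}^{\avg}(\bx)|]$.

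Next I would exhibit the matching $S$-junta that attains this bound, namely $h^\star(x) \coloneqq \sign(q_{S,\mcD}^{\avg}(x))$ (with $\sign(0)$ defined arbitrarily in $\bits$). This $h^\star$ is indeed an $S$-junta because $q_{S,\mcD}^{\avg}(x)$ depends only on $x_S$ by construction. Plugging it into the identity above gives exactly $\Ex_{\bx}[|q_{S,\mcD}^{\avg}(\bx)|]$, matching the upper bound.

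There is no real obstacle here; the only thing to be careful about is that $h$ must take values in $\bits$ rather than $[-1,1]$, so I should briefly note why $\sign$ is a valid choice (and handle the measure-zero event $q_{S,\mcD}^{\avg}(\bx)=0$ by assigning an arbitrary sign, which does not affect the value of $|q_{S,\mcD}^{\avg}(\bx)|$). The whole proof is essentially a two-line application of the tower property followed by pointwise optimization.
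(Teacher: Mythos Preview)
Your proposal is correct and follows essentially the same approach as the paper: use the tower property to rewrite $\Ex_{\bx}[q(\bx)h(\bx)]$ as $\Ex_{\bx}[h(\bx)\,q_{S,\mcD}^{\avg}(\bx)]$, then maximize pointwise by taking $h^\star = \sign(q_{S,\mcD}^{\avg})$. You are slightly more explicit than the paper in noting that $h^\star$ is an $S$-junta and in handling the $\sign(0)$ case, but the argument is the same.
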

\begin{proof}
    Consider any $S$-junta $h$. Then,
    \begin{equation*}
        \adv_{\mcD}(q, h) = \Ex_{\by \sim \mcD}[
        q(\by) h(\by)] = \Ex_{\bx \sim \mcD}\bracket*{\Ex_{\by \sim \mcD}[q(\by) h(\by) \mid \bx_S = \by_S]}.
    \end{equation*}
    Since $h$ is an $S$-junta, it must classify $x$ and $y$ the same whenever $x_S = y_S$. Therefore,
    \begin{align*}
        \adv(q, h) &= \Ex_{\bx \sim \mcD}\bracket*{h(\bx)\Ex_{\by \sim \mcD}[q(\by) \mid \bx_S = \by_S]}\\
        &= \Ex_{\bx \sim \mcD}\bracket*{h(\bx)q^{\avg}_{S,\mcD}(\bx)}.
    \end{align*}
    to maximize the above advantage among all $h$, we set $h(x) = \sign(q^{\avg}_{S, \mcD}(x))$, in which case
    \begin{equation*}
        \adv(q, h) = \Ex_{\bx \sim \mcD}\bracket*{\abs*{q^{\avg}_{S, \mcD}(\bx)}}. \qedhere
    \end{equation*}
\end{proof}
Given \Cref{prop:g-avg}, to compute $\adv_{\mcD^k}(g\circ f, S) $, it suffices to understand the function $(g \circ f)^{\avg}_{S, \mcD}$. We proceed to transform that function into a form which is easier to understand.
\begin{proposition}
    \label{prop:def-rho-x}
    In the setting of \Cref{lem:lower}, for any $x \in (\bits^n)^k$, let $\nu(x) \in [-1,1]^k$ be the vector where
    \begin{equation*}
        \nu(x)_i \coloneqq \Ex_{\by \sim \mcD^k}[f(\by) \mid x^{(i)}_{S_i} = \by_{S_i}].
    \end{equation*}
    Then,
    \begin{equation*}
        (g \circ f)^{\avg}_{S,\mcD^k}(x) = \Ex_{\bz \sim \pi_{\nu(x)}}[g(\bz)].
    \end{equation*}
\end{proposition}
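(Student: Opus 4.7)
The plan is to unfold the definition of $(g \circ f)^{\avg}_{S,\mcD^k}(x)$ via \eqref{eq:def-avg} and then exploit the product structure of $\mcD^k$ together with the fact that a $\bits$-valued random variable is determined by its mean. By definition,
\[ (g \circ f)^{\avg}_{S,\mcD^k}(x) = \Ex_{\by \sim \mcD^k}\big[(g\circ f)(\by) \;\big|\; \by^{(i)}_{S_i} = x^{(i)}_{S_i}\ \text{for all}\ i\in[k]\big]. \]

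First I would observe that, because $\mcD^k$ is a product across the $k$ blocks and $S$ is built by taking $S_i$ from the $i$th block, simultaneously conditioning on the events $\{\by^{(i)}_{S_i} = x^{(i)}_{S_i}\}_{i\in [k]}$ keeps the blocks $\by^{(1)}, \ldots, \by^{(k)}$ mutually independent, with $\by^{(i)}$ distributed as $\mcD$ conditioned on $\by^{(i)}_{S_i} = x^{(i)}_{S_i}$. Setting $\bz_i \coloneqq f(\by^{(i)})$, this independence descends to $(\bz_1, \ldots, \bz_k) \in \bits^k$, which is therefore a product distribution on $\bits^k$.

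Next I would identify that product distribution as $\pi_{\nu(x)}$. Since each $\bz_i$ takes values in $\bits$, its law is determined by its mean, which by definition of $\bz_i$ and $\nu(x)_i$ equals
\[ \Ex[\bz_i] = \Ex_{\by \sim \mcD}\big[f(\by) \,\big|\, \by_{S_i} = x^{(i)}_{S_i}\big] = \nu(x)_i. \]
Hence $(\bz_1, \ldots, \bz_k) \sim \pi_{\nu(x)}$, and since $(g \circ f)(\by) = g(\bz)$, the tower rule gives
\[ (g \circ f)^{\avg}_{S,\mcD^k}(x) = \Ex[g(\bz)] = \Ex_{\bz \sim \pi_{\nu(x)}}[g(\bz)], \]
as claimed.

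There is no real obstacle: the proposition is a bookkeeping identity rewriting a conditional expectation over a product domain as a single expectation over an induced product on $\bits^k$. The only mild subtlety is recognizing that collapsing each block $\by^{(i)}$ to the single bit $f(\by^{(i)})$ discards all information except its mean, which is exactly why the marginally-parameterized product distribution $\pi_{\nu(x)}$ suffices to describe the joint law of $(\bz_1,\ldots,\bz_k)$.
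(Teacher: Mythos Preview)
Your proposal is correct and follows essentially the same argument as the paper: define $\bz_i = f(\by^{(i)})$, observe that the block-wise product structure of $\mcD^k$ together with the block-respecting conditioning keeps the $\bz_i$ independent, and identify the resulting product law on $\bits^k$ as $\pi_{\nu(x)}$ by matching means. The only cosmetic difference is that the paper phrases the last step as ``$\pi_{\nu(x)}$ is the unique product distribution on $\bits^k$ with mean $\nu(x)$,'' which is exactly your ``a $\bits$-valued random variable is determined by its mean.''
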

\begin{proof}
    Consider drawing $\by \sim (\bits^n)^k$ conditioned on $\by_S = x_S$. Let $\bz = f^{\otimes k}(\by)$. By definition,
    \begin{equation*}
        (g \circ f)^{\avg}_{S, \mcD^k}(x) = \Ex[g(\bz)].
    \end{equation*}
    Therefore, we merely need to show that the distribution of $\bz$ is that of $\pi_{\nu(x)}$. For this it is sufficient that,
    \begin{enumerate}
        \item Each $\bz_1, \ldots, \bz_k$ is independent. This follows from the fact $\by_1, \ldots, \by_k$ are independent, and that the restriction that $\by_S = x_S$ is a disjoint restriction for each of the $k$ components.
        \item For each $i \in [k]$, that $\Ex[\bz_i] = \nu(x)_i$. This follows from the definition of $\nu(x)_i$.
    \end{enumerate}
    The desired result follows from the fact that $\pi_{\nu(x)}$ is the unique product distribution over $\bits^k$ with mean $\nu(x)$.
\end{proof}

We now prove the upper bound.
\begin{proof}[Proof of \Cref{lem:lower}]
    Let $\nu$ be as defined in \Cref{prop:def-rho-x}. Applying it and \Cref{prop:g-avg},
    \begin{equation*}
        \adv_{\mcD^k}(g\circ f, S) = \Ex_{\bx \sim \mcD^k}\bracket*{\abs*{\Ex_{\bz \sim \pi_{\nu(\bx)}}[g(\bz)]}} \leq \sqrt{\Ex_{\bx \sim \mcD^k}\bracket*{\paren*{\Ex_{\bz \sim \pi_{\nu(\bx)}}[g(\bz)]}^2}}.
    \end{equation*}
    The inequality above is Jensen's. Consider the random variables $\nu(\bx)_1, \ldots, \nu(\bx)_k$. The have the following two properties.
    \begin{enumerate}
        \item They are independent. This is because the value of $\nu(\bx)_i$ depends on only the value of $\bx_i$, which is independent of the other $\bx_j$ for $j \neq i$.
        \item They each have mean $\mu$. This is because,
        \begin{equation*}
            \Ex[\nu(\bx)_i] = \Ex\bracket*{\Ex_{\by \sim \mcD}[f(\by) \mid (\bx^{(i)})_{S_i} = y_{S_i}]} = \Ex_{\by \sim \mcD}[f(\by)] = \mu.
        \end{equation*}
    \end{enumerate}
    Therefore, we can use \Cref{lem:avg-to-stab}:
    \begin{equation*}
        \Ex_{\bx \sim \mcD^k}\bracket*{\paren*{\Ex_{\bz \sim \pi_{\nu(\bx)}}[g(\bz)]}^2} = \Stab_{\mu, (\Ex[\phi_\mu(\nu(\bx)_1)^2], 
        \ldots, \Ex[\phi_\mu(\nu(\bx)_k)^2])}(g).
    \end{equation*}
    We can further upper bound,
    \begin{align*}
        \Ex[\phi_\mu(\nu(\bx)_i)^2] &\leq \frac{\Ex[\abs{\nu(\bx)_i}] - \mu^2}{1 -  \mu^2} \tag{\Cref{prop:adv-to-square}}\\
        &= \frac{\adv(f, S_i) - \mu^2}{1 -  \mu^2} \tag{\Cref{prop:g-avg}}\\
        &= \beta_i.
    \end{align*}
    Putting the above together, we have that
    \begin{equation*}
        \adv_{\mcD^k}(g\circ f, S) \leq \sqrt{\stab_{\mu, \beta}(g)}. \qedhere
    \end{equation*}
\end{proof}

\subsection{Proofs of the consequences of our strong composition theorem}
\label{subsec:corollary-proofs}
In this section, we complete the proofs of \Cref{cor:error-4} and \Cref{thm:composition theorem intro 1}.
\begin{proof}[Proof of \Cref{thm:composition theorem intro 1}]
    For any partition of the budget junta budget $r_1 + \cdots + r_k = R$, let $\vrho(r_1,\ldots,r_k)$ be the vector,
    \begin{equation*}
        \vrho(r_1,\ldots,r_k)_i \coloneqq \adv_D(f, r_i).
    \end{equation*}
    Then, applying the upper bound on advantage of \Cref{thm:strong-formal} and maximizing over all possible partitions of the budget $R$, we have that
    \begin{equation*}
        \adv_{\mcD^k}(g\circ f, R) \leq \max_{r_1 + \cdots + r_k = R} \sqrt{\stab_{\vrho(r_1, \ldots, r_k)}(g)}.
    \end{equation*}
    This completes the upper bound on the advantage of the optimal $R$-junta approximator of $g \circ f$ of \Cref{thm:composition theorem intro 1}. For the lower bound on the advantage of the optimal \emph{composed form} approximator, let $r_1, \ldots, r_k$ be the partition of budget maximizing $\stab_{\vrho(r_1, \ldots, r_k)}(g)$. Using the lower bound of \Cref{thm:strong-formal}, and using $(\cdot)^2$ to refer to an elementwise squaring of a vector,
    \begin{equation*}
        \adv_{\mcD^k}(g\circ f, h (\tilde{f}_{r_1}, \ldots, \tilde{f}_{r_k})) \geq \stab_{\vrho(r_1,\ldots,r_k)^2}(g).
    \end{equation*}
    Using the Fourier expression for stability \Cref{prop:stab-fourier},
    \begin{align*}
         \Stab_{\vrho(r_1,\ldots,r_k)^2}(g) &= \Ex_{\bS \sim \mcS_{\mu}(g)}\bracket*{((\vrho(r_1,\ldots,r_k)^2)^{\bS}} \\
         &=\Ex_{\bS \sim \mcS_{\mu}(g)}\bracket*{((\vrho(r_1,\ldots,r_k)^{\bS})^{2}} \\
         &\geq\Ex_{\bS \sim \mcS_{\mu}(g)}\bracket*{((\vrho(r_1,\ldots,r_k)^{\bS})}^2 \tag{Jensen's inequality} \\
          &= \Stab_{\vrho(r_1,\ldots,r_k)}(g)^2.
    \end{align*}
    Therefore, there is a composed form approximator with advantage at least $\stab_{\vrho(r_1, \ldots, r_k)}(g)^2$.
\end{proof}

Our proof of \Cref{cor:error-4} uses the following.
\begin{proposition}
    \label{prop:prod-iq}
    For any $\alpha_1,\ldots, \alpha_m \in [0,1]$ and $\beta_1, \ldots, \beta_m \in [0,1]$, satisfying $(1-\alpha_i) \leq 2(1-\beta_i)$ for each $i \in [m]$,
    \begin{equation*}
        1 - \prod_{i \in [m]} \alpha_i \leq 2\paren*{ 1 - \prod_{i \in [m]} \beta_i }.
    \end{equation*}
\end{proposition}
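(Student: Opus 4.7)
The plan is to change variables and reduce the proposition to a purely algebraic inequality by a two-case split. Set $u_i \coloneqq 1 - \alpha_i$ and $v_i \coloneqq 1 - \beta_i$, so that the hypothesis reads $u_i \leq 2 v_i$ and the goal becomes
\begin{equation*}
    1 - \prod_{i \in [m]}(1 - u_i) \;\leq\; 2\paren*{1 - \prod_{i \in [m]}(1 - v_i)}.
\end{equation*}

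First I would dispatch the easy case where some $v_i \geq \tfrac{1}{2}$ (equivalently, some $\beta_i \leq \tfrac{1}{2}$): then $\prod_j (1 - v_j) \leq 1 - v_i \leq \tfrac{1}{2}$, so the right-hand side is at least $1$ while the left-hand side is at most $1$, and we are done. In the remaining case every $v_i < \tfrac{1}{2}$, so in particular $2 v_i \leq 1$, and by componentwise monotonicity of $(t_1, \ldots, t_m) \mapsto 1 - \prod_i (1 - t_i)$ on $[0,1]^m$, replacing each $u_i$ with its upper bound $2 v_i$ only enlarges the left-hand side. It then suffices to establish the algebraic inequality
\begin{equation*}
    \prod_{i \in [m]}(1 - 2 v_i) \;\geq\; 2 \prod_{i \in [m]}(1 - v_i) - 1 \qquad \text{for all } v_1, \ldots, v_m \in [0, \tfrac{1}{2}].
\end{equation*}

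I would prove this inequality by induction on $m$, with equality in the base case $m = 1$. For the inductive step, write $P \coloneqq \prod_{i=1}^{m-1}(1 - v_i)$. If $2P - 1 \geq 0$, the inductive hypothesis together with $1 - 2 v_m \geq 0$ gives
\begin{equation*}
    \prod_{i=1}^{m}(1 - 2 v_i) \;\geq\; (1 - 2 v_m)(2P - 1) \;=\; 2 P (1 - v_m) - 1 + 2 v_m (1 - P),
\end{equation*}
where the correction term $2 v_m(1 - P)$ is nonnegative since $P \leq 1$; if instead $2P - 1 < 0$, then the target lower bound $2 P(1 - v_m) - 1$ is itself negative, while the left-hand side is a product of nonnegatives, and the inequality is immediate.

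The main obstacle, or really the pitfall to avoid, is that a direct induction on the proposition as stated does not close. The hypothesis $1 - \alpha_i \leq 2(1 - \beta_i)$ is consistent with either $\alpha_i \geq \beta_i$ or $\alpha_i \leq \beta_i$, so $\prod \alpha_i$ and $\prod \beta_i$ are not comparable in general; expanding $1 - A \alpha_m = (1 - A) + A(1 - \alpha_m)$ with $A \coloneqq \prod_{i<m} \alpha_i$ and invoking the induction hypothesis on the first summand leaves an inconvenient factor of $A$ rather than $B \coloneqq \prod_{i<m}\beta_i$ on the second. The change of variables plus case split above sidesteps this: the regime where $\prod \beta_i$ is small is resolved by the trivial bound $2(1 - \prod \beta_i) \geq 1$, and in the complementary regime we can safely pass to the pointwise worst case $u_i = 2 v_i$ and invoke the clean algebraic identity shown in the inductive step.
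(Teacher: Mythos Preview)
Your proof is correct but proceeds differently from the paper's. Both arguments first reduce to the equality case of the hypothesis and then establish the core inequality
\[
\prod_{i}(1-2v_i)\;\ge\;2\prod_{i}(1-v_i)-1\qquad(v_i\in[0,\tfrac12]),
\]
but the reductions and the proofs of this inequality differ. The paper tightens the $\beta$'s upward, defining $\beta'_i$ by $1-\alpha_i=2(1-\beta'_i)$, so that automatically $\beta'_i\in[\tfrac12,1]$ and no case split is needed; it then observes that $q(x)\coloneqq 1-\prod_i\bigl(1-x(1-\alpha_i)\bigr)$ is concave with $q(0)=0$, giving $q(1)\le 2q(1/2)$ in one stroke. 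You instead tighten the $\alpha$'s downward (pushing $u_i$ up to $2v_i$), which forces the preliminary case split to ensure $2v_i\le 1$, and then prove the core inequality by a clean induction on $m$. Your route avoids any appeal to calculus or concavity and is entirely elementary; the paper's route is shorter and more conceptual, sidestepping both the case split and the induction via a single midpoint-concavity observation.
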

\begin{proof}
We consider the vector $\beta' \in [0,1]^m$ satisfying
\begin{equation*}
    1 - \alpha_i = 2 \cdot (1 - \beta'_i).
\end{equation*}
Note that $\beta'_i \geq \beta_i$, which means that
\begin{equation*}
     1 - \prod_{i \in [m]} \beta'_i  \leq 1 - \prod_{i \in [m]} \beta_i. 
\end{equation*}
Now, consider the function $q:[0,1] \to [0,1]$ defined as 
\begin{equation*}
    q(x) \coloneqq 1 - \prod_{i \in [m]}1 - x(1- \alpha_i).
\end{equation*}
A quick calculation confirms that the second derivative of $q$ is nonpositive, so $q$ is concave. Furthermore, it satisfies,
\begin{align*}
    q(0) &= 0,\\
    q(1) &=  1 - \prod_{i \in [m]} \alpha_i, \\
    q(1/2) &= 1 - \prod_{i \in [m]} \beta'_i.
\end{align*}
We conclude,
\begin{equation*}
    1 - \prod_{i \in [m]} \alpha_i \overset{\text{concavity of $q$}}{\leq} 2\paren*{1 - \prod_{i \in [m]} \beta'_i} \leq \paren*{1 - \prod_{i \in [m]} \beta_i}. \qedhere
\end{equation*}
\end{proof}

\begin{proof}[Proof of \Cref{cor:error-4}]
    Let $r_1 + \cdots + r_k = R$ be the partition of $R$ used in the junta achieving minimum error relative to $g \circ f$ and define, for each $i \in [k]$,
    \begin{align*}
        \alpha_i &\coloneqq \max\paren*{0, \frac{\adv_{\mcD}(f, r_i)^2 - \mu^2}{1 - \mu^2}},\\
        \beta_i &\coloneqq \max\paren*{0, \frac{\adv_{\mcD}(f, r_i) - \mu^2}{1 - \mu^2}},
    \end{align*}
    which satisfy the relation
    \begin{equation*}
        1-\alpha_i \leq 2(1 - \beta_i).
    \end{equation*}
    Applying \Cref{thm:strong-formal} and the relation $\error = \frac{1 - \adv}{2}$, we have that
    \begin{equation*}
        \error_{\mcD^k}(g\circ f, R) \geq \frac{1 - \sqrt{\stab_{\mu, \beta}(g)}}{2},\quad\quad\text{and}\quad\quad\error_{\mcD^k}(g\circ f, h (\tilde{f}_{r_1}, \ldots, \tilde{f}_{r_k})) \leq \frac{1 - \stab_{\mu, \alpha}(g)}{2}.
    \end{equation*}
    Our goal is to show the following series of inequalities, which would imply the desired result,
    \begin{equation*}
        1 - \stab_{\mu, \alpha}(g) \overset{(\text{iq }1)}{\leq} 2(1 - \stab_{\mu, \beta}(g)) \overset{(\text{iq }2)}{\leq} 4(1 - \sqrt{\stab_{\mu, \beta}(g)}).
    \end{equation*}
    The second, (inequality 2), follows the fact that for any $x \in [0,1]$, $(1-x) \leq 2(1-\sqrt{x})$. For the first inequality, using \Cref{prop:stab-fourier}, we can express stability via the Fourier spectrum of $g$ as
    \begin{align*}
        1 - \stab_{\mu, \alpha}(g) &= \sum_{S}\hat{g}(S)^2(1 - \prod_{i \in S}\alpha_i)\\
         &\leq  2\sum_{S}\hat{g}(S)^2(1 - \prod_{i \in S}\beta_i) \tag{\Cref{prop:prod-iq}, $1-\alpha_i \leq 2(1 - \beta_i)$}\\
         & = 2(1 - \stab_{\mu, \beta}(g)).
    \end{align*}
    This proves inequality 1, giving the desired result.
\end{proof}

\section{Multivariate noise stability of symmetric functions}
In this section, we prove \Cref{lem:intro multivariate to univariate} and \Cref{cor:delta-eps-noise-stability-intro}, connecting the multivariate noise stability of symmetric functions to their univariate noise stability.

\begin{definition}[Symmetric and transitive functions]
    \label{def:sym-trans}
    For any function $g:\bits^k \to \bits$, a permutation $\sigma:[k]\to [k]$ is an automorphism of $g$ if for all inputs $x \in \bits^k$,
    \begin{equation*}
        g(x) = g(x_{\sigma(1)}, \ldots, x_{\sigma(k)}).
    \end{equation*}
    We say $g$ is \emph{symmetric} if every permutation of $[k]$ is an automorphism of $g$. Similarly, $g$ is \emph{transitive} if for all $i,j \in [k]$, there is an automorphism of $g$ sending $i$ to $j$.
\end{definition}

\subsection{The upper bound on the multivariate noise stability of symmetric functions}

\begin{lemma}[Upper bound of \Cref{lem:intro multivariate to univariate}]
    \label{lem:multivariate-upper-sym}
    For any symmetric $g:\bits^k \to \bits$, $\mu \in (-1,1)$, and $\vrho \in [0,1]^k$, let $\rhoup \coloneqq 1/k \cdot \sum_{i \in [k]}\vrho_i$. Then,
    \begin{equation*}
        \stab_{\mu, \vrho}(g)\leq \stab_{\mu, \rhoup}(g).
    \end{equation*}
\end{lemma}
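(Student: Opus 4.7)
The plan is to combine the Fourier expression for multivariate noise stability (\Cref{prop:stab-fourier}) with the symmetry of $g$ and a classical inequality between elementary symmetric polynomials and powers of the arithmetic mean (Maclaurin's inequality).

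\medskip

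\noindent\textbf{Step 1: Use symmetry to regroup the Fourier expansion by level.} Since $g$ is symmetric, every permutation of $[k]$ is an automorphism, and a standard consequence is that the $\mu$-biased Fourier coefficient $\hat{g}_{\mu}(S)$ depends only on $|S|$. Denote the common value of $\hat{g}_{\mu}(S)^2$ for $|S|=\ell$ by $c_\ell \ge 0$. By \Cref{prop:stab-fourier},
\begin{equation*}
    \stab_{\mu, \vrho}(g) \;=\; \sum_{S\subseteq[k]} \hat{g}_\mu(S)^2\, \vrho^S \;=\; \sum_{\ell=0}^{k} c_\ell \sum_{|S|=\ell} \vrho^S \;=\; \sum_{\ell=0}^k c_\ell\, e_\ell(\vrho),
\end{equation*}
where $e_\ell(\vrho)=\sum_{|S|=\ell}\prod_{i\in S}\vrho_i$ is the elementary symmetric polynomial of degree $\ell$ in $\vrho_1,\ldots,\vrho_k$. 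Applied at the constant vector $\rhoup \cdot \mathbf{1}$, the same identity reads $\stab_{\mu, \rhoup}(g) = \sum_{\ell} c_\ell \binom{k}{\ell} (\rhoup)^\ell$.

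\medskip

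\noindent\textbf{Step 2: Bound each elementary symmetric polynomial by a power of the arithmetic mean.} The key inequality I will invoke is Maclaurin's inequality: for any $\vrho \in [0,1]^k$ and any $0 \le \ell \le k$,
\begin{equation*}
    \frac{e_\ell(\vrho)}{\binom{k}{\ell}} \;\le\; \paren*{\frac{e_1(\vrho)}{k}}^{\ell} \;=\; (\rhoup)^{\ell}.
\end{equation*}
This is the standard AM--GM style chain of Maclaurin means, and it holds coordinatewise-nonnegatively; for level $0$ both sides equal $1$, for level $1$ both sides equal $\rhoup$, and for higher levels it is a consequence of Newton's inequalities applied to the polynomial with roots at the $\vrho_i$.

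\medskip

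\noindent\textbf{Step 3: Combine.} Because $c_\ell \ge 0$ for every $\ell$, multiplying the Maclaurin inequality by $c_\ell$ and summing yields
\begin{equation*}
    \stab_{\mu, \vrho}(g) \;=\; \sum_{\ell=0}^k c_\ell\, e_\ell(\vrho) \;\le\; \sum_{\ell=0}^k c_\ell \binom{k}{\ell}(\rhoup)^{\ell} \;=\; \stab_{\mu, \rhoup}(g),
\end{equation*}
which is the claim.

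\medskip

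\noindent\textbf{Anticipated obstacles.} The proof is short once the two ingredients are in place: the only potentially delicate step is Step 1, where I need the fact that symmetric functions have level-wise-constant Fourier coefficients in the $\mu$-biased basis. This follows because $\phi_\mu(y_i)$ transforms as $y_i$ does under permutations of coordinates, so $\hat{g}_\mu(S) = \hat{g}_\mu(\sigma(S))$ for every automorphism $\sigma$ of $g$, and under full symmetry this forces $\hat{g}_\mu(S)$ to depend only on $|S|$. After that, the nonnegativity $c_\ell = \hat{g}_\mu(S)^2 \ge 0$ is what lets Maclaurin's inequality push through termwise, with no cancellation issues. I don't expect any real obstacle beyond citing Maclaurin cleanly.
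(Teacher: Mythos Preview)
Your proof is correct. Both your argument and the paper's reduce to the same level-wise inequality
\[
\frac{e_\ell(\vrho)}{\binom{k}{\ell}} \;\le\; (\rhoup)^\ell,
\]
but the justifications differ: you invoke Maclaurin's inequality directly, while the paper interprets the left-hand side as $\Ex\bigl[\prod_{i\le \ell}\bx_i\bigr]$ for a random-without-replacement sample $(\bx_1,\ldots,\bx_\ell)$ from $\{\vrho_1,\ldots,\vrho_k\}$ and then appeals to negative association of permutation distributions (\Cref{fact:NA-perm,fact:NA-subset,fact:NA-prod}) to factorize the expectation. Your route is more elementary and self-contained for this particular lemma; the paper's negative-association framing is somewhat heavier machinery here but fits the probabilistic spectral-sample viewpoint used throughout the section.
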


Our proof of \Cref{lem:multivariate-upper-sym} will use make heavy use of the \emph{negative association} of random variables.
\begin{definition}[Negative association \cite{JP83}]
    \label{def:NA}
    A set of random variables $\bx_1, \ldots, \bx_m$ supported on $\R$ are \emph{negatively associated} if for all \emph{disjoint} subsets $S_1, S_2 \subseteq [m]$ and $S_1$-juntas $f_1:\R^m \to \R$, $S_2$-juntas $f_2:\R^m \to \R$ both monotonically nondecreasing,
    \begin{equation*}
        \Ex[f_1(\bx)f_2(\bx)] \leq \Ex[f_1(\bx)]\Ex[f_2(\bx)].
    \end{equation*}
\end{definition}
For our purposes, we will only need a few useful facts about negatively associated random variables given in \cite{JP83} (see also \cite{waj17} for a useful overview).
\begin{fact}[Permutation distributions are negatively associated, \cite{JP83}]
    \label{fact:NA-perm}
    For any $z_1, \ldots, z_m \in \R$, draw a uniformly random permutation $\bsigma:[m] \to [m]$ and set $\bx_i = z_{\bsigma(i)}$ for each $i \in [k]$. Then, $\bx_1, \ldots, \bx_m$ are negatively associated.
\end{fact}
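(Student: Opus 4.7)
The plan is to prove this classical fact of Joag-Dev and Proschan by first reducing to indicators of upper sets, then conditioning on how values are partitioned across coordinate blocks, and finally exploiting an oppositely-monotone structure via a Chebyshev-type argument.

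\textbf{Reduction to indicators.} Any bounded monotone nondecreasing $S_j$-junta $f_j$ admits a layer-cake representation as a nonnegative combination of indicators $\mathbf{1}[f_j \geq t]$, each of which is the indicator of an upper set in $\R^{S_j}$. Since the permutation distribution has finite support, $f_j(\bx)$ is automatically bounded, so this representation is a finite nonnegative combination. Since the desired inequality $\Ex[f_1(\bx) f_2(\bx)] \leq \Ex[f_1(\bx)]\Ex[f_2(\bx)]$ is bilinear (after centering), it suffices to verify it when $f_1 = \mathbf{1}_A$ and $f_2 = \mathbf{1}_B$ for upper sets $A \subseteq \R^{S_1}$ and $B \subseteq \R^{S_2}$.

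\textbf{Conditioning on the partition of values.} Let $\mathbf{T}$ be the random multiset of values assigned by $\bsigma$ to positions in $S_1$; equivalently, $\mathbf{T}$ is a uniformly random size-$|S_1|$ submultiset of $\{z_1, \ldots, z_m\}$. Conditional on $\mathbf{T}$, the $S_1$-coordinates are a uniform permutation of $\mathbf{T}$ while the $S_2$-coordinates are a uniform arrangement of $|S_2|$ elements of the complementary multiset $\mathbf{T}^c$, and these two conditional distributions are independent. Setting $\phi_A(T) \coloneqq \Pr[\bx \in A \mid \mathbf{T} = T]$ and $\phi_B(T) \coloneqq \Pr[\bx \in B \mid \mathbf{T} = T]$, the tower law yields
\begin{equation*}
\Ex[\mathbf{1}_A(\bx)\,\mathbf{1}_B(\bx)] \;=\; \Ex[\phi_A(\mathbf{T})\,\phi_B(\mathbf{T})],
\end{equation*}
so it remains to show $\mathrm{Cov}(\phi_A(\mathbf{T}), \phi_B(\mathbf{T})) \leq 0$.

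\textbf{Oppositely-monotone structure and the main obstacle.} The key structural lemma is that swapping an element of $\mathbf{T}$ for a strictly larger element of $\mathbf{T}^c$ can only raise $\phi_A$ (since $A$ is upper-closed on $S_1$) and only lower $\phi_B$ (since $\mathbf{T}^c$ lost a large value and $B$ is upper-closed on $S_2$). Thus $\phi_A$ and $\phi_B$ are respectively monotone nondecreasing and monotone nonincreasing with respect to the natural partial order on fixed-size multisets. The main obstacle is turning this oppositely-monotone structure into negative covariance, because the uniform law on fixed-size submultisets is \emph{not} an FKG measure on the Boolean lattice, so Harris/FKG does not apply directly. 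I would resolve this by observing that fixed-size submultisets form a lattice under the rank-wise $\min/\max$ operation and either invoking a rank-modular FKG variant or constructing an explicit monotone coupling: draw two independent copies $\mathbf{T}, \mathbf{T}'$, use the identity
\begin{equation*}
2\,\mathrm{Cov}(\phi_A(\mathbf{T}), \phi_B(\mathbf{T})) \;=\; \Ex\!\bigl[(\phi_A(\mathbf{T}) - \phi_A(\mathbf{T}'))\,(\phi_B(\mathbf{T}) - \phi_B(\mathbf{T}'))\bigr],
\end{equation*}
and transform $\mathbf{T}$ into $\mathbf{T}'$ via a monotone sequence of single-element increasing swaps, so that a telescoping decomposition along the coupling has each term sign-controlled by the swap lemma above.
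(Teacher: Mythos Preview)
The paper does not prove this fact; it is stated with a citation to Joag-Dev and Proschan and used as a black box, so there is no proof in the paper to compare your argument against.

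On the merits of your proposal itself: the reduction to indicators of upper sets and the conditioning on the multiset $\mathbf{T}$ of values assigned to $S_1$ are correct and standard, and you correctly identify that $\phi_A$ is nondecreasing and $\phi_B$ is nonincreasing in the rank-wise partial order on fixed-size submultisets. Of your two proposed finishes, the first one goes through cleanly: sorted $|S_1|$-tuples form a distributive sublattice of $\R^{|S_1|}$ under coordinatewise $\min/\max$, the uniform law on it is trivially log-supermodular, and FKG then gives $\mathrm{Cov}(\phi_A(\mathbf{T}),\phi_B(\mathbf{T}))\le 0$. Your second proposal, however, has a gap as written: two independent draws $\mathbf{T},\mathbf{T}'$ are generically incomparable in the rank order, so there is no chain of single-element \emph{increasing} swaps from $\mathbf{T}$ to $\mathbf{T}'$, and the telescoping terms along any connecting path do not all carry the same sign. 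Any repair of that route (e.g.\ building a Holley--Strassen monotone coupling) essentially reproves FKG, so you should commit to the first route.
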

\begin{fact}[Subsets of negatively associated random variables are negatively associated]
    \label{fact:NA-subset}
    For any $2 \leq m' \leq m$, if $\bx_1, \ldots, \bx_m$ are negatively associated, then $\bx_1, \ldots, \bx_{m'}$ are also negatively associated.
\end{fact}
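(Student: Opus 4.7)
The plan is a direct reduction: every instance of the negative-association inequality for the subsequence $\bx_1, \ldots, \bx_{m'}$ can be lifted to an NA instance for the full sequence $\bx_1, \ldots, \bx_m$. Given disjoint $T_1, T_2 \subseteq [m']$ and monotonically nondecreasing $T_j$-juntas $g_j : \R^{m'} \to \R$ for $j \in \{1,2\}$, I would define lifted test functions $f_j : \R^m \to \R$ by
\[ f_j(x_1, \ldots, x_m) \coloneqq g_j(x_1, \ldots, x_{m'}). \]
Three observations about these lifts: $T_1$ and $T_2$ remain disjoint subsets of $[m]$, since $[m'] \subseteq [m]$; each $f_j$ is a $T_j$-junta on $\R^m$, because it depends on exactly the same coordinates as $g_j$; and each $f_j$ is monotonically nondecreasing on $\R^m$, since it is constant (hence nondecreasing) in every coordinate of $[m] \setminus [m']$ and inherits nondecreasingness in the remaining coordinates from $g_j$.

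With these lifts in hand, invoking \Cref{def:NA} on the negatively associated family $\bx_1, \ldots, \bx_m$ with test functions $f_1, f_2$ gives $\Ex[f_1(\bx) f_2(\bx)] \leq \Ex[f_1(\bx)]\, \Ex[f_2(\bx)]$. By construction of the lifts, this inequality unpacks to
\[ \Ex\bigl[g_1(\bx_1, \ldots, \bx_{m'})\, g_2(\bx_1, \ldots, \bx_{m'})\bigr] \;\leq\; \Ex\bigl[g_1(\bx_1, \ldots, \bx_{m'})\bigr]\cdot \Ex\bigl[g_2(\bx_1, \ldots, \bx_{m'})\bigr]. \]
Since this holds for every choice of disjoint $T_1, T_2 \subseteq [m']$ and every pair of monotone nondecreasing juntas $g_1, g_2$ on $\R^{m'}$, the subsequence $\bx_1, \ldots, \bx_{m'}$ satisfies the NA condition.

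There is essentially no substantive obstacle here: the argument amounts to observing that the class of admissible test functions for the subsequence embeds into the class of admissible test functions for the full sequence via the trivial extension by dummy coordinates, so NA is inherited automatically. The only points worth checking explicitly are that this extension preserves the $T_j$-junta property and monotonicity, and both are immediate from the fact that the extension is constant in the added coordinates.
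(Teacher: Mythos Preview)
Your argument is correct: lifting the test functions by ignoring the extra coordinates preserves disjointness, the junta property, and monotonicity, so the NA inequality for the subsequence is a special case of the one for the full sequence. The paper does not actually prove this fact; it is quoted from \cite{JP83} as a standard property of negative association, and your verification from \Cref{def:NA} is the expected one.
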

\begin{fact}[Product consequence of negative association]
    \label{fact:NA-prod}
    For any negatively associated $\bx_1, \ldots, \bx_m$ and nondecreasing $f:\R \to \R_{\geq 0}$,
    \begin{equation*}
        \Ex\bracket*{\prod_{i \in [m]}f(\bx_i)} \leq \prod_{i \in [m]}\Ex\bracket*{f(\bx_i)}.
    \end{equation*}
\end{fact}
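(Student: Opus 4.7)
The plan is to prove this by induction on $m$, invoking the definition of negative association (\Cref{def:NA}) once per inductive step with a cleverly chosen partition, and using \Cref{fact:NA-subset} to maintain the NA hypothesis on the smaller collection.

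For the base case $m=1$ the inequality is an equality. For the inductive step, assuming the statement for $m-1$, I would partition the index set as $S_1=[m-1]$ and $S_2=\{m\}$ and define
\[ F_1(x) \coloneqq \prod_{i=1}^{m-1} f(x_i), \qquad F_2(x) \coloneqq f(x_m). \]
Then $F_1$ is an $S_1$-junta and $F_2$ is an $S_2$-junta. Crucially, $F_1$ is coordinatewise nondecreasing: increasing $x_j$ (for $j\in S_1$) increases the factor $f(x_j)\ge 0$ while leaving the other factors unchanged, and since all the other factors $f(x_i)\ge 0$, the product cannot decrease. Here is where the nonnegativity hypothesis on $f$ is essential. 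Similarly $F_2$ is nondecreasing. Applying the definition of negative association to the disjoint sets $S_1,S_2$ and the nondecreasing junta functions $F_1,F_2$ yields
\[ \Ex\!\left[\,\prod_{i=1}^{m} f(\bx_i)\right] \;=\; \Ex[F_1(\bx)F_2(\bx)] \;\le\; \Ex[F_1(\bx)]\,\Ex[F_2(\bx)] \;=\; \Ex\!\left[\,\prod_{i=1}^{m-1} f(\bx_i)\right]\cdot \Ex[f(\bx_m)]. \]

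To close the induction, I would apply \Cref{fact:NA-subset} to conclude that $\bx_1,\ldots,\bx_{m-1}$ are themselves negatively associated, which permits invoking the inductive hypothesis on the first factor to bound $\Ex[\prod_{i=1}^{m-1} f(\bx_i)] \le \prod_{i=1}^{m-1}\Ex[f(\bx_i)]$. Multiplying through gives the desired inequality.

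The only subtle point---and the main thing to get right---is the monotonicity verification for $F_1$, which fails without nonnegativity of $f$ (e.g.\ a product of two nondecreasing functions with mixed signs need not be nondecreasing). Once that is handled, the proof is a clean two-line induction. I would also note that because we only ever invoke NA with a singleton second block, the argument in fact uses only the weaker ``two-block'' form of negative association with one block of size one, rather than the full strength of \Cref{def:NA}.
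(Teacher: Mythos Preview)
Your argument is correct and is the standard inductive proof of this fact. Note, however, that the paper does not actually supply a proof of \Cref{fact:NA-prod}: it is listed among the ``useful facts about negatively associated random variables given in \cite{JP83}'' and simply cited. So there is no paper-proof to compare against; your write-up would serve as a perfectly good self-contained justification, and your remark that only the two-block form of NA (with one singleton block) is needed is a nice observation.
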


Given the above, facts about negative associated random variables, we can now prove \Cref{lem:multivariate-upper-sym}.
\begin{proof}[Proof of \Cref{lem:multivariate-upper-sym}]
    We expand $\stab_{\mu, \vrho}(g)$ using the Fourier spectrum of $g$ (\Cref{prop:stab-fourier}),
    \begin{equation*}
        \Stab_{\mu, \vrho}(g) = \Ex_{\bS \sim \mcS_{\mu}(g)}[(\vrho)^{\bS}].
    \end{equation*}
    Let $\bell$ be the distributed the same as $|\bS|$ for $\bS \sim \mcS_{\mu}(g)$. Then,
    \begin{equation*}
        \Stab_{\mu, \vrho}(g) = \Ex_{\bell}\bracket*{\Ex_{\bS \sim \mcS_{\mu}(g)}[(\vrho)^{\bS} \mid |\bS| = \ell]}.
    \end{equation*}
    Since $g$ is symmetric, for any $|S_1| = |S_2|$, $\hat{g}(S_1) = \hat{g}(S_2)$. As a result the distribution of $\bS \sim \mcS_{\mu}(g)$ conditioned on $|\bS| = \ell$ is simply a uniformly random size-$\ell$ subset of $[k]$. Formally,
    \begin{equation*}
         \Stab_{\mu, \vrho}(g) = \Ex_{\bell}\bracket*{\Ex_{\bS \sim \binom{[k]}{\bell}}[(\vrho)^{\bS}]}.
    \end{equation*}
    Let $\bx_1, \ldots, \bx_k$ be a uniform random permutation of $\vrho_1, \ldots, \vrho_k$. Then, the distribution of $(\vrho)^{\bS}$ for $\bS \sim \binom{[k]}{\ell}$ is identical to that of $\prod_{i \in [\ell]}\bx_i$. By \Cref{fact:NA-perm,fact:NA-subset}, $\bx_1, \ldots, \bx_\ell$ are negatively associated, and so,
    \begin{equation*}
        \Ex_{\bS \sim \binom{[k]}{\ell}}[(\vrho)^{\bS}] = \Ex\bracket*{\prod_{i \in [\ell]}\bx_i} \overset{\mathrm{(\Cref{fact:NA-prod})}}{\leq} \prod_{i \in [\ell]}\Ex[\bx_i] = \paren*{\rhoup}^\ell.
    \end{equation*}
    Therefore,
    \begin{equation*}
         \Stab_{\mu, \vrho}(g) \leq \Ex_{\bell}\bracket*{\paren*{\rhoup}^{\bell} } = \stab_{\mu, \rhoup}(g).\qedhere
    \end{equation*}
\end{proof}

\subsection{The lower bound on the multivariate noise stability of symmetric functions}
\begin{lemma}[Lower bound of \Cref{lem:intro multivariate to univariate}]
    \label{lem:multivariate-lower-sym}
    For any transitive $g:\bits^k \to \bits$, $\mu \in (-1,1)$, and $\vrho \in [0,1]^k$, let $\rholow \coloneqq \paren*{\prod_{i \in [k]}\vec{\rho}_i}^{1/k}$. Then,
    \begin{equation*}
        \stab_{\mu, \vrho}(g)\geq \stab_{\mu, \rholow}(g).
    \end{equation*}
\end{lemma}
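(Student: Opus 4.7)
\textbf{Proof plan for \Cref{lem:multivariate-lower-sym}.} The plan is to expand both sides via the Fourier formula of \Cref{prop:stab-fourier} and exploit the fact that the Fourier weights $\hat{g}_\mu(S)^2$ are invariant under the automorphism group $\Gamma$ of $g$. Write
\[
    \Stab_{\mu,\vec{\rho}}(g) \;=\; \sum_{\ell = 0}^{k} \sum_{|S|=\ell} \hat{g}_\mu(S)^2 \,\vec{\rho}^{\,S}, \qquad \Stab_{\mu,\rholow}(g) \;=\; \sum_{\ell = 0}^{k} (\rholow)^{\ell} \sum_{|S|=\ell} \hat{g}_\mu(S)^2.
\]
It therefore suffices to prove the level-wise inequality
\[
    \sum_{|S|=\ell} \hat{g}_\mu(S)^2 \,\vec{\rho}^{\,S} \;\geq\; (\rholow)^\ell \sum_{|S|=\ell} \hat{g}_\mu(S)^2 \qquad \text{for every } \ell \in \{0, 1, \ldots, k\}.
\]

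First I would record the basic symmetry: any automorphism $\sigma \in \Gamma$ of $g$ satisfies $\hat{g}_\mu(\sigma(S)) = \hat{g}_\mu(S)$, which follows from the change-of-variables $y \mapsto \sigma^{-1}(y)$ in the Fourier formula and the invariance of the product measure $(\pi_\mu)^k$ under coordinate permutations. Then I average the left-hand side over $\Gamma$:
\[
    \sum_{|S|=\ell} \hat{g}_\mu(S)^2 \,\vec{\rho}^{\,S} \;=\; \frac{1}{|\Gamma|} \sum_{|S|=\ell} \hat{g}_\mu(S)^2 \sum_{\sigma \in \Gamma} \vec{\rho}^{\,\sigma(S)}.
\]

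The crux is to show that for each fixed $S$ with $|S| = \ell$,
\[
    \frac{1}{|\Gamma|} \sum_{\sigma \in \Gamma} \vec{\rho}^{\,\sigma(S)} \;\geq\; (\rholow)^\ell.
\]
I would apply AM--GM to the left-hand side, reducing the task to showing that the geometric mean $\bigl(\prod_{\sigma \in \Gamma} \vec{\rho}^{\,\sigma(S)}\bigr)^{1/|\Gamma|}$ equals $(\rholow)^\ell$. Rewriting
\[
    \prod_{\sigma \in \Gamma} \vec{\rho}^{\,\sigma(S)} \;=\; \prod_{i \in [k]} \vec{\rho}_i^{\,c_i}, \qquad \text{where } c_i \coloneqq |\{\sigma \in \Gamma : i \in \sigma(S)\}|,
\]
the key observation is that transitivity of $g$ forces $c_i$ to be independent of $i$: if $\tau \in \Gamma$ sends $i$ to $j$, then composition with $\tau$ furnishes a bijection between $\{\sigma : i \in \sigma(S)\}$ and $\{\sigma : j \in \sigma(S)\}$. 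A double-counting argument $\sum_i c_i = \sum_\sigma |\sigma(S)| = |\Gamma|\,\ell$ then pins down the common value $c_i = |\Gamma| \ell / k$, which gives exactly $\bigl(\prod_\sigma \vec{\rho}^{\,\sigma(S)}\bigr)^{1/|\Gamma|} = \prod_i \vec{\rho}_i^{\,\ell/k} = (\rholow)^\ell$, as required.

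The main conceptual step is the transitivity-via-double-counting observation that equalizes the exponents $c_i$; everything else is bookkeeping. This is cleaner than the upper-bound proof because we do not need negative association: AM--GM on the orbit sum suffices, and transitivity (rather than full symmetry) is exactly the hypothesis that makes the orbit of every coordinate under $\Gamma$ hit all of $[k]$ equally often.
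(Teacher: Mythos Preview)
Your proof is correct and follows essentially the same approach as the paper's: both arguments reduce to a level-$\ell$ inequality, invoke convexity (your AM--GM is the paper's Jensen applied to $\exp$), and use transitivity to show that each coordinate $i$ contributes with weight $\ell/k$. The only organizational difference is that the paper works with the spectral sample $\bS \sim \mcS_\mu(g)$ conditioned on $|\bS|=\ell$ and shows $\Pr[i\in\bS\mid|\bS|=\ell]=\ell/k$, whereas you average explicitly over the automorphism group $\Gamma$ and double-count to get $c_i = |\Gamma|\ell/k$; these are two phrasings of the same symmetry argument.
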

Note that every transitive $g$ is also symmetric, but the reverse does not hold.
\begin{proof}
    Similarly to the proof of \Cref{lem:multivariate-upper-sym}, let $\bell$ be the distribution of $|\bS|$ when $\bS \sim \mcS_{\mu}(g)$. Then,
    \begin{equation*}
        \Stab_{\mu, \vrho}(g) = \Ex_{\bell}\bracket*{\Ex_{\bS \sim \mcS_{\mu}(g)}[(\vrho)^{\bS} \mid |\bS| = \bell]}.
    \end{equation*}
    For each $S \subseteq [k]$, we'll use $\chi(S) \in \zo^k$ to denote the characteristic vector of $S$, meaning $\chi(S)_i \coloneqq \Ind[i \in S]$. Then,
    \begin{align*}
        \Stab_{\mu, \vrho}(g) &=  \Ex_{\bell}\bracket*{\Ex_{\bS \sim \mcS_{\mu}(g)}\bracket*{\prod_{i \in [k]} (\vrho_i)^{\chi(\bS)_i} \,\bigg|\, |\bS| = \bell}}\\
        &=  \Ex_{\bell}\bracket*{\Ex_{\bS \sim \mcS_{\mu}(g)}\bracket*{\exp\paren*{\sum_{i \in [k]} \chi(\bS)_i \log(\vrho_i)}\,\bigg|\,|\bS| = \bell}}\\
        &\geq  \Ex_{\bell}\bracket*{\exp\paren*{\Ex_{\bS \sim \mcS_{\mu}(g)}\bracket*{\sum_{i \in [k]} \chi(\bS)_i \log(\vrho_i)\,\bigg|\,|\bS| = \bell}}} \tag{Jensen's inequality}\\
        &=  \Ex_{\bell}\bracket*{\exp\paren*{\sum_{i \in [k]} \log(\vrho_i) \Prx_{\bS \sim \mcS_{\mu}(g)}\bracket*{i \in \bS \mid |\bS| = \bell}}}. \tag{Linearity of expectation}
    \end{align*}
    Fix any $i_1, i_2 \in [k]$ and level $\ell \in [0,k]$. Since $g$ is transitive, there is an automorphism, $\sigma$, of $g$ sending $i_1$ to $i_2$. Since $\sigma$ is an automorphism of $g$, for any $S \subseteq [k]$, for $\bS \sim \mcS_{\mu}(g)$, $\Pr[\bS = S] = \Pr[\bS = \sigma(S)]$. As a result
    \begin{equation*}
        \Prx_{\bS \sim \mcS_{\mu}(g)}\bracket*{i_1 \in \bS \mid |\bS| = \ell} = \Prx_{\bS \sim \mcS_{\mu}(g)}\bracket*{i_2 \in \bS \mid |\bS| = \ell},
    \end{equation*}
    and so $\Prx_{\bS \sim \mcS_{\mu}(g)}\bracket*{i \in \bS \mid |\bS| = \ell}$ must be the same for all $i \in [k]$. The sum of these probabilities is $\ell$, meaning each is $\frac{\ell}{k}$. This allows us to bound,
    \begin{align*}
        \Stab_{\mu, \vrho}(g) &\geq \Ex_{\bell}\bracket*{\exp\paren*{\sum_{i \in [k]} \log(\vrho_i) \cdot \frac{\bell}{k}}} \\
        &=\Ex_{\bell}\bracket*{\prod_{i \in [k]}\paren*{\vrho_i}^{\frac{\bell}{k}}} \\
        &=\Ex_{\bell}\bracket*{(\rholow)^{\bell}} =  \stab_{\mu, \rholow}(g).\qedhere
    \end{align*}
\end{proof}
\subsection{Bounding the $(\delta,\eps)$-noise stability of symmetric functions}
Recall, from \Cref{def:delta-eps-NS}, that the $(\delta,\eps)$-noise stability of a function $g:\bits^k\to\bits$ is the quantity
\[ \max\big\{ \Stab_{\vec{\rho}}(g)\colon  \text{at least $\delta$-fraction of $\vec{\rho}$'s coordinates are at most $1-2\eps$}\big\}.\] 
We prove \Cref{cor:delta-eps-noise-stability-intro}, restated below.

\begin{corollary}[Formal version of \Cref{cor:delta-eps-noise-stability-intro}]
    \label{cor:AM-GM-close}
    For any symmetric function $g:\bits^k \to \bits$, $\delta \in (0,1)$, and $\eps \in (0,1/2)$, let $\delta'\coloneqq \frac{\ceil{k\delta}}{k}$ be $\delta$ rounded up to the nearest integer multiple of $1/k$. Then, the $(\delta, \eps)$-noise stability of $g$ is equal to $\stab_{\mu, \rho^\star}(g)$ for some $\rho^\star$ satisfying
    \begin{equation*}
        1 - 2\delta'\eps - 4\eps^2 \leq \rho^\star \leq 1 - 2\delta'\eps.
    \end{equation*}
\end{corollary}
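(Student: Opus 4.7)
The plan is to (i) identify the extremal $\vrho$ achieving the $(\delta,\eps)$-noise stability, (ii) sandwich $\Stab_{\mu,\vrho}(g)$ between two univariate noise stabilities via \Cref{lem:intro multivariate to univariate}, and (iii) show the sandwich is narrow enough to force $\rho^\star$ into the claimed window.

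By coordinatewise monotonicity of multivariate noise stability (\Cref{cor:multi-monotone}), the maximum in \Cref{def:delta-eps-NS} is attained when we push every coordinate of $\vrho$ as high as the constraint allows: exactly $\ceil{\delta k}=k\delta'$ of them are set to $1-2\eps$ and the remaining $k(1-\delta')$ are set to $1$. For this $\vrho$ the arithmetic mean is $\rhoup = \delta'(1-2\eps)+(1-\delta') = 1 - 2\delta'\eps$ and the geometric mean is $\rholow = (1-2\eps)^{\delta'}$. Applying \Cref{lem:intro multivariate to univariate} to the symmetric (hence transitive) function $g$ yields
\[
\Stab_{\mu,(1-2\eps)^{\delta'}}(g) \;\le\; \Stab_{\mu,\vrho}(g) \;\le\; \Stab_{\mu,\,1-2\delta'\eps}(g).
\]
Because $\rho\mapsto \Stab_{\mu,\rho}(g) = \sum_S \hat{g}_\mu(S)^2 \rho^{|S|}$ is continuous and nondecreasing on $[0,1]$ (\Cref{prop:stab-fourier}), the intermediate value theorem produces $\rho^\star\in[(1-2\eps)^{\delta'},\,1-2\delta'\eps]$ with $\Stab_{\mu,\rho^\star}(g)=\Stab_{\mu,\vrho}(g)$. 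The upper bound $\rho^\star \le 1 - 2\delta'\eps$ is then immediate.

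The remaining step is the scalar inequality $(1-2\eps)^{\delta'} \ge 1 - 2\delta'\eps - 4\eps^2$ for $\delta'\in[0,1]$, $\eps\in[0,1/2)$, which I expect to be the main (if still elementary) technical point. In the main range $\eps\le 1/4$, the calculus bound $\ln(1-x)\ge -x - x^2$ on $[0,1/2]$ (easily verified by differentiating $\ln(1-x)+x+x^2$) combined with $e^y\ge 1+y$ gives
\[
(1-2\eps)^{\delta'} = e^{\delta'\ln(1-2\eps)} \;\ge\; 1 + \delta'\ln(1-2\eps) \;\ge\; 1 - 2\delta'\eps - 4\delta'\eps^2 \;\ge\; 1 - 2\delta'\eps - 4\eps^2.
\]
The tail range $\eps\in(1/4,1/2)$ is dispatched by a short case split on whether $\delta' \ge 1-2\eps$: in the ``large $\delta'$'' branch the trivial inequality $(1-2\eps)^{\delta'} \ge 1-2\eps$ already beats $1 - 2\delta'\eps - 4\eps^2$ (since $1-2\eps \ge 1 - 2\delta'\eps - 4\eps^2 \Leftrightarrow \delta' \ge 1-2\eps$), while in the complementary branch the tangent bound $\ln(1-x)\ge -x/(1-x)$ yields $(1-2\eps)^{\delta'}\ge 1 - 2\delta'\eps/(1-2\eps)$, and a direct calculation shows this exceeds $1-2\delta'\eps-4\eps^2$ precisely when $\delta'\le 1-2\eps$. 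Steps (i) and (ii) are essentially immediate from the cited machinery, so the only real work is this scalar verification.
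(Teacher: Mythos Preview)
Your proof is correct and follows the same overall route as the paper: pin down the extremal $\vrho$ by monotonicity, apply the AM--GM sandwich of \Cref{lem:intro multivariate to univariate}, and invoke continuity/monotonicity of $\rho\mapsto\Stab_{\mu,\rho}(g)$ to land $\rho^\star$ in $[(1-2\eps)^{\delta'},\,1-2\delta'\eps]$. The only point of departure is how you verify the scalar bound $(1-2\eps)^{\delta'}\ge 1-2\delta'\eps-4\eps^2$: you do a case split on the size of $\eps$ combined with logarithm estimates, whereas the paper proves the single sharper inequality $(1-x)^c\ge 1-cx-(1-c)x^2$ for all $c,x\in[0,1]$ by observing that $q_c(x)\coloneqq(1-x)^c-1+cx+(1-c)x^2$ vanishes at $x\in\{0,1\}$ and that $q_c'$ is concave, so $q_c$ has at most two critical points and cannot dip below zero between its endpoints. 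Your argument is a bit longer but equally valid; the paper's buys a uniform one-line inequality with no case analysis.
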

\begin{proof}[Proof of \Cref{cor:AM-GM-close}]
    Since stability is monotone (\Cref{cor:multi-monotone}), the $(\delta, \eps)$-noise stability of $g$ is its multivariate noise stability with a correlation vector $\vrho$ where $\delta'$ fraction of the coordinates are $1 - 2\eps$ and the remainder are $1$. The arithmetic mean of this vector is exactly $1 - 2\delta'\eps$, and its geometric mean is $(1 - 2\eps)^{\delta'}$. The desired result then follows from \Cref{lem:multivariate-lower-sym,lem:multivariate-upper-sym} and the inequality
    \begin{equation*}
        (1 - x)^c \geq 1-cx - (1-c)x^2 \geq 1 - cx - x^2
    \end{equation*}
    which holds for all $c,x \in [0,1]$. To prove this inequality, it is sufficient that $q_c(x) \geq 0$ for all $x,c \in [0,1]$ where
    \begin{equation*}
        q_c(x) \coloneqq (1-x)^c - 1 +cx + (1-c)x^2.
    \end{equation*}
To see this, we note that for any $c \in [0,1]$, the function $q_c(x)$ has roots at $x = 0$ and $x=1$. It is furthermore increasing at $x = 0$, and decreasing at $x = 1$. If $q_c(x)$ were to be negative for any $x \in [0,1]$, then, it would need to have at least $3$ local extrema. However, the derivative $q_c'(x)$ is concave, so it can only be zero at a maximum of $2$ points. This proves the desired inequality. (If the reader prefers, \Cref{fig:q-proof-by-pic} gives a ``proof by picture".) 
\end{proof}

\begin{figure}[htb]
\centering

    \begin{tikzpicture}
    \begin{axis}[
        xlabel=$x$,
        ylabel style={rotate=-90},
        ylabel=$q_c(x)$,
        ymin=0,
        xmin=0,
        xmax=1,
        width=10cm,
        height=6cm,
        legend pos=north west,
        legend style={font=\footnotesize},
        legend cell align=left,
    ]
    
    \addplot[
        domain=0:1,
        samples=100,
        color=blue!80!black,
        densely dashed,
        ultra thick,
        ]{(1-x)^0.3 - 1 + 0.3*x + 0.7*x^2};
    \addlegendentry{$c=0.3$}
    
    \addplot[
        domain=0:1,
        samples=100,
        color=red!80!black,
        dotted,
        ultra thick,
        ]{(1-x)^0.5 - 1 + 0.5*x + 0.5*x^2};
    \addlegendentry{$c=0.5$}
    
    \addplot[
        domain=0:1,
        samples=100,
        color=green!70!black,
        densely dotted,
        ultra thick,
        ]{(1-x)^0.7 - 1 + 0.7*x + 0.3*x^2};
    \addlegendentry{$c=0.7$}
    
    \end{axis}
    \end{tikzpicture}
    \caption{Plots of $q_c$ defined in the proof of \Cref{cor:AM-GM-close} for various values of $c$, showing that $q_c(x)\ge 0$ for all $x \in [0,1]$.}
    \label{fig:q-proof-by-pic}

\end{figure}
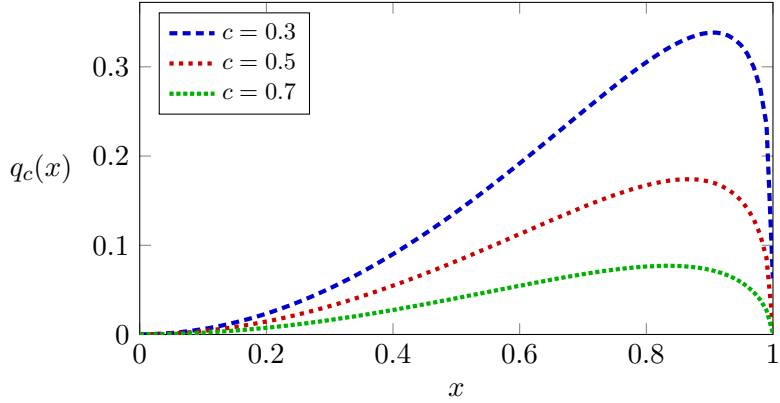

\section{Composition theorems yield boosters for property testing}

\subsection{A general boosting framework}
Let $\mathcal{P}=\left\{\mathcal{P}_s\right\}_{s\in \N}$ be a parametrized property of Boolean functions. For a function $f:\bits^n\to\bits$ and distribution $\mathcal{D}$ over $\bits^n$, we write
$$
\dist_{\mathcal{D}}(f,\mathcal{P}_s)\coloneqq \min_{h\in \mathcal{P}_s}\error_{\mathcal{D}}(f,h)
$$
to denote $f$'s distance to $\mathcal{P}_s$ over $\mathcal{D}$. We are interested in the \textit{relaxed} testing regime for size parameters $s>s'$ where we want to decide whether an unknown target function $f$ belongs to $\mathcal{P}_s$ or is $\eps$-far from $\mathcal{P}_{s'}$ under $\mathcal{D}$: $\dist_{\mathcal{D}}(f,\mathcal{P}_{s'})>\eps$ (recall \Cref{def:testing of P under D}). We say that $\mathcal{P}$ is $(\eps,s,s')$-testable if there exists an algorithm for $(\eps,s,s')$-testing $\mathcal{P}$ for every distribution $\mathcal{D}$. As $\eps\to 0$, the gap between the Yes and No cases becomes smaller and $(\eps,s,s')$-testing becomes more difficult. The main result of this section is that if $\mathcal{P}$ ``behaves well'' under function composition, then testers for large $\eps$ can be boosted to testers for the more challenging regime of small $\eps$. We will specialize our attention to properties which behave \textit{linearly} with respect to function composition.

\begin{definition}[Linear with respect to composition]
\label{def:linear wrt composition}
    A parametrized property $\mathcal{P}=\left\{\mathcal{P}_s\right\}_{s\in \N}$ behaves linearly (with respect to function composition) if 
    $$
    f\in\mathcal{P}_s\quad \Rightarrow\quad g\circ f\in\mathcal{P}_{k\cdot s}
    $$
    for all $g:\bits^k\to\bits$, $f:\bits^n\to\bits$, and $s\in\N$.
\end{definition}

\paragraph{Examples.}{
Being an $s$-junta, depth-$s$ decision tree, depth-$s$ formula, or degree-$s$ polynomial are all properties of Boolean functions which behave linearly with respect to composition. As is often the case, it is straightforward to show from their definitions that these properties behave linearly. Many properties which do not \textit{a priori} behave linearly can be converted into ones that do by applying an appropriate transformation to their size. For example, the property $\mathcal{P}_s=\{\text{size-exp}(s)\text{ decision trees}\}$ behaves linearly. 
}

\paragraph{Strong composition theorems for properties.}{
A property $\mathcal{P}$ which behaves linearly with respect to function composition is said to admit a \textit{strong composition theorem} if the upper bound from \Cref{def:linear wrt composition} can be shown to be nearly tight. This definition generalizes the relation \ref{eq:strong amplification intro}. 

\begin{definition}[$(\smalle,\largee,\lambda)$-composition theorem]
\label{def:strong composition theorem}
    A parametrized property $\mathcal{P}=\left\{\mathcal{P}_s\right\}_{s\in \N}$ admits an $(\smalle,\largee,\lambda)$-composition theorem with respect to $g:\bits^k\to \bits$ for $\smalle,\largee\in (0,1)$ and a constant $\lambda>0$ if
    $$
    \dist_{\mathcal{D}}(f,\mathcal{P}_s)>\smalle\quad\Rightarrow\quad \dist_{\mathcal{D}^k}(g\circ f,\mathcal{P}_{\lambda ks})>\largee
    $$
    for all $f:\bits^n\to\bits$ and distributions $\mathcal{D}$ over $\bits^n$.
\end{definition}
Strong composition theorems depend on the combining function $g$. For example, if $g$ is a constant function then one would not expect the upper bound from \Cref{def:linear wrt composition} to be tight. For this reason, the dependence on $g$ is made explicit in the definition of strong composition theorem. 
 Roughly speaking, the definition says that if a property $\mathcal{P}$ behaves linearly and admits a strong composition theorem with respect to $g$, then composing with $g$ turns a function in $\mathcal{P}_s$ into one in $\mathcal{P}_{s k}$ and turns a function slightly far from $\mathcal{P}_{s}$ into one very far from $\mathcal{P}_{ \Theta(s k)}$. For a fixed $\largee$, having an $(\smalle,\largee,\lambda)$-composition theorem with respect to $g$ becomes stronger as $\smalle$ approaches $0$. In general, we are interested in $(\smalle,\largee,\lambda)$-composition theorems when $\largee\gg \smalle$. The parameter $\lambda$ is built into the definition to tolerate a small amount of slack between the upper and lower bounds on $g\circ f$. For many applications, this constant factor is necessary. We are now equipped to state our main boosting theorem.
}




    

\begin{theorem}[Boosting property testers, formal version of \Cref{thm:generic boost intro}]
\label{thm:generic boosting}
Let $\mathcal{P}=\{\mathcal{P}_s\}_{s\in\N}$ be a property which behaves linearly and admits an $(\smalle,\largee,\lambda)$-composition theorem with respect to $g:\bits^k\to\bits$. If $\mathcal{P}$ is $(\largee,s,s')$-testable in $q(\largee,s,s')$ queries, then it is $(\smalle,s,\lambda^{-1} s')$-testable using $k\cdot q(\largee,ks,ks')$ many queries. 
\end{theorem}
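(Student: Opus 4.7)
The plan is to construct the $(\smalle, s, \lambda^{-1} s')$-tester by a black-box reduction that runs the given $(\largee, ks, ks')$-tester on the composed function $g \circ f$. Given query access to the unknown target $f:\bits^n \to \bits$ and i.i.d.\ samples from $\mathcal{D}$, the booster simulates query access to $g\circ f : (\bits^n)^k \to \bits$ and samples from $\mathcal{D}^k$ in the natural way: each query to $g\circ f$ at an input $(x^{(1)}, \ldots, x^{(k)})$ is answered by making $k$ queries to $f$ (one per block) and then computing $g$ on the resulting $k$-bit string, and each sample from $\mathcal{D}^k$ is produced by drawing $k$ independent samples from $\mathcal{D}$. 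This immediately accounts for the multiplicative factor of $k$ in the query complexity.

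Next I would verify the Yes/No dichotomy. In the Yes case, the assumption $f \in \mathcal{P}_s$ together with the linearity of $\mathcal{P}$ under composition (\Cref{def:linear wrt composition}) yields $g \circ f \in \mathcal{P}_{ks}$, so the large-distance tester accepts with the promised probability. In the No case, we are guaranteed that $\dist_{\mathcal{D}}(f, \mathcal{P}_{\lambda^{-1} s'}) > \smalle$. Applying the $(\smalle, \largee, \lambda)$-composition theorem (\Cref{def:strong composition theorem}) with size parameter $\lambda^{-1} s'$ in place of $s$ gives
\[
\dist_{\mathcal{D}^k}(g \circ f,\, \mathcal{P}_{\lambda k \cdot \lambda^{-1} s'}) \;=\; \dist_{\mathcal{D}^k}(g \circ f,\, \mathcal{P}_{k s'}) \;>\; \largee,
\]
so $g \circ f$ is $\largee$-far from $\mathcal{P}_{k s'}$ under $\mathcal{D}^k$, and the large-distance tester rejects with the promised probability.

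Combining these two cases shows that invoking the $(\largee, ks, ks')$-tester on $(g\circ f, \mathcal{D}^k)$ solves the $(\smalle, s, \lambda^{-1} s')$-testing problem for $(f, \mathcal{D})$. The query bound $k \cdot q(\largee, ks, ks')$ follows because each of the at most $q(\largee, ks, ks')$ queries the outer tester makes to $g\circ f$ is simulated with $k$ queries to $f$, and sample complexity scales analogously.

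The main subtlety (rather than an obstacle) is lining up the two size parameters correctly: one must invoke the composition theorem at size $\lambda^{-1} s'$ so that the amplified size $\lambda k \cdot \lambda^{-1} s'$ matches exactly the ``No" parameter $ks'$ expected by the outer tester. Once this bookkeeping is done, the argument is a direct reduction with no further technicalities, and it preserves the distributional setting (uniform-distribution or distribution-free) because $\mathcal{D}^k$ is constructed in a black-box fashion from $\mathcal{D}$.
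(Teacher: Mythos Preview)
Your proposal is correct and follows essentially the same approach as the paper: run the $(\largee,ks,ks')$-tester on $g\circ f$ by simulating each of its queries with $k$ queries to $f$ and each of its samples with $k$ draws from $\mathcal{D}$, then use linearity for the Yes case and the composition theorem (instantiated at size $\lambda^{-1}s'$) for the No case. The bookkeeping you highlight---matching $\lambda k\cdot \lambda^{-1}s'$ to $ks'$---is exactly what the paper does as well.
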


\begin{proof}
Let $\weak$ be an algorithm for $(\largee,s,s')$-testing $\mathcal{P}$. Given queries to a function $f:\bits^{n}\to\bits$ and random samples from a distribution $\mathcal{D}$ over $\bits^n$, we $(\smalle,s, \lambda^{-1} s')$-test $\mathcal{P}$ using the procedure in \Cref{fig:boosting testers} where $\weak$ is given an instance of $(\largee,ks,ks')$-testing $\mathcal{P}$. 

\begin{figure}[h!]
\begin{tcolorbox}[colback = white,arc=1mm, boxrule=0.25mm]
\vspace{3pt}
$\strong$, a boosted tester for $\mathcal{P}$:
\begin{itemize}[leftmargin=10pt,align=left]
\item[\textbf{Given:}] $\weak$, a weak tester for $\mathcal{P}$; queries to $f$; and random samples from $\mathcal{D}$
\item[\textbf{Run:}]Simulate $\weak$ providing it with
\begin{itemize}[align=left,labelsep*=0pt]
    \item \textit{queries}: return $g(f(x^{(1)}),\ldots,f(x^{(k)}))$ for a query $(x^{(1)},\ldots,x^{(k)})\in(\bits^{n})^k$; and
    \item \textit{random samples}: return $(\bx^{(1)},\ldots,\bx^{(k)})\sim \mathcal{D}^k$ from $k$ independent samples $\bx^{(i)}\sim \mathcal{D}$.
\end{itemize}
\item[\textbf{Output:}] Yes if and only if $\weak$ outputs Yes
\end{itemize}
\vspace{3pt}
\end{tcolorbox}
\medskip
\caption{Boosting a weak tester via function composition.}
\label{fig:boosting testers}
\end{figure}

\paragraph{Query complexity.}{
The target $g\circ f:\bits^{nk}\to\bits$ is a $(\largee, ks,ks')$-testing instance for $\weak$. Therefore, $\weak$ makes $q(\largee,ks,ks')$ queries to the target $g\circ f:\bits^{nk}\to\bits$ before terminating. Our tester makes $k$ queries to $f$ for each query to $g\circ f$. So our tester for $f$ makes $k\cdot q(\largee,ks,ks')$ queries in total.
}
\paragraph{Correctness.}{In the Yes case, $f\in \mathcal{P}_{s}$. We then have $g\circ f\in \mathcal{P}_{sk}$ since $\mathcal{P}$ behaves linearly. This ensures that $\weak$ outputs Yes.  In the No case, $\dist_{\mathcal{D}}(f,\mathcal{P}_{s'/\lambda})>\smalle$. We then have $\dist_{\mcD^k}(g\circ f,\mathcal{P}_{ks'})>\lambda$ since $\mathcal{P}$ admits an $(\largee,\smalle,\lambda)$-composition theorem. This ensures that $\weak$ outputs No.}
\end{proof}

\subsection{Implications for current landscape of junta testing} 

Our results have new implications for \textit{tolerantly} testing juntas. In this regime, the Yes case of \Cref{def:testing of P under D} is relaxed to only require that $f$ is \textit{close} to an $r$-junta over $\mathcal{D}$. 

\begin{definition}[Tolerantly $(\yese,\noe,r,r')$-testing juntas]
\label{def:tolerant junta testing}
    Given parameters $r\le r'$ and $\yese\le \noe$, queries to an unknown function $f:\bits^n\to\bits$, and random samples from a distribution $\mathcal{D}$ over $\bits^n$, distinguish between
    \begin{itemize}
        \item Yes: $f$ is $\yese$-close to being an $r$-junta under $\mathcal{D}$, and
        \item No: $f$ is $\noe$-far from being an $r'$-junta under $\mathcal{D}$.
    \end{itemize}
\end{definition}

In all of our applications, we will be using \Cref{thm:generic boosting}, or a variant of it, with $g$ set to $\XOR_k$. For this reason, we start with some useful properties about the noise stability of parity.

\subsubsection{Noise stability of parity under general product distributions}
\begin{lemma}
    \label{lem:par-boost-error}
    For any $f:\bits^n \to \bits$, distribution $\mcD$ over $\bits^n$, junta budget $R$, and $R$-junta $h$,
    \begin{equation*}
        \error_{\mcD^k}(\XOR_k \circ f, h) \geq \min_{r_1+\cdots+r_k=R}\frac{1 - \sqrt{\prod_{i \in [k]}\paren*{1 - 2\cdot\error_{\mcD}(f, \tilde{f}_{r_i})}}}{2}.
    \end{equation*}
\end{lemma}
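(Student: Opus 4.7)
The plan is to invoke Lemma~\ref{lem:lower} (the upper bound on advantage, part 2 of Theorem~\ref{thm:strong-formal}) with $g = \XOR_k$, and then exploit the multiplicative structure of XOR to simplify the multivariate noise stability into a clean closed form. Given any $R$-junta $h$, let $S \subseteq [n]\times[k]$ be its set of relevant variables, and let $r_i \coloneqq |S \cap (\text{block } i)|$, so $\sum_i r_i \leq R$. Since $h$ is an $S$-junta, $\adv_{\mcD^k}(\XOR_k \circ f, h) \leq \adv_{\mcD^k}(\XOR_k \circ f, S)$, and using $\error = (1-\adv)/2$ reduces the task to upper bounding this latter advantage.

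Next, I would compute $\Stab_{\mu, \vrho}(\XOR_k)$ explicitly. Writing $\XOR_k(y) = \prod_i y_i$ and using independence across coordinates of the pairs $(\by_i, \bz_i)$ from Definition~\ref{def:multi-noise},
\[ \Stab_{\mu, \vrho}(\XOR_k) = \prod_{i \in [k]} \Ex[\by_i \bz_i] = \prod_{i \in [k]} \bigl(\vrho_i + (1-\vrho_i)\mu^2\bigr), \]
since $\by_i \bz_i = 1$ when $\bz_i = \by_i$ and $\Ex[\by_i \bz_i] = \mu^2$ when $\bz_i$ is redrawn independently. The key telescoping observation is that, upon substituting $\beta_i = (\adv_\mcD(f, S_i) - \mu^2)/(1-\mu^2)$ as in Lemma~\ref{lem:lower}, the $i$th factor becomes $\mu^2 + \beta_i(1-\mu^2) = \adv_\mcD(f, S_i)$, so
\[ \stab_{\mu, \beta}(\XOR_k) = \prod_{i \in [k]} \adv_\mcD(f, S_i). \]
(The $\max(0,\cdot)$ guard in the definition of $\beta_i$ is vacuous here, because the constant approximator $\mathrm{sign}(\mu)$ already achieves advantage $|\mu| \geq \mu^2$.) Applying Lemma~\ref{lem:lower} together with $\adv_\mcD(f, S_i) \leq \adv_\mcD(f, r_i) = 1 - 2\error_\mcD(f, \tilde{f}_{r_i})$ then yields
\[ \adv_{\mcD^k}(\XOR_k \circ f, h) \;\leq\; \sqrt{\textstyle\prod_i (1 - 2\error_\mcD(f, \tilde{f}_{r_i}))}. \]

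Finally, to match the form of the lemma I need to pass from the $r_i$'s associated with $h$ (which may only satisfy $\sum_i r_i \leq R$) to a genuine partition $\sum_i r_i' = R$. Since $\error_\mcD(f, \tilde{f}_{r})$ is non-increasing in $r$, padding each $r_i$ up to some $r_i' \geq r_i$ with $\sum_i r_i' = R$ only enlarges the product under the square root, which weakens the error lower bound but keeps it valid. Hence $\error_{\mcD^k}(\XOR_k \circ f, h)$ is bounded below by the bound for the padded partition, and in particular by the minimum over all partitions summing to $R$. The main obstacle is really just the noise-stability calculation for XOR; the pleasant surprise is the cancellation that eliminates the bias $\mu$ from the final expression, which is what makes the resulting bound as clean as a univariate-looking product.
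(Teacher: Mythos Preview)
Your proposal is correct and follows essentially the same approach as the paper: apply the upper-bound half of Theorem~\ref{thm:strong-formal}, compute $\Stab_{\mu,\vrho}(\XOR_k) = \prod_i(\vrho_i + (1-\vrho_i)\mu^2)$ (this is exactly the paper's Proposition~\ref{prop:multi-parity}), and observe that substituting $\beta_i$ collapses each factor to $\adv_\mcD(f,S_i) = 1 - 2\error_\mcD(f,\tilde f_{r_i})$, with the $\mu$-dependence cancelling. Your handling of the $\max(0,\cdot)$ guard and the padding from $\sum_i r_i \le R$ to $\sum_i r_i' = R$ are both correct and slightly more explicit than the paper's presentation, which simply maximizes over partitions directly.
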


Our proof of \Cref{lem:par-boost-error} will use the multivariate noise stability of parity.
\begin{proposition}[The multivariate noise stability of parity]
    \label{prop:multi-parity}
    For any $\mu \in (-1,1)$, $\vec{\rho} \in [0,1]^k$,
    \begin{equation*}
        \stab_{\mu, \vrho}(\XOR_k) = \prod_{i \in [k]}\paren*{\vrho_i + (1-\vrho_i)\cdot \mu^2}=\prod_{i \in [k]}\paren*{1 - (1-\vrho_i)(1-\mu^2)}.
    \end{equation*}
\end{proposition}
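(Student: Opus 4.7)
The plan is to compute $\stab_{\mu,\vrho}(\XOR_k)$ directly from its defining expectation, exploiting the facts that parity factorizes as a product over coordinates and that the $k$ coordinate-pairs $(\by_i, \bz_i)$ are mutually independent in the multivariate noise stability distribution. This lets the $k$-dimensional expectation collapse into a product of $k$ one-dimensional expectations, each of which is immediate.

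Concretely, I would first recall that with $\bits = \{-1,+1\}$ we have $\XOR_k(y) = \prod_{i \in [k]} y_i$, so
\begin{equation*}
\XOR_k(\by)\,\XOR_k(\bz) \;=\; \prod_{i\in[k]} \by_i\bz_i.
\end{equation*}
Then by \Cref{def:multi-noise}, the pairs $(\by_i,\bz_i)$ are drawn independently across $i$, so
\begin{equation*}
\stab_{\mu,\vrho}(\XOR_k) \;=\; \Ex\!\left[\prod_{i \in [k]} \by_i \bz_i\right] \;=\; \prod_{i \in [k]} \Ex[\by_i \bz_i].
\end{equation*}

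The second step is to evaluate each single-coordinate expectation. By the $\vrho$-correlated construction, conditionally on $\by_i$, with probability $\vrho_i$ we have $\bz_i = \by_i$, contributing $\by_i^2 = 1$; with the remaining probability $1-\vrho_i$, $\bz_i$ is an independent draw from $\pi_\mu$ and hence independent of $\by_i$, contributing $\Ex[\by_i]\,\Ex[\bz_i] = \mu^2$. Averaging,
\begin{equation*}
\Ex[\by_i \bz_i] \;=\; \vrho_i \cdot 1 + (1-\vrho_i)\cdot \mu^2 \;=\; \vrho_i + (1-\vrho_i)\mu^2.
\end{equation*}
Plugging back gives the first claimed expression, and the algebraic identity $\vrho_i + (1-\vrho_i)\mu^2 = 1 - (1-\vrho_i)(1-\mu^2)$ yields the second.

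There is essentially no obstacle: the key insight is simply that $\XOR_k$ is the one combining function whose multivariate noise stability factorizes perfectly across coordinates, so no Fourier machinery is needed (though one could alternatively derive the same formula from \Cref{prop:stab-fourier} by computing the $\mu$-biased Fourier expansion of $\XOR_k$, namely $\widehat{\XOR_k}_\mu(S) = \prod_{i \notin S} \mu \cdot \prod_{i \in S} \sigma$ restricted appropriately, which is a longer route to the same answer). The only care needed is to be explicit that the $k$ pairs are drawn independently, which is exactly what \Cref{def:multi-noise} asserts.
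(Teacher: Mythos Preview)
Your proposal is correct and follows essentially the same argument as the paper: factor $\XOR_k$ as a product over coordinates, use independence of the pairs $(\by_i,\bz_i)$ to split the expectation, and compute each one-dimensional term as $\vrho_i + (1-\vrho_i)\mu^2$.
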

\begin{proof}
    Note that $\XOR_k(y_1, \ldots, y_k) = \prod_{i \in [k]}y_i$. Therefore,
    \begin{equation*}
        \stab_{\mu, \vrho}(\XOR_k) = \Ex_{\by \sim (\pi_{\mu})^k, \bz \rhosim \by}\bracket*{\prod_{i \in [k]} \by_i \bz_i}.
    \end{equation*}
    Each pair $(\by_i, \bz_i)$ are independent of another, so
    \begin{equation*}
        \stab_{\mu, \vrho}(\XOR_k) = \prod_{i \in [k]} \Ex\bracket*{\by_i \bz_i}.
    \end{equation*}
    The distribution of $(\by_i, \bz_i)$ can be succinctly described: With probability $\vrho_i$, $\bz_i = \by_i$. Otherwise, they are each independent draws from $\pi_{\mu}$. Therefore,
    \begin{equation*}
         \Ex\bracket*{\by_i \bz_i} = \vrho_i + (1-\vrho_i)\cdot \mu^2.
    \end{equation*}
    The desired result follows from combining the above equations
\end{proof}

\begin{proof}[Proof of \Cref{lem:par-boost-error}]
    We apply our strong composition theorem, \Cref{thm:strong-formal}. It is stated in terms of advantage and gives
    \begin{equation*}
        \max_{R\text{-juntas }h} \adv_{\mcD^k}(\XOR_k \circ f, h) \leq \max_{r_1 + \cdots + r_k = R}\sqrt{\stab_{\mu, \beta(r_1, \ldots, r_k)}(\XOR_k)},
    \end{equation*}
    where we define $\mu = \Ex_{\bx \sim \mcD}[f(\bx)]$, and $\beta(r_1, \ldots, r_k) \in [0,1]^k$ is the vector
    \begin{equation*}
        \beta(r_1, \ldots, r_k)_i = \frac{\adv_{\mcD}(f, \tilde{f}_{r_i}) - \mu^2}{1 - \mu^2} = \frac{1 - 2\cdot\error_{\mcD}(f, \tilde{f}_{r_i}) - \mu^2}{1 - \mu^2}.
    \end{equation*}
    Applying \Cref{prop:multi-parity},
    \begin{align*}
        \max_{R\text{-juntas }h} \adv_{\mcD^k}(\XOR_k \circ f, h) &\leq \max_{r_1 + \cdots + r_k = R}\sqrt{\prod_{i \in [k]}\paren*{1 - \paren*{1 -\frac{1 - 2\cdot\error_{\mcD}(f, \tilde{f}_{r_i}) - \mu^2}{1 - \mu^2} }(1 - \mu^2)}}\\
        &= \max_{r_1 + \cdots + r_k = R}\sqrt{\prod_{i \in [k]}\paren*{1 - \paren*{\frac{2\cdot\error_{\mcD}(f, \tilde{f}_{r_i})}{1 - \mu^2} }(1 - \mu^2)}}\\
        &= \max_{r_1 + \cdots + r_k = R}\sqrt{\prod_{i \in [k]}\paren*{1 - 2\cdot\error_{\mcD}(f, \tilde{f}_{r_i})}}.
    \end{align*}
    The desired result follows from $\error = \frac{1 - \adv}{2}$.
\end{proof}

\subsubsection{Warmup: weak testers suffice for $(0,\eps,r,r')$-testing juntas}

We first boost tolerant testers in the regime where $\yese$ is fixed to $0$ in \Cref{def:tolerant junta testing}. This version is slightly easier to state and is also the version we will use later in proving \Cref{thm:np hardness of tolerant junta testing}.

\begin{theorem}[Boosting $(0,\eps,r,r')$-testers for juntas]
\label{thm:boosting 0 error juntas}
    If juntas can be $(0,\largee,r,r')$-tested using $q(\largee,r,r')$ queries, then for all $k\in \N$ and $\lambda\in (0,1)$, they can be $(0,\smalle,r,\lambda^{-1} r')$-tested in $k\cdot q(\largee,kr,kr')$ queries where
    $$
    \largee=\frac{1-(1-2\smalle)^{{(1-\lambda)k}/{2}}}{2}.
    $$
\end{theorem}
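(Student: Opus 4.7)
The plan is to instantiate the general boosting construction of \Cref{fig:boosting testers} with $g = \XOR_k$. Given the hypothesized weak tester $\weak$ for $(0, \largee, kr, kr')$-testing juntas, the boosted tester $\strong$ answers each query $(x^{(1)}, \ldots, x^{(k)}) \in (\bits^n)^k$ with $\bigoplus_{i \in [k]} f(x^{(i)})$, and each sample request with $(\bx^{(1)}, \ldots, \bx^{(k)})$ where the $\bx^{(i)} \iid \mcD$. This uses $k$ queries to $f$ per simulated query to $\XOR_k \circ f$, giving the claimed total query complexity of $k \cdot q(\largee, kr, kr')$. What remains is to verify that $\weak$ is being fed a valid $(0, \largee, kr, kr')$-testing instance.

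The Yes case is immediate: if $f$ is an $r$-junta, then $\XOR_k \circ f$ depends on at most $kr$ variables and is itself a $kr$-junta, so $\weak$ accepts.

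The No case is where the strong composition theorem enters. Suppose $f$ is $\smalle$-far under $\mcD$ from every $\lambda^{-1} r'$-junta. Set $R \coloneqq kr'$ and fix an arbitrary partition $R = r_1 + \cdots + r_k$. An averaging argument shows that at least $(1-\lambda)k$ of the $r_i$'s satisfy $r_i \leq \lambda^{-1} r'$: if more than $\lambda k$ of them exceeded $\lambda^{-1} r'$, then $\sum_i r_i$ alone would exceed $\lambda k \cdot \lambda^{-1} r' = kr' = R$, a contradiction. For each such $i$, the approximator $\tilde{f}_{r_i}$ is in particular a $\lambda^{-1} r'$-junta, so $\error_\mcD(f, \tilde{f}_{r_i}) \geq \smalle$ by hypothesis. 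Substituting into \Cref{lem:par-boost-error}, every $R$-junta $h$ satisfies
\begin{equation*}
\error_{\mcD^k}(\XOR_k \circ f, h) \;\geq\; \frac{1 - \sqrt{(1-2\smalle)^{(1-\lambda)k}}}{2} \;=\; \frac{1 - (1-2\smalle)^{(1-\lambda)k/2}}{2} \;=\; \largee,
\end{equation*}
since at least $(1-\lambda)k$ of the factors in the product are at most $(1 - 2\smalle)$ and the remaining factors are at most $1$. Hence $\XOR_k \circ f$ is $\largee$-far from every $kr'$-junta and $\weak$ correctly rejects.

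The only step requiring genuine care is aligning the partition-averaging bound with \Cref{lem:par-boost-error}: the square root in that lemma is precisely what converts the $(1-\lambda)k$ small-budget factors into the advertised exponent $(1-\lambda)k/2$, so no additional slack needs to be absorbed elsewhere. All remaining bookkeeping is routine.
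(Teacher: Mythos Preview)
Your proof follows the same route as the paper's: instantiate the general boosting construction with $g=\XOR_k$ and verify the two cases, the No case via \Cref{lem:par-boost-error} and the partition-averaging argument. The No case is handled correctly and matches the paper's \Cref{lem:junta composition theorem for xor} essentially verbatim.

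There is a small but genuine gap in your Yes case. The Yes case of $(0,\smalle,r,\lambda^{-1}r')$-testing is that $f$ is \emph{$0$-close under $\mcD$} to an $r$-junta, not that $f$ \emph{is} an $r$-junta. For a general distribution $\mcD$ these differ: $f$ may agree with some $r$-junta $\tilde f$ on the support of $\mcD$ while depending on all $n$ coordinates elsewhere. In that situation $\XOR_k\circ f$ is not a $kr$-junta, so your sentence ``$\XOR_k\circ f$ depends on at most $kr$ variables'' is false as written. The fix is the one the paper spells out: if $\tilde f$ is an $r$-junta with $\Pr_{\bx\sim\mcD}[f(\bx)\neq\tilde f(\bx)]=0$, then $\XOR_k\circ\tilde f$ is a $kr$-junta and, by a union bound over coordinates, $\Pr_{\bx\sim\mcD^k}[(\XOR_k\circ f)(\bx)\neq(\XOR_k\circ\tilde f)(\bx)]=0$, so $\XOR_k\circ f$ is $0$-close to a $kr$-junta under $\mcD^k$. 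With that one-line correction your argument is complete and coincides with the paper's.
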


We will need to following composition theorem for juntas. It is a more precise version of \Cref{cor:intro amplify junta complexity} stated in terms of \Cref{def:strong composition theorem}.
\begin{lemma}
\label{lem:junta composition theorem for xor}
    For any $\lambda\in (0,1)$, the property of being an $r$-junta admits an $(\smalle, \largee,\lambda)$-composition theorem with respect to $\XOR_k$ for any $\smalle\le \largee$ where
    $$
    \largee= \frac{1-(1-2\smalle)^{{(1-\lambda)k}/{2}}}{2}.
    $$
\end{lemma}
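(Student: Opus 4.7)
The plan is to apply \Cref{lem:par-boost-error} directly, using a pigeonhole count on the budget partition to produce the promised amplification. Concretely, assume $\dist_{\mcD}(f, r\text{-juntas}) > \smalle$ and set $R \coloneqq \lambda k r$; I would like to show that every $R$-junta differs from $\XOR_k \circ f$ under $\mcD^k$ on strictly more than an $\largee$-fraction of inputs. By \Cref{lem:par-boost-error}, it suffices to upper bound $\prod_{i\in[k]}(1 - 2\cdot\error_{\mcD}(f, \tilde{f}_{r_i}))$ over every partition $r_1 + \cdots + r_k = R$.

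The key combinatorial step is to observe that if $m$ of the $r_i$'s satisfy $r_i \ge r$, then $mr \le \sum_i r_i = \lambda k r$, hence $m \le \lambda k$; equivalently, at least $(1-\lambda)k$ of the $r_i$'s are strictly less than $r$. For each such index, the best $r_i$-junta approximator of $f$ is also an $r$-junta, so by the hypothesis $\error_{\mcD}(f, \tilde{f}_{r_i}) \geq \dist_{\mcD}(f, r\text{-juntas}) > \smalle$, and therefore $1 - 2\cdot\error_{\mcD}(f, \tilde{f}_{r_i}) < 1 - 2\smalle$. For the remaining (at most $\lambda k$) indices, I simply use the trivial inequality $1 - 2\cdot\error_{\mcD}(f, \tilde{f}_{r_i}) \le 1$, which holds since constant functions (being $0$-juntas) are available and guarantee $\error \le 1/2$.

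Multiplying these per-coordinate bounds gives
\begin{equation*}
    \prod_{i\in[k]}\paren*{1 - 2\cdot\error_{\mcD}(f, \tilde{f}_{r_i})} < (1 - 2\smalle)^{(1-\lambda)k}
\end{equation*}
uniformly over all valid partitions. Substituting into \Cref{lem:par-boost-error} yields
\begin{equation*}
    \error_{\mcD^k}(\XOR_k \circ f, h) > \frac{1 - \sqrt{(1-2\smalle)^{(1-\lambda)k}}}{2} = \frac{1 - (1-2\smalle)^{(1-\lambda)k/2}}{2} = \largee
\end{equation*}
for every $R$-junta $h$, which is precisely $\dist_{\mcD^k}(\XOR_k \circ f, \lambda k r\text{-juntas}) > \largee$, matching \Cref{def:strong composition theorem}.

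I do not foresee a real obstacle: the pigeonhole step is elementary, the non-negativity of each factor $1 - 2\cdot\error_{\mcD}(f, \tilde{f}_{r_i})$ is automatic, and the sign of the inequality is preserved under the square root since $(1-2\smalle)^{(1-\lambda)k} \in [0,1]$. The hypothesis $\smalle \le \largee$ in the statement is a consistency condition on the parameters rather than a constraint used inside the argument; the quantitative content rests entirely on \Cref{lem:par-boost-error} together with the budget-partition pigeonhole above.
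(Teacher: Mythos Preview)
Your proposal is correct and follows essentially the same route as the paper: invoke \Cref{lem:par-boost-error}, use a counting/pigeonhole argument on the partition $r_1+\cdots+r_k=\lambda kr$ to show at least $(1-\lambda)k$ of the $r_i$ are at most $r$, apply the hypothesis to get each corresponding factor strictly below $1-2\smalle$, bound the remaining factors by~$1$, and take the square root. The only cosmetic difference is that the paper phrases the pigeonhole via the set $A_{\le r}=\{i:r_i\le r\}$ whereas you count indices with $r_i\ge r$; the resulting bound is identical.
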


\begin{proof}
    Assume that $f:\bits^n\to\bits$ is $\smalle$-far from being an $r$-junta over $\mathcal{D}$. We would like to show that $\XOR_k\circ f$ is $\largee$-far from being a $\lambda r k$-junta over $\mathcal{D}^k$ where $\largee$ is defined as in the lemma statement. Let $r_1+\cdots+r_k=\lambda rk$ be the partition of the junta budgets which minimizes the expression 
    $$
    \frac{1 - \sqrt{\prod_{i \in [k]}\paren*{1 - 2\cdot\error_{\mcD}(f, \tilde{f}_{r_i})}}}{2}
    $$
    from \Cref{lem:par-boost-error}. Let $A_{\le r}\sse [k]$ denote the indices for which $r_i\le r$ and let $A_{>r}=[k]\setminus A_{\le r}$.  By a counting argument, at least a $(1-\lambda)$-fraction of $r_i$ satisfy $r_i\le r$ and so $|A_{\le r}|\ge (1-\lambda)k$. By our assumption that $f$ is far from being an $r$-junta, for these $r_i$, we get $\error_\mathcal{D}(f,\Tilde{f}_{r_i})>\smalle$. Therefore, we can conclude that for any $\lambda rk$-junta $h:\bits^{nk}\to\bits$:
    \begin{align*}
        \error_{\mathcal{D}^k}(\XOR_k\circ f,h)&\ge  \frac{1 - \sqrt{\prod_{i \in [k]}\paren*{1 - 2\cdot\error_{\mcD}(f, \tilde{f}_{r_i})}}}{2}\tag{\Cref{lem:par-boost-error}}\\
        &=\frac{1 - \sqrt{\prod_{i \in A_{\le r}}\paren*{1 - 2\cdot\error_{\mcD}(f, \tilde{f}_{r_i})}\cdot\prod_{i\in A_{>r}}\paren*{1 - 2\cdot\error_{\mcD}(f, \tilde{f}_{r_i})}}}{2}\\
        &\ge  \frac{1 - \sqrt{\prod_{i \in A_{\le r}}\paren*{1 - 2\cdot\error_{\mcD}(f, \tilde{f}_{r_i})}}}{2}\tag{$\error\le \frac{1}{2}$}\\
        &> \frac{1 - \paren*{1 - 2\smalle}^{(1-\lambda)k/2}}{2}\tag{$\error_\mathcal{D}(f,\Tilde{f}_{r_i})>\smalle$ for $i\in A_{\le r}$}.
    \end{align*}
    Since $h$ was arbitrary, this shows that $\XOR_k\circ f$ is $\largee$-far from being a $\lambda rk$-junta. 
\end{proof}

\begin{proof}[Proof of \Cref{thm:boosting 0 error juntas}]
\Cref{thm:generic boosting} is stated in the non-tolerant regime. However, we note that the same theorem holds in the $(0,\eps,r,r')$-testing regime. That is, under the conditions of \Cref{thm:generic boosting}, if $\mathcal{P}$ is $(0,\largee,s,s')$-testable, then it is also $(0,\smalle,s,\lambda^{-1}s')$-testable. This is because if $\Tilde{f}$ is a $0$-approximator of $f$ over $\mathcal{D}$, then $g\circ \Tilde{f}$ is a $0$-approximator of $g\circ f$ over $\mathcal{D}^k$. 

\Cref{lem:junta composition theorem for xor} shows that the property of being an $r$-junta admits an $(\smalle, \frac{1-(1-2\smalle)^{{(1-\lambda)k}/{2}}}{2},\lambda)$-composition theorem. Therefore, \Cref{thm:generic boosting} shows that if juntas can be $(0,\largee,r,r')$-tested in $q(\largee,r,r')$ queries then they can be $(\smalle, r,r')$-tested in $k\cdot q(\largee,kr,kr')$ queries where
\[
\largee=\frac{1-(1-2\smalle)^{{(1-\lambda)k}/{2}}}{2}.\qedhere
\]
\end{proof}

\subsubsection{Weak testers suffice for tolerant junta testing}

\begin{theorem}[Boosting tolerant junta testers, formal version of \Cref{cor:boost juntas intro}]
    \label{thm:boosting tolerant junta testers}
    If there is a $q(r)$-query tester that, given queries to $f:\bits^n\to\bits$ and random samples from a distribution $\mathcal{D}$, distinguishes between
    \begin{itemize}
        \item Yes: $f$ is $\frac{1}{4}$-close to an $r$-junta, and
        \item No: $f$ is $\frac{1}{3}$-far from every $r$-junta,
    \end{itemize}
    then for every $\eps>0$ and $\lambda\in (0,1)$, there is a $\frac{q(r/(4\eps))}{4\eps}$-query algorithm that distinguishes between
    \begin{itemize}
        \item Yes: $f$ is $\eps$-close to an $r$-junta, and
        \item No: $f$ is $\Omega(\frac{\eps}{1-\lambda})$-far from every $\lambda^{-1}r$-junta.
    \end{itemize}
\end{theorem}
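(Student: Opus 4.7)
The plan is to apply the generic boosting framework of the previous subsection with combining function $g = \XOR_k$ for an appropriate choice of $k$, just as in the proof of~\Cref{thm:boosting 0 error juntas}, but now handling a nonzero tolerance in the Yes case. Concretely, I would set $k \coloneqq \lceil 1/(4\eps)\rceil$ and run the following booster: on queries to $f$ and samples from $\mathcal{D}$, simulate the weak tester on input $\XOR_k \circ f : (\bits^n)^k \to \bits$ under $\mathcal{D}^k$, answering each of its queries $(x^{(1)},\ldots,x^{(k)})$ by making $k$ queries to $f$ and returning $\prod_i f(x^{(i)})$, and feeding it samples $(\bx^{(1)},\ldots,\bx^{(k)}) \sim \mathcal{D}^k$ drawn from $k$ independent samples from $\mathcal{D}$. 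The weak tester is invoked with junta-size parameter $kr$.

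The argument has two parts, corresponding to completeness and soundness. For completeness, if $f$ is $\eps$-close under $\mathcal{D}$ to an $r$-junta $h$, then $\XOR_k \circ f$ and $\XOR_k \circ h$ disagree under $\mathcal{D}^k$ only if at least one of the $k$ coordinate-wise copies disagrees, and so by a union bound their disagreement probability is at most $k\eps \le 1/4$ by our choice of $k$. Since $\XOR_k \circ h$ is a $kr$-junta, this certifies that $\XOR_k \circ f$ is $\frac{1}{4}$-close to a $kr$-junta under $\mathcal{D}^k$, and the weak tester should output Yes.

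For soundness, suppose $f$ is $\smalle$-far under $\mathcal{D}$ from every $\lambda^{-1}r$-junta, with $\smalle = C\eps/(1-\lambda)$ for a sufficiently large constant $C$. By~\Cref{lem:junta composition theorem for xor} applied with inner junta parameter $\lambda^{-1}r$ and strength $\lambda$, the function $\XOR_k \circ f$ is $\largee$-far under $\mathcal{D}^k$ from every $\lambda \cdot k \cdot \lambda^{-1}r = kr$-junta, where
\begin{equation*}
\largee = \tfrac{1}{2}\bigl(1 - (1-2\smalle)^{(1-\lambda)k/2}\bigr).
\end{equation*}
Plugging in $k = \Theta(1/\eps)$ and $\smalle = \Theta(\eps/(1-\lambda))$, the exponent $(1-\lambda)k/2$ times $\smalle$ is a sufficiently large constant, so $(1-2\smalle)^{(1-\lambda)k/2} \le 1/3$ for $C$ large enough, and hence $\largee \ge 1/3$. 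This ensures the weak tester outputs No. The query bound $\frac{1}{4\eps}\cdot q(r/(4\eps))$ is immediate from $k$ queries to $f$ per query to $\XOR_k \circ f$ and the invocation with parameter $kr = r/(4\eps)$.

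The only nontrivial step is the soundness analysis, where one must verify that the parameters produced by the strong composition theorem of~\Cref{lem:junta composition theorem for xor} actually push $\largee$ all the way up to $1/3$ at the prescribed noise level $\smalle = \Omega(\eps/(1-\lambda))$; this is precisely where we rely on the strength parameter $\lambda > 0$ of our composition theorem, as a weak composition theorem (effectively $\lambda \approx 1/k$) would not give a meaningful bound. The rest is bookkeeping.
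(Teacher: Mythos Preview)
Your proposal is correct and follows essentially the same approach as the paper: set $k=\Theta(1/\eps)$, simulate the weak tester on $\XOR_k\circ f$ with junta parameter $kr$, handle the Yes case by a union bound giving $k\eps\le 1/4$, and handle the No case by invoking \Cref{lem:junta composition theorem for xor} with inner parameter $\lambda^{-1}r$ so that the resulting budget is exactly $kr$ and the amplified error exceeds $1/3$. The paper carries out the same calculation with the explicit constant $\smalle=5\eps/(1-\lambda)$, but otherwise your argument and the paper's are the same.
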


\begin{proof}
    Let $\mathcal{T}$ be a $q(r)$-query tester for juntas that satisfies the theorem statement. Given queries to a function $f:\bits^n\to\bits$ and random samples to $\mcD$, we design an algorithm for $(\eps,\frac{5}{1-\lambda}\eps, r,\lambda^{-1}r)$-testing $f$ over $\mcD$. The algorithm is straightforward. We choose $k=\frac{1}{4\eps}$, and run the procedure in \Cref{fig:boosting testers} with $g=\XOR_k:\bits^{k}\to\bits$ and junta size $kr$. 
    \paragraph{Query complexity.}{$\mathcal{T}$ makes $q(kr)=q(\frac{r}{4\eps})$ queries to the target $\XOR_k\circ f:\bits^{nk}\to\bits$ before it terminates. Our tester makes $k$ queries to $f$ for each  query to $\XOR_k\circ f$. Therefore, our tester makes $k\cdot q(\frac{r}{4\eps})=(\frac{r}{4\eps})/(4\eps)$ queries in total. 
    }
    \paragraph{Correctness.}{For correctness, we need to show:
    \begin{itemize}[leftmargin=40pt,align=left]
        \item[\underline{Yes case}:] if $f$ is $\eps$-close to being an $r$-junta over $\mcD$, then $\XOR_k\circ f$ is $1/4$-close to being a $kr$-junta over $\mcD^k$, and
        \item[\underline{No case}:] if $f$ is $\frac{5\eps}{1-\lambda}$-far from being an $\lambda^{-1}r$-junta over $\mcD$, then $\XOR_k\circ f$ is $\frac{1}{3}$-far from being a $kr$-junta over $\mcD^k$.
    \end{itemize}

    \subparagraph{Yes case.}{
        Let $\Tilde{f}$  be an $r$-junta which $\eps$-approximates $f$ over $\mcD$. By a union bound:
        \begin{align*}
            \Prx_{\bx\sim \mathcal{D}^k}\left[\text{XOR}_k\circ f(\bx)\neq \text{XOR}_k\circ \Tilde{f}(\bx)\right]&\le \Prx_{\bx\sim\mathcal{D}^k}\left[\text{some }f(\bx^{(i)})\neq f(\bx^{(i)})\right]\\
            &\le k\cdot\error_{\mathcal{D}}(f,\Tilde{f})\le k\eps = \frac{1}{4}.
        \end{align*}
        Since $\XOR_k\circ \Tilde{f}$ is a $kr$-junta, this shows that $\XOR_k\circ f$ is $\frac{1}{4}$-close to a $kr$-junta.
    }
    \subparagraph{No case.}{
        If $f$ is $\frac{5\eps}{(1-\lambda)}$-far from being a $\lambda^{-1}r$-junta, then \Cref{lem:junta composition theorem for xor} implies that $\XOR_k\circ f$ is
        $$
        \frac{1-(1-2\smalle)^{(1-\lambda)k/2}}{2}
        $$
        far from being a $\lambda \lambda^{-1}kr=kr$-junta over $\mcD^k$ where $\smalle\coloneqq \frac{5\eps}{(1-\lambda)}$. Therefore, it is sufficient to show that $\frac{1-(1-2\smalle)^{(1-\lambda)k/2}}{2}\ge \frac{1}{3}$. We observe $\frac{2}{(1-\lambda)k}\le \log_{\frac{1}{3}}(e)\cdot \smalle$ which implies $3^{-2/((1-\lambda)k)}\ge e^{-2\smalle}\ge 1-2\smalle$. It follows:
        \begin{equation*}
            \frac{1}{3}\ge (1-2\smalle)^{(1-\lambda)k/2}
        \end{equation*}
        which provides the desired bound.
    }}
\end{proof}

\subsubsection{Hardness of distribution-free tolerant junta testing} 


We prove the following which implies \Cref{cor:NP hardness intro}.
\begin{theorem}[Tolerant junta testing hardness, formal version of \Cref{cor:NP hardness intro}]
\label{thm:np hardness of tolerant junta testing}
    Given queries to a function $f:\bits^n\to\bits$ and random samples from a distribution $\mathcal{D}$, and $r\le n$, it is \textnormal{NP}-hard under randomized reductions to distinguish between
    \begin{itemize}
        \item Yes: $f$ is $0$-close an $r$-junta over $\mathcal{D}$, and
        \item No: $f$ is $\frac{1}{3}$-far from every $\Omega(r\log n)$-junta over $\mathcal{D}$.
    \end{itemize}
\end{theorem}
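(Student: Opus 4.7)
The plan is to prove \Cref{thm:np hardness of tolerant junta testing} by applying the $\XOR_k$-composition reduction of \Cref{fig:boosting testers} to a small-distance NP-hardness result implicit in~\cite{HJLT96}, so as to boost the distance parameter from $\smalle = 1/\poly(n)$ up to the constant $1/3$ while preserving a $\log n$ gap between the Yes and No junta sizes. The rest of the work is analogous to the correctness analysis of \Cref{thm:boosting tolerant junta testers}, specialized to this hardness setting.

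First, I would invoke~\cite{HJLT96}: by chasing through their SAT-based reduction, one extracts parameters $r_0 \leq n$, $\smalle = 1/\poly(n)$, and a No-side junta bound $R_0 = \Omega(r_0 \log n)$ such that, given queries to $f:\bits^n \to \bits$ and samples from a distribution $\mathcal{D}$, it is NP-hard under randomized reductions to distinguish ``$f$ is an $r_0$-junta under $\mathcal{D}$'' from ``$f$ is $\smalle$-far under $\mathcal{D}$ from every $R_0$-junta.'' The $\log n$ junta-size gap between $r_0$ and $R_0$ is the standard hardness-of-approximation factor in proper learning of juntas/decision trees. Given such a base instance $(f, \mathcal{D})$, I would then apply the reduction of \Cref{fig:boosting testers} to produce $(F, \mathcal{D}^k)$ with $F \coloneqq \XOR_k \circ f$ viewed as a function on $n' = nk$ variables, for a parameter $k = \Theta(1/\smalle) = \poly(n)$. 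By \Cref{def:linear wrt composition}, a Yes instance yields $F$ a $k r_0$-junta (hence $0$-close); by \Cref{lem:junta composition theorem for xor} applied with $\lambda = 1/2$, a No instance yields $F$ that is $\largee$-far from every $\tfrac12 k R_0$-junta, where $\largee = \tfrac12\bigl(1-(1-2\smalle)^{k/4}\bigr) \geq 1/3$ for the chosen $k$ (using $(1-2\smalle)^{k/4} \leq e^{-k\smalle/2}$).

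Setting $r \coloneqq k r_0$ to be the Yes-side junta budget in the composed instance, the No-side junta budget is $\tfrac12 k R_0 = \Omega(k r_0 \log n) = \Omega(r \log n')$, using $\log n' = \log(nk) = \Theta(\log n)$ since $k$ is polynomial in $n$. The reduction itself runs in polynomial time since each query to $F$ is simulated by $k$ queries to $f$, and each sample from $\mathcal{D}^k$ is produced from $k$ independent samples from $\mathcal{D}$. The main obstacle is the very first step: one must carefully extract the claimed form of NP-hardness from~\cite{HJLT96}, in particular verifying that their reduction delivers the No-side junta-size bound $R_0 = \Omega(r_0 \log n)$, rather than only the weaker $R_0 = r_0$ version. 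Once that base hardness is in hand, the remainder of the argument is a routine invocation of the composition/boosting machinery developed in the earlier sections.
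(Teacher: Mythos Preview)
Your approach is essentially the same as the paper's: reduce from a base NP-hardness with $\smalle = 1/\poly(n)$ and a $\log n$ junta-size gap, then apply the $\XOR_k$ boosting of \Cref{thm:boosting 0 error juntas} with $k=\Theta(1/\smalle)$ to lift $\smalle$ to the constant $1/3$. Your parameter bookkeeping (the choice $\lambda=1/2$, the estimate $(1-2\smalle)^{k/4}\le e^{-k\smalle/2}\le 1/3$, and the observation that $\log(nk)=\Theta(\log n)$ so the $\log n$ gap survives) is correct.

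The one place where the paper does more than you is exactly the obstacle you flag. Rather than citing \cite{HJLT96} as a black box and hoping the $\Omega(r_0\log n)$ No-side bound can be extracted, the paper simply writes out the standard \textsc{SetCover}$\to$junta reduction: given a \textsc{SetCover} instance $\{S_1,\dots,S_n\}$ over universe $[m]$, let $\mathcal{D}$ be uniform over the $m$ incidence vectors together with the all-$(-1)$ vector, and let $f$ be the $n$-variable OR. A size-$r$ cover yields an $r$-junta that agrees with $f$ on all of $\supp(\mathcal{D})$ (so $f$ is $0$-close to an $r$-junta), while any $r'$-junta that is $\tfrac{1}{m+1}$-close to $f$ must have its relevant variables form a cover of size $r'$. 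The $\log n$ gap then comes directly from the $\Omega(\log n)$ hardness of approximating \textsc{SetCover}~\cite{RS97}, not from chasing through \cite{HJLT96}. One small point of sloppiness in your write-up: in the Yes case $f$ is only $0$-close to an $r_0$-junta, not literally an $r_0$-junta, so you cannot invoke \Cref{def:linear wrt composition} directly; instead use that if $\tilde f$ agrees with $f$ on $\supp(\mathcal{D})$ then $\XOR_k\circ\tilde f$ agrees with $\XOR_k\circ f$ on $\supp(\mathcal{D}^k)$, exactly as in the proof of \Cref{thm:boosting 0 error juntas}.
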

We reduce from the {\sc SetCover} problem.
\begin{definition}[The \textsc{SetCover} problem]
    A {\sc SetCover} instance over a universe $[m]$ is a collection of subsets $\mathcal{S} = \{ S_1,\ldots,S_n\}$ where $S_i\sse [m]$. The {\sc SetCover} problem is to compute a minimal size subcollection $\{S_{i_1},\ldots, S_{i_r}\}$ which \textit{covers} the universe: $[m]=S_{i_1}\cup\cdots\cup S_{i_r}$.
\end{definition}
{\sc SetCover} is known to be hard to approximate.
\begin{theorem}[Hardness of approximating {\sc SetCover} \cite{RS97}]
    Given a {\sc SetCover} instance $\mathcal{S}$ and a parameter $r$, it is \textnormal{NP}-hard to distinguish between
    \begin{itemize}
        \item Yes: $\mathcal{S}$ has a size-$r$ set cover, and
        \item No: $\mathcal{S}$ requires set covers of size $\Omega(r\log n)$. 
    \end{itemize}
\end{theorem}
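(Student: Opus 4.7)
The plan is to follow the classical PCP-to-\textsc{SetCover} reduction of Feige, substituting the sub-constant-soundness PCP of Raz and Safra (or equivalently $\Theta(\log n)$-fold parallel repetition of a basic two-prover PCP for \textsc{3SAT}, analyzed via Raz's parallel repetition theorem) so that the resulting \textsc{SetCover} hardness holds under polynomial-time \emph{randomized} reductions from \textnormal{NP}, rather than only under quasi-polynomial-time reductions as in Feige's original argument. The proof passes through an intermediate \textsc{Label Cover} hardness and uses \emph{partition systems} as the main combinatorial gadget.

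\paragraph{Label Cover intermediate.} Starting from the PCP theorem, I would produce a two-prover one-round projection game $(U, V, E, \Sigma_U, \Sigma_V, \{\pi_e\}_{e \in E})$ with $|\Sigma_U|, |\Sigma_V| = \mathrm{poly}(n)$ for which it is \textnormal{NP}-hard to distinguish between (Yes) a pair of labelings $\phi_U, \phi_V$ satisfying every edge, i.e.\ $\pi_e(\phi_U(u)) = \phi_V(v)$ for every $e=(u,v)$, and (No) every pair of labelings satisfying at most a $1/\mathrm{poly}(n)$ fraction of edges. This hardness follows from $\Theta(\log n)$-fold parallel repetition of a basic \textsc{3SAT} PCP (via Raz), or directly from \cite{RS97}'s low-degree test applied to Reed--Muller encodings of purported satisfying assignments, which supplies a two-query PCP with perfect completeness and sub-constant soundness.

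\paragraph{The reduction.} To each $v \in V$ I would attach an independent copy of an $(|\Sigma_V|, L)$-\emph{partition system} on a universe of size $N$, where $L \coloneqq |E|$ and $N = \mathrm{poly}(L)$. Such a system is a collection of $L$ partitions of $[N]$ into $|\Sigma_V|$ blocks each, with the property that the $|\Sigma_V|$ blocks of a single partition exactly cover $[N]$, but any cover formed by using at most one block from each of many \emph{different} partitions has size $\Omega(|\Sigma_V| \log L) = \Omega(|\Sigma_V| \log n)$; a random construction works and is the source of randomization in the reduction. The \textsc{SetCover} ground set is the disjoint union of the per-vertex universes. There are two families of sets: for each $(u,\sigma) \in U \times \Sigma_U$, a set $S_{u,\sigma}$ that, from every neighbor $v$'s partition system, selects the block indexed by the projected label $\pi_{(u,v)}(\sigma)$; and for each $(v,\tau) \in V \times \Sigma_V$, a set $S_{v,\tau}$ covering the full $\tau$-indexed partition of $v$'s system.

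\paragraph{Gap analysis and main obstacle.} In the Yes case, the sets $\{S_{u,\phi_U(u)}\}_{u\in U}\cup\{S_{v,\phi_V(v)}\}_{v\in V}$ exactly cover the universe, yielding size $|U|+|V|$; set $r \coloneqq |U|+|V|$. In the No case I would argue that any cover has size $\Omega(r\log n)$, by extracting from any purported small cover a randomized pair of labelings satisfying a $1/\mathrm{poly}(n)$ fraction of edges, contradicting the \textsc{Label Cover} soundness. The main obstacle is precisely this soundness analysis: at each $v$, the cover either already contains the $|\Sigma_V|$ blocks of a single partition (which then decodes to a $V$-label for $v$), or uses $\Omega(\log n)$ blocks from many distinct partitions by the partition-system property. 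A union bound over vertices, combined with the fact that two ``consistently decoded'' endpoints of an edge yield a satisfied constraint, turns a size-$o(r\log n)$ cover into a labeling that beats the Label Cover soundness threshold. Making this translation quantitative so that the multiplicative $\log n$ gap survives intact---rather than being eroded to a constant---is the technical heart of the reduction and is where both the polynomial Label Cover gap furnished by \cite{RS97} and carefully tuned partition-system parameters are needed.
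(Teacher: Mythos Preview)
The paper does not prove this theorem at all: it is stated with the citation \cite{RS97} and used as a black box in the proof of the NP-hardness of distribution-free tolerant junta testing. There is therefore no ``paper's own proof'' to compare against.

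Your sketch is a reasonable high-level outline of the standard route (Label Cover via a sub-constant-soundness PCP, then Feige-style partition systems), and you correctly identify both the reason \cite{RS97} is invoked (to get polynomial-time rather than quasi-polynomial-time reductions for the $\Omega(\log n)$ gap) and the main technical content (the soundness analysis extracting a good Label Cover labeling from a small cover). That said, since the theorem is quoted from the literature, no proof is expected here; in the context of this paper it would have sufficed simply to cite \cite{RS97} (or the subsequent derandomizations/strengthenings) and move on.
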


\begin{proof}[Proof of \Cref{thm:np hardness of tolerant junta testing}]
    Suppose we have an algorithm $\mathcal{T}_{\mathrm{weak}}$ for testing juntas that can distinguish between the Yes and No cases in the theorem statement. In particular, there is a $(0,\frac{1}{3},r,\Omega(r\log n))$-tester for juntas. \Cref{thm:boosting 0 error juntas} implies that there is a $(0,\smalle,r,\Omega(r\log n))$-tester, $\mathcal{T}_{\mathrm{strong}}$, for juntas as long as $\smalle$ satisfies
    \begin{equation}
        \label{eq:smalle requirement}
        \frac{1}{3}\le \frac{1-(1-2\smalle)^{(1-\lambda)k/2}}{2}\tag{$\circledast$}.
    \end{equation}
    In the reduction, we will choose $\smalle$ appropriately and use this boosted tester to solve {\sc SetCover}.
    \paragraph{The reduction.}{The reduction from {\sc SetCover} to junta testing is standard \cite{HJLT96,ABFKP08}. We will restate it here for convenience.  Let $\mathcal{S} = \{ S_1,\ldots,S_n\}$ be a {\sc SetCover} instance over the universe $[m]$ and define $u^{(1)},\ldots,u^{(m)} \in \bits^n$ where 
    \[ (u^{(j)})_i = 
    \begin{cases}
        1 & \text{if $j \in S_i$} \\
        -1 & \text{otherwise.} 
    \end{cases}
    \]
    Let $\mathcal{D}$ be the uniform distribution over $\{ u^{(1)},\ldots,u^{(m)}, (-1)^n\}$ and let $f:\bits^n\to\bits$ be the function which is the disjunction of its inputs: $f\coloneqq x_1\lor \cdots \lor x_n$ (where $1$ is interpreted as true and $-1$ as false).

    We choose $k=\Theta(m)$ so that \ref{eq:smalle requirement} holds with $\Omega(\frac{1}{m})<\smalle<\frac{1}{m+1}$. We then run the boosted tester $\mathcal{T}_{\mathrm{strong}}$ on the function $f$ and distribution $\mathcal{D}$, to test if $f$ is $0$-close to an $r$-junta or $\smalle$-far from being a $\Omega(r\log n)$-junta (where the parameters $r$ and $\Omega(r\log n)$ correspond to the {\sc SetCover} parameters). Our algorithm for {\sc SetCover} outputs Yes if and only if the tester accepts $f$ as being $0$-close to an $r$-junta.
    }
    \paragraph{Runtime.}{
        If the tester $\mathcal{T}_{\mathrm{weak}}$ runs in polynomial time, then since $k=\Theta(m)$ and $\smalle=\Theta(\frac{1}{m})$, the tester $\mathcal{T}_{\mathrm{strong}}$ runs in polynomial time. Queries to the target function $f$ and random samples from $\mcD$ can also be simulated in randomized polynomial time.
    }
    \paragraph{Correctness.}{
        For correctness, we need to show:
        \begin{itemize}[leftmargin=40pt,align=left]
            \item[\underline{Yes case}:] if $\mathcal{S}$ has a size-$r$ set cover, then $f$ is $0$-close to an $r$-junta over $\mcD$, and
            \item[\underline{No case}:] if $\mathcal{S}$ requires set covers of size $\Omega(r\log n)$, then $f$ is $\smalle$-far from being a $\Omega(k\log n)$-junta over $\mcD$.
        \end{itemize}
        \subparagraph{Yes case.}{
            Let $S_{i_1},\ldots, S_{i_r}$ be a size-$r$ set cover. Consider the function $\Tilde{f}=x_{i_1}\lor\cdots\lor x_{i_r}$. Since these indices form a set cover of $\mathcal{S}$, $\Tilde{f}(u^{(i)})=1$ for all $i\in [m]$ and $\Tilde{f}((-1)^n)=-1$. This shows $\error_{\mcD}(f,\Tilde{f})=0$. It follows that $f$ is $0$-close to an $r$-junta over $\mcD$ since $\Tilde{f}$ is an $r$-junta.
        }
        \subparagraph{No case.}{
            Suppose $\Tilde{f}$ is an $r'$-junta satisfying $\error_{\mathcal{D}}(f,\Tilde{f})< \frac{1}{m+1}$. The relevant variables of $\Tilde{f}$ must correspond to a set cover of $\mathcal{S}$: if some element $i\in [m]$ is \textit{not} covered, then $\Tilde{f}(u^{(i)})=\Tilde{f}((-1)^n)$ and $\error_{\mathcal{D}}(f,\Tilde{f})\ge \frac{1}{m+1}$. This shows if $\mathcal{S}$ requires set covers of size $\Omega(r\log n)$ then $f$ is $\frac{1}{m+1}$-far from every $\Omega(r\log n)$-junta. In particular, since $\smalle<\frac{1}{m+1}$, every $\Omega(r\log n)$-junta is $\smalle$-far from $f$.
        }
    }
\end{proof}

 \section*{Acknowledgments}

 We thank the FOCS reviewers for their helpful comments and feedback. The authors are supported by NSF awards 1942123, 2211237, 2224246 and a Google Research Scholar award. Caleb is also supported by an NDSEG fellowship, and Carmen by a Stanford Computer Science Distinguished Fellowship.

\bibliographystyle{alpha}
\bibliography{ref}

\appendix
\section{Counterexamples to natural composition theorems}

\label{app:counterexamples} 

\subsection{Counterexample to Conjecture 1}
\begin{lemma}
    \label{lem:counter-1}
    For any odd $k$ and $n \geq k$ let $R = (n-1)k$ and $\mcD$ be the uniform distribution over $\bits^n$. There are symmetric functions $g:\bits^k \to \bits$ and $f:\bits^n \to \bits$ for which the following holds.
    \begin{enumerate}
        \item There is an $R$-junta $h$ achieving,
        \begin{equation*}
            \error_{\mcD^k}(g\circ f, h) \leq O(1/\sqrt{k}).
        \end{equation*}
        \item The natural strategy of dividing the budget equally achieves,
        \begin{equation*}
            \error_{\mcD^k}(g\circ f, g \circ \tilde{f}_{R/k}) = 1/2. 
        \end{equation*}
    \end{enumerate}
\end{lemma}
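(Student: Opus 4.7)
The plan is to instantiate the counterexample sketched informally in \Cref{subsec:counterexamples}: take $g = \maj_k$ (well-defined because $k$ is odd) and $f = \XOR_n$, both of which are symmetric. Since $\E_{\mcD}[f] = 0$ (by parity of $\XOR_n$ under uniform) and the product distribution $\mcD^k$ is still uniform on $(\bits^n)^k$, the block-parities $\bz_i \coloneqq f(\bx^{(i)})$ are i.i.d.\ uniform $\pm 1$ bits when $\bx \sim \mcD^k$.

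For the second item (the divide-equally strategy fails), the key fact is that $\XOR_n$ is orthogonal to every $(n-1)$-junta in the Fourier basis, so for the per-block budget $R/k = n-1$ the best $(n-1)$-junta approximator is the constant function, i.e.\ $\tilde{f}_{n-1} \equiv c$ for some $c \in \bits$. Then $g \circ \tilde{f}_{R/k}$ is the constant function $\maj(c,\ldots,c) \in \bits$, and the error equals $\Prx_{\bx \sim \mcD^k}[(g\circ f)(\bx) \neq c] = \frac{1}{2}$ because $\maj(\bz_1,\ldots,\bz_k)$ of i.i.d.\ uniform $\pm 1$ bits is itself uniform over $\bits$ (by the symmetry $\bz \mapsto -\bz$ combined with oddness of $k$, which makes $\maj$ antisymmetric).

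For the first item (there exists a good $R$-junta), I would partition the budget $R = (n-1)k$ as $r_1 = \cdots = r_{k-1} = n$ and $r_k = n - k$, which is a valid nonnegative partition because $n \geq k$ and $(k-1)\cdot n + (n-k) = kn - n + n - k = (n-1)k$. With a full budget of $n$, each of the first $k-1$ blocks can compute $\XOR_n$ exactly; the last block, having budget less than $n$, is approximated by some constant $c \in \bits$. I would then set
\[
    h(x) \coloneqq \maj\!\bigl(\XOR_n(x^{(1)}), \ldots, \XOR_n(x^{(k-1)}), c\bigr),
\]
which is an $R$-junta. The error of $h$ is
\[
    \error_{\mcD^k}(g \circ f, h) = \Prx_{\bz}\!\left[\maj(\bz_1,\ldots,\bz_k) \neq \maj(\bz_1,\ldots,\bz_{k-1},c)\right],
\]
and I would argue that this event forces the first $k-1$ coordinates of $\bz$ to be tied (exactly $(k-1)/2$ positive, $(k-1)/2$ negative), since otherwise the majority of the first $k-1$ bits already determines the overall majority. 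Thus the error is bounded by the tie probability $\binom{k-1}{(k-1)/2} 2^{-(k-1)}$, which by Stirling is $O(1/\sqrt{k})$.

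The only mildly tricky piece is book-keeping the budget partition so that it is valid (nonnegative, sums to $R$), and verifying the tie-event characterization; everything else is standard. There is no substantive obstacle, as the structure of parity (maximal inapproximability by smaller juntas) and the combinatorics of majority over independent bits both behave exactly as needed.
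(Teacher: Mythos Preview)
Your proposal is correct and matches the paper's construction ($g = \maj_k$, $f = \XOR_n$, and essentially the same $R$-junta $h$ for item~1). The only difference is in item~2: the paper derives $\error = 1/2$ by invoking the upper bound of \Cref{thm:strong-formal} together with $\stab_{0,\vec{0}}(\maj_k)=0$, whereas you argue directly from the independence of $\XOR_n$ and any $(n-1)$-junta under the uniform distribution---your route is more elementary and self-contained.
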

We set $g = \maj_k$ to be the majority function on $k$ bits,
\begin{equation*}
    g(y_1, \ldots, y_k) = \begin{cases}
        1 &\text{if }\sum_{i \in [k]} y_i \geq 0 \\
        -1 &\text{otherwise}
    \end{cases}.
\end{equation*}
and $f = \XOR_n$ to be the parity function,
\begin{equation*}
    f(x_1, \ldots, x_n) = \prod_{i \in [n]} x_i.
\end{equation*}
The following fact will be useful in giving a strategy that achieves low error.
\begin{fact}
    \label{fact:anti-concentrate}
    Let $\by_1, \ldots, \by_{k-1}$ each be uniform and independent samples from $\bits$. Then, for any choice of $c$,
    \begin{equation*}
        \Pr\bracket*{\sum_{i \in [k-1]} \by_i = c} \leq O\paren*{1/\sqrt{k}}.
    \end{equation*}
\end{fact}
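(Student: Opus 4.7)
The plan is to reduce the problem to bounding the mode of a $\mathrm{Binomial}(k-1,1/2)$ distribution and then to invoke the standard estimate that the central binomial coefficient is $\Theta(2^{k-1}/\sqrt{k})$.

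First I would pass from $\pm 1$ variables to $\{0,1\}$ variables by writing $\by_i = 2\bx_i - 1$ with $\bx_i$ independent and uniform on $\{0,1\}$. Then $\sum_{i \in [k-1]} \by_i = 2\bm - (k-1)$, where $\bm \coloneqq \sum_{i \in [k-1]} \bx_i$ is $\mathrm{Binomial}(k-1,1/2)$-distributed. The event $\sum_{i\in[k-1]} \by_i = c$ becomes the event $\bm = m$ for $m \coloneqq (c+k-1)/2$. If this $m$ is not an integer in $\{0,1,\ldots,k-1\}$, the probability is $0$ and the claim is immediate, so it suffices to handle the case when it is.

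Next, I would use the unimodality of the binomial coefficient $\binom{k-1}{\cdot}$, maximized at $m^\star \coloneqq \lfloor (k-1)/2 \rfloor$, to obtain the uniform bound
\[
\Prx[\bm = m] \;=\; \frac{\binom{k-1}{m}}{2^{k-1}} \;\le\; \frac{\binom{k-1}{m^\star}}{2^{k-1}}.
\]
Finally, I would apply Stirling's formula (or equivalently Wallis's product) in its standard form $\binom{2n}{n} = \Theta(4^n / \sqrt{n})$, together with the trivial identity $\binom{k-1}{m^\star} \le 2 \binom{k-1}{\lceil (k-1)/2\rceil}$ to handle both parities of $k-1$, to conclude that $\binom{k-1}{m^\star} = O(2^{k-1}/\sqrt{k})$. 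Chaining these bounds gives $\Pr[\sum \by_i = c] \le O(1/\sqrt{k})$, uniformly in $c$, as claimed.

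There is essentially no obstacle: the whole argument is a one-paragraph textbook computation. The only minor care needed is the parity case split (the target $c$ must have the same parity as $k-1$ for the probability to be nonzero), and the clean application of Stirling to the central binomial coefficient regardless of whether $k-1$ is even or odd.
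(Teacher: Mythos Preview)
Your proposal is correct. The paper states this as a \emph{Fact} without proof, treating it as a standard anti-concentration bound for sums of i.i.d.\ Rademacher variables; your reduction to the mode of $\mathrm{Binomial}(k-1,1/2)$ followed by Stirling on the central binomial coefficient is exactly the textbook justification one would supply. (One cosmetic remark: the inequality $\binom{k-1}{\lfloor (k-1)/2\rfloor} \le 2\binom{k-1}{\lceil (k-1)/2\rceil}$ is superfluous, since the two sides are equal by symmetry of binomial coefficients; the parity split is handled directly by noting $\binom{2n+1}{n} = \tfrac{1}{2}\binom{2n+2}{n+1}$.)
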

We now give the junta achieving low error.
\begin{proposition}[\Cref{lem:counter-1}, there is a low error junta]
    Let $h = \maj_{k-1} \circ \XOR_n$. Then,
    \begin{enumerate}
        \item $h$ is an $((k-1)n \leq R)$-junta.
        \item $h$ achieves,
         \begin{equation*}
            \error_{\mcD^k}(g\circ f, h) \leq O(1/\sqrt{k}).
        \end{equation*}
    \end{enumerate}
\end{proposition}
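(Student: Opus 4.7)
The plan is to verify the two claims separately; both reduce to elementary observations about the behavior of majority on i.i.d.\ Rademacher variables.

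For Part 1, I would note that $h = \maj_{k-1} \circ \XOR_n$, when viewed as a function on $(\bits^n)^k$, ignores the last block $x^{(k)}$ entirely and reads only the first $(k-1)n$ coordinates. Thus $h$ is a $(k-1)n$-junta. The hypothesis $n \geq k$ gives $(k-1)n = kn - n \leq kn - k = (n-1)k = R$, so $h$ is in particular an $R$-junta.

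For Part 2, the key reduction is that since $\mcD$ is uniform over $\bits^n$ and $\XOR_n$ is balanced, the random variables $\by_i \coloneqq \XOR_n(\bx^{(i)})$ for $i \in [k]$ and $\bx \sim \mcD^k$ are i.i.d.\ uniform over $\bits$. So $\error_{\mcD^k}(g \circ f, h)$ equals the probability that $\maj_k(\by_1,\ldots,\by_k)$ and $\maj_{k-1}(\by_1,\ldots,\by_{k-1})$ disagree. Let $\bS \coloneqq \sum_{i=1}^{k-1} \by_i$, which is an even-valued integer since $k-1$ is even. Fixing the tie-breaking convention $\maj_{k-1}(\by_1,\ldots,\by_{k-1}) = \sign(\bS)$ with $\sign(0) \coloneqq +1$, a short case analysis on the value of $\bS \in \{\ldots,-2,0,2,\ldots\}$ shows that the two majorities agree whenever $|\bS| \geq 2$ (the single extra vote $\by_k$ cannot overturn a strict majority of $k-1$ odd-parity votes), leaving the disagreement event to be exactly $\{\bS = 0,\, \by_k = -1\}$.

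To conclude, I would apply \Cref{fact:anti-concentrate} (with $c = 0$) to the sum of $k-1$ i.i.d.\ Rademachers, which yields $\Pr[\bS = 0] \leq O(1/\sqrt{k})$. Combined with the independence of $\by_k$ from $\bS$, the disagreement probability is at most $\tfrac{1}{2}\Pr[\bS = 0] = O(1/\sqrt{k})$, as required. There is no serious obstacle here; the only subtle point is pinning down a fixed tie-breaking convention for $\maj_{k-1}$ on even arity (either convention works, and the disagreement event is symmetric under flipping it).
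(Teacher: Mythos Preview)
Your proposal is correct and follows essentially the same approach as the paper: reduce the error to $\Pr[\maj_k(\by)\neq\maj_{k-1}(\by)]$ for i.i.d.\ uniform $\by_i$, observe that disagreement forces the partial sum $\sum_{i\in[k-1]}\by_i$ to land at a specific value, and invoke \Cref{fact:anti-concentrate}. Your case analysis is in fact slightly sharper than the paper's (you note $k-1$ is even so the sum cannot equal $-1$, and you pin down the exact disagreement event), but this only tightens the constant and does not change the argument.
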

\begin{proof}
    Clearly $h$ depends on only the first $(k-1)n$ bits of its inputs, so it is an $R$-junta as long as $(k-1)n \leq (n-1)k$, which is guaranteed by the assumption $n\geq k$ in \Cref{lem:counter-1}. We compute $h$'s error,
    \begin{equation*}
         \error_{\mcD^k}(g\circ f, h) = \Pr_{\by \sim \bits^n}[\maj_k(\by) \neq \maj_{k-1}(\by)].
    \end{equation*}
    In order for $\maj_k(\by) \neq \maj_{k-1}(\by)$, it must be the case that the $\sum_{i \in [k-1]} \by_i$ is $-1$ or $0$. The desired result follows from \Cref{fact:anti-concentrate}.
\end{proof}

We'll next show the natural strategy achieves advantage $0$, equivalent to error $1/2$.
\begin{proposition}
    \label{prop:xor-0-adv}
    Let $f = \XOR_n$ and $\mcD$ be the uniform distribution over $\bits^n$. Then,
    \begin{equation*}
        \adv_{\mcD}(f, \tilde{f}_{n-1}) = 0.
    \end{equation*}
\end{proposition}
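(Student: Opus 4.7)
The plan is to show that every $(n-1)$-junta has exactly zero correlation with $\XOR_n$ under the uniform distribution, by exploiting the interaction between the junta's insensitivity to one coordinate and parity's full sensitivity to every coordinate.

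Fix any $(n-1)$-junta $h:\bits^n \to \bits$, and let $i \in [n]$ be a coordinate on which $h$ does not depend. The key observation is that flipping the $i$-th bit fixes $h$ but flips $f = \XOR_n$: for every $x \in \bits^n$, we have $h(x) = h(x \oplus e_i)$ while $f(x) = -f(x \oplus e_i)$. Pairing each input $x$ with $x \oplus e_i$ therefore gives
\[
f(x)h(x) + f(x \oplus e_i)h(x \oplus e_i) = h(x)\bigl(f(x) + f(x \oplus e_i)\bigr) = 0.
\]
Since the uniform distribution assigns equal mass to each element of every such pair, summing over all pairs yields $\adv_{\mcD}(f,h) = \Ex_{\bx \sim \mcD}[f(\bx)h(\bx)] = 0$. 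Taking the maximum over all $(n-1)$-juntas $h$ gives $\adv_{\mcD}(f, \tilde{f}_{n-1}) \le 0$, and the reverse inequality $\adv_{\mcD}(f, \tilde{f}_{n-1}) \ge 0$ follows because any constant function (which is a $0$-junta, hence an $(n-1)$-junta) achieves advantage $\Ex_{\mcD}[\pm f] = 0$ since parity is balanced.

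There is essentially no obstacle here; the only thing to be careful about is choosing the right coordinate $i$ (any coordinate outside the junta's relevant set works) and noting that the argument applies uniformly to every $(n-1)$-junta, which is what lets us conclude the statement about the \emph{best} approximator $\tilde{f}_{n-1}$ rather than merely a specific one.
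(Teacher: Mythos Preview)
Your proof is correct and follows essentially the same approach as the paper: both rest on the observation that for any coordinate $i$ omitted by the junta, pairing $x$ with $x \oplus e_i$ preserves the junta's value while negating parity, so the contributions cancel. The paper routes this through \Cref{prop:g-avg} (showing $\Ex[\XOR_n(\by)\mid \by_S = x_S]=0$), whereas you compute the correlation directly, but the underlying argument is identical.
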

\begin{proof}
    By \Cref{prop:g-avg}, it is sufficient to show that for any set $|S| = n-1$ and any $x \in \bits^n$,
    \begin{equation*}
        \Ex_{\by \sim \mcD}[f(\by) \mid \by_S = x_S] = 0.
    \end{equation*}
    For any fixed $x$, there are two $y \in \bits^n$ satisfying $y_S = x_S$: The first choice if $y = x$, and the second choice is $x$ with a single bit flipped (the one bit not in $S$). One of these two choices will have a parity of $+1$ and one will have a parity of $-1$, so the average parity is $0$, as desired.
\end{proof}

\begin{proposition}
    \label{prop:maj-NS-0}
    For any odd $k$, $\mu = 0$, and $\vrho = [0,\ldots, 0]$,
    \begin{equation*}
        \stab_{\mu, \vrho}(\maj_k) = 0.
    \end{equation*}
\end{proposition}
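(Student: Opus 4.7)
The plan is to unpack the definition of multivariate noise stability at $\vec{\rho} = \vec{0}$ and observe that it reduces to the squared mean of $g$, which vanishes for odd-arity Majority under the uniform distribution.

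First I would invoke \Cref{prop:stab-fourier}, which gives
\[
\Stab_{\mu, \vec{\rho}}(g) = \sum_{S \subseteq [k]} \hat{g}_\mu(S)^2 \, \vec{\rho}^S.
\]
When $\vec{\rho} = \vec{0}$, the product $\vec{\rho}^S$ vanishes for every nonempty $S$ and equals $1$ for $S = \emptyset$ (the empty product). Hence all terms in the Fourier expansion collapse except the degree-zero one, yielding $\Stab_{0, \vec{0}}(g) = \hat{g}_0(\emptyset)^2 = \big(\E_{\by \sim \bits^k}[g(\by)]\big)^2$.

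Next I would verify that $\maj_k$ is balanced under the uniform distribution when $k$ is odd. Because $k$ is odd, no input has a tie, so $\maj_k(-y) = -\maj_k(y)$ for every $y \in \bits^k$. Pairing each input with its negation in the expectation $\E_{\by \sim \bits^k}[\maj_k(\by)]$ gives cancellation, so $\E[\maj_k] = 0$, and therefore $\Stab_{0, \vec{0}}(\maj_k) = 0$.

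There is no real obstacle here; the only thing to be mildly careful about is the handling of $\vec{\rho}^\emptyset = 1$ in the Fourier sum (so the mean-squared term survives), and ensuring that the oddness of $k$ is what makes Majority balanced. A one-line alternative that avoids Fourier analysis altogether: at $\vec{\rho} = \vec{0}$, the definition of multivariate noise stability draws $\bz$ independently of $\by$ in every coordinate, so $\Stab_{0,\vec{0}}(g) = \E[g(\by)]\,\E[g(\bz)] = \E[g]^2$, and then the same balancedness argument finishes the proof.
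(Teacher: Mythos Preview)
Your proposal is correct. The paper's proof is exactly your ``one-line alternative'': it unfolds the definition at $\vec{\rho}=\vec{0}$, uses independence of $\by$ and $\bz$ to factor $\Stab_{0,\vec{0}}(\maj_k)=\E[\maj_k(\by)]\E[\maj_k(\bz)]$, and concludes from $\maj_k$ being odd that both factors vanish; your primary Fourier route via \Cref{prop:stab-fourier} reaches the same $\E[\maj_k]^2=0$ with one extra (but harmless) step.
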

\begin{proof}
    For odd $k$, $\maj_k$ is an odd function, $\Ex_{\bx \sim \bits^n}[\maj_k(\bx)]$. Then,
    \begin{align*}
        \stab_{\mu, \vrho}(\maj_k) &= \Ex_{\bx_1 \sim \bits^k, \bx_2 \sim \bits^k}[\maj_k(\bx_1)\maj_k(\bx_2)] \\
        &= \Ex_{\bx_1 \sim \bits^k}[\maj_k(\bx_1)]\Ex_{\bx_2 \sim \bits^k}[\maj_k(\bx_2)] \tag{$\bx_1, \bx_2$ independent} \\
        &= 0 \cdot 0 =0.\tag{$\maj_k$ is odd}
    \end{align*}
\end{proof}

The following completes the proof of \Cref{lem:counter-1}.
\begin{corollary}
    In the setting of \Cref{lem:counter-1},
    \begin{equation*}
        \adv_{\mcD^k}(g\circ f, g \circ \tilde{f}_{R/k}) = 0. 
    \end{equation*}
\end{corollary}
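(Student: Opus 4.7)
The plan is a short direct computation using independence. First, observe that $R/k = n-1$, so $\tilde{f}_{R/k}$ is an $(n-1)$-junta; by Proposition~\ref{prop:xor-0-adv}, every $(n-1)$-junta has zero advantage against $\XOR_n$, so the precise identity of $\tilde{f}_{R/k}$ does not matter for the calculation. Write $\tilde{f} \coloneqq \tilde{f}_{R/k}$ and let $S \subsetneq [n]$ be its relevant-variable set, so $|S| \leq n - 1$.

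The key step is the following independence claim: for $\bx \sim \mathrm{Unif}(\bits^n)$, the random variables $f(\bx) = \XOR_n(\bx)$ and $\tilde{f}(\bx)$ are independent. This is immediate by conditioning on $\bx_S$: the conditioning fixes $\tilde{f}(\bx)$, while at least one coordinate $j^\star \notin S$ remains free, so $\XOR_n(\bx)$ is a uniform $\pm 1$ coin flip conditional on $\bx_S$ (flipping $\bx_{j^\star}$ toggles its sign). Since the $k$ blocks $\bx^{(1)}, \ldots, \bx^{(k)}$ of $\mcD^k$ are drawn independently, the entire tuple $(f(\bx^{(i)}))_{i \in [k]}$ is independent of $(\tilde{f}(\bx^{(i)}))_{i \in [k]}$, and the former is distributed uniformly on $\bits^k$.

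Combining these observations gives the result in one line:
\[
\adv_{\mcD^k}\bigl(g \circ f,\, g \circ \tilde{f}\bigr) \;=\; \Ex\bigl[\maj_k(\by)\bigr]\cdot\Ex\bigl[\maj_k(\tilde{\by})\bigr],
\]
where $\by \coloneqq (f(\bx^{(i)}))_{i \in [k]}$ is uniform on $\bits^k$ and $\tilde{\by} \coloneqq (\tilde{f}(\bx^{(i)}))_{i \in [k]}$. Since $k$ is odd, $\maj_k$ is an odd function, so $\Ex[\maj_k(\by)] = 0$ and the product vanishes. This is essentially the same reasoning used in Proposition~\ref{prop:maj-NS-0}, specialized to the setting where the ``noise'' decouples the two copies completely. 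The only place requiring care is the independence claim above; everything else is a routine unwinding of definitions.
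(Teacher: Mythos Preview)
Your proof is correct. The independence claim is the crux and your justification is sound: conditioning on $\bx_S$ determines $\tilde f(\bx)$ while leaving at least one free coordinate, so $\XOR_n(\bx)$ remains a fair coin; block-wise independence then lifts this to the full tuples, and the factorization $\Ex[\maj_k(\by)]\cdot\Ex[\maj_k(\tilde\by)]$ follows, vanishing because $\maj_k$ is balanced for odd $k$.

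The paper takes a different route: it invokes the strong composition theorem (\Cref{thm:strong-formal}) as a black box together with \Cref{prop:xor-0-adv} and \Cref{prop:maj-NS-0}. Concretely, since $\adv_\mcD(f,S_i)=0$ for each $|S_i|\le n-1$ and $\mu=0$, the upper-bound part of \Cref{thm:strong-formal} yields $\adv_{\mcD^k}(g\circ f,S)\le\sqrt{\stab_{0,\vec 0}(\maj_k)}=0$, so in particular the specific $S$-junta $g\circ\tilde f_{R/k}$ (and its negation) has advantage at most $0$, forcing equality. Your argument is more elementary and self-contained, avoiding the composition-theorem machinery entirely; the paper's one-line derivation instead showcases that machinery in action. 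Both reach the same conclusion, and your direct computation is arguably the cleaner way to see why this particular corollary holds.
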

\begin{proof}
    This follows from \Cref{thm:strong-formal} and \Cref{prop:xor-0-adv,prop:maj-NS-0}.
\end{proof}

\subsection{Counterexample to Conjecture 2}
\begin{lemma}
    \label{lem:counter-2}
    For any $n \geq 10$, $k \in \N$, and $R \leq n/2$, let $\mcD$ be uniform over $\bits^n$. There are $g: \bits^k$ and $f:\bits^n \to \bits$ for which, for all partitions $r_1 + \cdots +r_k = R$,
    \begin{equation*}
        \error_{\mcD^k}(g\circ f, g(\tilde{f}_{r_1}, \ldots, \tilde{f}_{r_k})) \geq 1 - 2^{-\Omega(k)}.
    \end{equation*}
\end{lemma}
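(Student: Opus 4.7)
The plan is to use $f = \XOR_n$ paired with $g = \mathrm{AND}_k$, exploiting the fact that every $r$-junta with $r<n$ has advantage zero with respect to $\XOR_n$ (by \Cref{prop:xor-0-adv}). In particular, the constant $+1$ function is a valid best $r$-junta, and we take $\tilde{f}_r \equiv +1$ as the canonical choice. Because $R \le n/2 < n$, every partition $r_1+\cdots+r_k = R$ satisfies $r_i < n$ for all $i$, so $\tilde{f}_{r_i} \equiv +1$ for every $i$, and the canonical composed form approximator reduces identically to $g(+1, \ldots, +1) = \mathrm{AND}_k(\mathbf{1}) = +1$.

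On the other hand, when $\bx \sim \mcD^k$ is uniform, the coordinates $\by_i \coloneqq \XOR_n(\bx^{(i)})$ are i.i.d.\ uniform in $\bits$, so $\by$ is uniform over $\bits^k$. Thus
\[
\Prx_{\bx \sim \mcD^k}[(g\circ f)(\bx) = +1] \;=\; \Prx_{\by \sim \bits^k}[\mathrm{AND}_k(\by) = +1] \;=\; 2^{-k},
\]
and the error of the canonical composed form equals $1 - 2^{-k} \ge 1 - 2^{-\Omega(k)}$, which is what the lemma claims. A handful of lines assembling the above is all that's needed beyond the already-cited~\Cref{prop:xor-0-adv}.

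The main subtlety is that $\tilde{f}_r$ is not uniquely determined for $f = \XOR_n$, since every $r$-junta with $r<n$ achieves the same (zero) advantage. I interpret the lemma as an existence statement that permits fixing a canonical tie-breaking rule: the constant $+1$ function is a valid best $r$-junta, and choosing it furnishes the counterexample. A tie-free variant is obtainable by replacing $\XOR_n$ with a mildly biased function whose Fourier support is contained in $\{\emptyset\} \cup \{T : |T|>n/2\}$ (for instance, $f(x) = 1 - 2\cdot\mathbf{1}[x \in C]$ for a linear code $C \subseteq \mathbb{F}_2^n$ with dual minimum distance exceeding $n/2$ and $|C|$ tuned so $\mu \in (0,1)$ is bounded away from $1$). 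For such $f$ the Fourier decomposition gives $\adv(f,h) = \mu\,\hat{h}(\emptyset)$ for every $(n/2)$-junta $h$, uniquely maximized by $h \equiv +1$; the rest of the analysis then carries through verbatim, with $\by$ drawn from $(\pi_\mu)^k$ instead of the uniform distribution, and $\Pr[\mathrm{AND}_k(\by) = +1] = \paren*{\tfrac{1+\mu}{2}}^k = 2^{-\Omega(k)}$.
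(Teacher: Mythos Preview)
Your primary construction with $f = \XOR_n$ has a genuine gap. You flag the tie-breaking issue, but it is more serious than you acknowledge: the constant $-1$ function is an equally valid choice of $\tilde{f}_{r_i}$ (every $r$-junta with $r<n$ has the same zero advantage), and with that choice $g(\tilde{f}_{r_1},\ldots,\tilde{f}_{r_k}) = \textsc{And}_k(-1,\ldots,-1) = -1$ has error only $2^{-k}$. In fact this \emph{is} the optimal $R$-junta for $g\circ f$ when $R<n$, since the Fourier support of $\textsc{And}_k\circ\XOR_n$ contains no nonempty set of size below $n$. So $\XOR_n$ is not a counterexample to Conjecture~2 at all: under a perfectly natural tie-break the conjecture holds for it. The lemma is meant to exhibit $f,g$ for which $g(\tilde f_{r_1},\ldots,\tilde f_{r_k})$ is bad for \emph{every} legitimate choice of the $\tilde f_{r_i}$'s, and allowing yourself to pick the worst tie-break trivializes the statement.

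Your fix---a biased $f$ whose nontrivial Fourier mass lies entirely above level $n/2$, so that $\tilde f_r \equiv +1$ is the unique optimizer---is exactly the right repair, and it is essentially what the paper does. The paper obtains such an $f$ probabilistically (\Cref{prop:probabilistic-f}): sample each $f(x)\sim\pi_{0.25}$ independently and argue via Hoeffding plus a union bound that, with positive probability, every conditional mean $\Ex[f(\bx)\mid \bx_S = x_S]$ with $|S|\le n/2$ is strictly positive while the global mean stays at most $1/2$; then $\tilde f_r\equiv +1$ uniquely and the $\textsc{And}_k$ calculation goes through with error at least $1-(3/4)^k$. Your code-based route is an explicit alternative to the same end; it works in principle, though you would still owe a short check that a linear code of dimension $n-O(1)$ with dual minimum distance exceeding $n/2$ exists for every $n\ge 10$. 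The probabilistic argument sidesteps that bookkeeping.
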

\Cref{lem:counter-2} is particularly surprising in light of the fact that either the constant $-1$ or constant $1$ functions, both of which are $0$-juntas, will achieve error $\leq 1/2$ with respect to $g \circ f$. We begin with a probabilistic construction of $f$ achieving the following.
\begin{proposition}
    \label{prop:probabilistic-f}
    For any $n \geq 10$, there is an $f: \bits^n \to \bits$ for which $\Ex_{\bx \sim \bits^n}[f(\bx)] \leq 0.5$ but, for all $|S| \leq n/2$ and $x \in \bits^n$,
    \begin{equation*}
        \Ex_{\bx \sim \bits^n}[f(\bx) \mid \bx = x] > 0.
    \end{equation*}
\end{proposition}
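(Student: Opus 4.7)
I read the conditioning as $\Ex[f(\bx)\mid \bx_S = x_S]$; as written, $\Ex[f(\bx)\mid \bx = x] = f(x)$, and requiring this to be strictly positive everywhere forces $f \equiv +1$, contradicting $\Ex[f]\leq 1/2$. With that corrected reading, the approach is a standard probabilistic-method construction: draw $f(x) \in \bits$ independently for each $x \in \bits^n$ with $\Pr[f(x) = +1] = p$, for a parameter $p$ chosen just below $3/4$ so that $2p - 1 < 1/2$ but $2p - 1 > 0$.

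The global-mean check is routine: $\Ex_{\bx}[f(\bx)] = 2^{-n}\sum_x f(x)$ is an average of $2^n$ independent $\pm 1$ variables of mean $2p - 1 < 1/2$, so Chebyshev (or Chernoff) puts it within $o(1)$ of $2p-1$, and hence below $1/2$, with probability $1 - o(1)$.

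The nontrivial part is the sub-cube condition. Fix any $S \subseteq [n]$ with $|S| \leq n/2$ and any $x \in \bits^n$, and let $C_{S,x} \coloneqq \{y \in \bits^n : y_S = x_S\}$, a sub-cube of size $2^{n-|S|} \geq 2^{n/2}$. The conditional expectation $\Ex_{\bx}[f(\bx)\mid \bx_S = x_S]$ is the average of $|C_{S,x}|$ i.i.d.~$\pm 1$ samples of mean $2p-1 > 0$, so by Hoeffding (or by a tighter Chernoff bound on the number of $-1$'s, which is $\mathrm{Bin}(|C_{S,x}|, 1-p)$), this average fails to be positive with probability at most $\exp(-c\cdot 2^{n/2})$ for some absolute $c > 0$ depending on $p$. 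The total number of pairs $(S, x_S)$ with $|S|\leq n/2$ is $\sum_{j \leq n/2}\binom{n}{j}2^j \leq 3^n$, so a union bound gives failure probability at most $3^n\exp(-c\cdot 2^{n/2})$. Combined with the global-mean control, for $n$ large enough the two desired events co-occur with positive probability, and hence some deterministic $f$ works.

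The main obstacle is pinning the threshold at the stated $n\geq 10$: with the clean choice $p = 3/4$ and plain Hoeffding, the union bound only defeats the $3^n$ enumeration once $n$ is around $14$. Closing this gap calls for one of (i) a sharper Chernoff/Bernstein bound tailored to the tail of $\mathrm{Bin}(N, 1/4)$ exceeding $N/2$, (ii) a carefully optimized $p$, or (iii) a direct small-case construction for $n \in \{10,\ldots,13\}$. Since the proposition is only used to instantiate the counterexample in Lemma~\ref{lem:counter-2}, where any single sufficiently large $n$ suffices, the precise threshold is inessential to the conceptual point.
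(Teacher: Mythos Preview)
Your approach is essentially the paper's: probabilistic construction with i.i.d.\ $\pm 1$ values, Hoeffding for both the global mean and each sub-cube average, then a union bound. The paper takes $f(x)\sim\pi_{0.25}$ (i.e.\ $p=5/8$, mean $1/4$) rather than $p$ just below $3/4$; this larger gap between the target mean and both thresholds ($0$ and $1/2$) is what makes the numerics more comfortable. For the union bound the paper simply counts $2^n$ subsets times $2^n$ inputs, giving $4^n$ events, which is cruder than your $3^n$ but immaterial against the $\exp(-c\,2^{n/2})$ tail.

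Your caveat about hitting the exact threshold $n\ge 10$ is fair. The paper's stated constants for this step contain an arithmetic slip (it writes the Hoeffding exponent as $2^n/2$ and $2^{n/2}/2$ rather than the smaller values one actually gets), so its verification of $n\ge 10$ is not airtight either. As you note, the proposition is only used to feed the counterexample in Lemma~\ref{lem:counter-2}, where any sufficiently large $n$ suffices, so this is cosmetic.
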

\begin{proof}
    Consider a random function $\boldf$ where, for each $x \in \bits^n$, $\boldf(x) \sim \pi_{0.25}$. We'll show that $\boldf$ meets the desired criteria with a strictly positive probability, proving the existence of at least one such $f$.

    Let $\mu(\boldf) \coloneqq \Ex_{\bx \sim \bits^n}[\boldf(\bx)]$. Then $\mu(\boldf)$ is the average of $2^n$ independent samples of $\pi_{0.25}$. Applying Hoeffding's inequality,
    \begin{equation*}
        \Pr[\mu(\boldf) > 0.5] \leq \exp(-2 \cdot (0.25)^2 \cdot 2^n) = \exp(-2^n/2).
    \end{equation*}
    Similarly, for any $|S| \leq n/2$ and $x \in \bits^n$, let $\mu(\boldf, S, x) \coloneqq  \Ex_{\bx \sim \bits^n}[\boldf(\bx) \mid \bx = x]$. $\mu(\boldf,S,x)$ the average of at least $2^{n/2}$ independent samples of $\pi_{0.25}$. Once again, by Hoeffding's inequality,
     \begin{equation*}
        \Pr[\mu(\boldf,S,x) \leq 0] \leq \exp(-2 \cdot (0.25)^2 \cdot 2^{n/2}) = \exp(-2^{n/2}/2).
    \end{equation*}
    Union bounding over all $2^n$ choices of $S$ and $2^n$ choices for $x$, we have that $\boldf$ meets the desired criteria with probability at least
    \begin{equation*}
        1 - \exp(-2^n/2) - 2^{2n}\exp(-2^{n/2}/2).
    \end{equation*}
    When $n \geq 10$, the above probability is strictly positive, so such an $f$ must exist.
\end{proof}

\begin{proof}[Proof of \Cref{lem:counter-2}]
    Let $f$ be a function with the properties of \Cref{prop:probabilistic-f}, and $g = \textsc{And}_k$ return $+1$ if and only if all $k$ of its inputs are $+1$. By \Cref{prop:g-avg}, for any $r \leq n/2$, $\tilde{f}_r$ is the constant $+1$ function. Therefore, for any $r_1 + \cdots + r_k = R$, $g(\tilde{f}_{r_1}, \ldots, \tilde{f}_{r_k})$ is the constant $+1$ function. However,
    \begin{equation*}
        \Pr_{\bx \sim \mcD^k}[(g \circ f)(\bx) = +1] = (3/4)^k. \qedhere
    \end{equation*}
\end{proof}

\subsection{Counterexample to Conjecture 3}
\begin{lemma}
    \label{lem:counter-3}
    There is $g:\bits^k \to \bits$, $f:\bits^n \to \bits$, distribution $\mcD$ over $\bits^n$, and budget $R$ for which no $R$-junta of composed form achieves optimal error among all $R$-Juntas for $g\circ f$ with respect to $\mcD^k$.
\end{lemma}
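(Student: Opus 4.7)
The plan is to construct an explicit small instance $(g, f, \mcD, R)$ exhibiting the required strict separation. The structural hook, specialized to $k = 2$, is the following: if $F(x^{(1)}, x^{(2)}) = h(q^{(1)}(x^{(1)}), q^{(2)}(x^{(2)}))$ is of composed form, then $F$ depends on $x^{(1)}$ only through the single bit $q^{(1)}(x^{(1)})$. When the optimal $R$-junta with variable support $S = S_1 \cup S_2$ is rewritten via \Cref{prop:def-rho-x} as a function of the conditional means $\mu_i(x_{S_i}) = \Ex_{\mcD}[f \mid x_{S_i}]$, its $\pm 1$-valued table on $\operatorname{Range}(\mu_1) \times \operatorname{Range}(\mu_2)$ can therefore have at most two distinct rows and at most two distinct columns. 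A general $R$-junta need not respect this, and I aim to engineer a setting where the uniquely optimal table has four distinct rows.

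By \Cref{prop:g-avg} and \Cref{prop:def-rho-x}, whenever $\Ex[g \mid \mu_1, \mu_2] \neq 0$ everywhere, the unique advantage-maximizing $R$-junta with support $S$ is $F_S(x) = \operatorname{sign}\bigl(\Ex[g \mid \mu_1(x_{S_1}), \mu_2(x_{S_2})]\bigr)$, and for $r_i \geq 2$ with $n > r_i$ the mean $\mu_i$ can take at least three distinct values. Fixing $g = \textsc{And}$, one checks $\Ex[\textsc{And} \mid \mu_1,\mu_2] = \tfrac{1}{2}(1+\mu_1)(1+\mu_2) - 1$, with zero-locus $\mu_2^{\ast}(\mu_1) = (1-\mu_1)/(1+\mu_1)$ strictly decreasing in $\mu_1 \in (-1,1)$. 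If I can arrange $\mu_i$ to attain four values in $(0,1)$ such that $\mu_2^{\ast}$ crosses each inter-value gap, the sign table realizes the staircase whose rows are $(-,-,-,-),\, (-,-,-,+),\, (-,-,+,+),\, (-,+,+,+)$, which has four distinct rows and is provably not composable (a finite case check against the nontrivial two-group partitions of the four $\mu_2$-columns confirms that no $h(q^{(1)}, q^{(2)})$ reproduces this table).

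To realize this I take $k = 2$, $n = 3$, $R = 4$, partition $(2,2)$, and let $\mcD$ have uniform marginal on $(x_1, x_2)$ and $(x_1, x_2)$-dependent biases $p_{x_1,x_2} = \Pr[x_3 = +1 \mid x_1, x_2]$; choosing the $p_{x_1,x_2}$ aligns $\mu(x_1, x_2) = \Ex_{\mcD}[f \mid x_1, x_2]$ with the four staircase values. The main obstacle I anticipate is ensuring that partition $(2,2)$ is the globally optimal support: if $f$ is close to a low-order junta, an unbalanced partition such as $(3,1)$---whose sign table is automatically composable---grants near-exact knowledge of $f$ on one block and can beat $(2,2)$. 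I plan to take $f$ badly approximable by any $1$- or $2$-junta under $\mcD$ (for instance a twisted parity $f(x) = x_1 x_2 x_3$ with $\mcD$ chosen so that $\Ex_{\mcD}[f \mid x_i]$ is small for every single $i$), and then fine-tune the $p_{x_1,x_2}$ so that simultaneously (i) every entry of the $4 \times 4$ table is strictly nonzero, (ii) the staircase sign pattern is preserved, and (iii) the $(2,2)$ advantage strictly exceeds that of every competing partition. Each of (i)--(iii) is a finite open condition on the $p_{x_1,x_2}$; any tuple satisfying all three yields the counterexample, and the proof is completed by a direct finite verification against every partition of $R = 4$ and every composed-form approximator $h(q^{(1)}, q^{(2)})$.
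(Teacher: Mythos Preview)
Your structural insight is exactly the one the paper uses: a composed-form approximator $h(q^{(1)},q^{(2)})$ sees block~1 only through the single bit $q^{(1)}$, so its sign table on $\operatorname{Range}(\mu_1)\times\operatorname{Range}(\mu_2)$ has at most two distinct rows, and the goal is to force the optimal table to have more. The paper carries this out with $g=\textsc{And}_2$ and only \emph{three} values $\{3/5,3/4,1\}$ for $p_i=(1+\mu_i)/2$ (your four-row staircase is more than needed), then sidesteps the partition-comparison issue by a probabilistic construction on $n=2$ followed by derandomization to $n\gg 2$.

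However, your explicit $n=3$ construction has a genuine gap, and it is precisely the obstacle you flagged but did not resolve. With $n=3$, $R=4$, $g=\textsc{And}$, and all four conditional means $\mu_i\in(0,1)$ as you stipulate, the partition $(3,1)$ always weakly dominates your intended $(2,2)$. Since $(2,2)$ on $S_1=S_2=\{1,2\}$ is a coarsening of $(3,2)$, data processing gives $\adv_{(2,2)}\le\adv_{(3,2)}$. But when every $\mu_i>0$, the optimal $(3,2)$-approximator outputs $-1$ if $y_1=-1$ and $+1$ if $y_1=+1$ (since $\sign(\mu_2)=+1$ always), so it ignores block~2 entirely and
\[
\adv_{(3,2)}=\tfrac{1-\mu}{2}\cdot 1+\tfrac{1+\mu}{2}\cdot\Ex[\mu_2]=\tfrac{1+\mu^2}{2}=\adv_{(3,0)}\le\adv_{(3,1)}.
\]
Hence $\adv_{(2,2)}\le\adv_{(3,0)}$, and the $(3,0)$-optimal approximator $F(x)=f(x^{(1)})$ is manifestly of composed form (take $q^{(1)}=f$, $h(a,b)=a$). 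So a composed-form $R$-junta always matches or beats your staircase approximator, and the lemma cannot be witnessed by this instance. Your claim that conditions (i)--(iii) are simultaneously satisfiable is therefore false under the constraint $\mu_i\in(0,1)$; the intersection of those ``open conditions'' is empty.

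The fix is exactly what the paper does: take $n$ large enough that no block can be learned exactly within the budget, so that unbalanced partitions like $(3,1)$ lose their edge. With a probabilistic $f$ over $n\gg 2$ whose conditional mean depends only on $(x_1,x_2)$, the $(2,2)$ partition on $\{1,2\}\times\{1,2\}$ becomes essentially optimal, and the three-value (or four-value) non-composable table survives as the unique optimum.
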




\begin{proof}
We'll set $k = 2$, $g = \textsc{And}_2$. Let $p:\bits^2 \to [0,1]$ be defined as
\begin{equation*}
    p(x) \coloneqq \begin{cases}
        1 &\text{if }x_1 = x_2 = 1,\\
        3/4 &\text{if } x_1 \neq x_2,\\
        3/5 &\text{if } x_1 = x_2 = -1.
    \end{cases}
\end{equation*}
We begin by describing a probabilistic construction: Given the input $x$, the value of $\boldf(x)$ will still be a random variable. In particular, we set $n =2$, and $\boldf(x)$ is set to $+1$ with probability $p(x)$ and $-1$ otherwise. This probabilistic construction will later be derandomized. We allow a junta budget of $R = 4$.

Next, we construct an optimal approximator for $g \circ \boldf$. Given an input $x^{(1)}, x^{(2)}$, let $\by_1 = \boldf(x^{(1)})$ and $\by_2 = \boldf(x^{(2)})$. For succinctness, we'll use $p_i$ to refer to the $\Pr[\by_i = 1]$. Then, since $g = \textsc{And}_2$, the optimal approximator will return $1$ iff $p_1p_2 \geq 1/2$. For our particular $\boldf$ the only choices for $p_i$ are $3/5,3/4,1$. As a result,
\begin{equation*}
    h^{(\mathrm{opt})}(p_1,p_2) = \begin{cases}
        1 &\text{if }p_1 = 1\text{ or }p_2 = 1,\\
        1 &\text{if }p_1 = p_2 = 3/4,\\
        0 &\text{otherwise}.
    \end{cases}
\end{equation*}
However, no composed form can achieve the above optimal approximator. Recall that composed form approximators are of the form $h(q_1, q_2)$, where each $q_i$ has range $\bits$. The fact that the size of this range is $2$, but there are three possible choices $(3/5, 3/4, 1)$ for $p_i$, is the crux of the issue.

In more detail, of the three choices $(3/5,3/4,1)$ for $p_i$, $q_1$ must classify at least two of them the same way. This gives three cases.
\begin{enumerate}
    \item If $q_1$ classifies $3/4$ and $1$ the same way, $h(q_1, q_2)$ cannot distinguish between $p_1 = 3/4, p_2 = 3/5$ and $p_1 = 1, p_2 = 3/5$, and so cannot be optimal.
    \item If $q_1$ classifies $3/5$ and $3/4$ the same way, $h(q_1, q_2)$ cannot distinguish between $p_1 = 3/4, p_2 = 3/4$ and $p_1 = 3/5, p_2 = 3/4$, and so cannot be optimal.
    \item If $q_1$ classifies $3/5$ and $1$ the same way, $h(q_1, q_2)$ cannot distinguish between $p_1 = 3/5, p_2 = 3/4$ and $p_1 = 1, p_2 = 3/4$, and so cannot be optimal.
\end{enumerate}
In all three cases composed form cannot achieve optimal error. It will always be off by some constant. 

To derandomize this construction, we set $n \gg 2$ sufficiently large. For each $x \in \bits^n$, we sample the value $f(x)$ to be $+1$ with probability $p(x_1,x_2)$ and $-1$ otherwise. Note that after randomly selecting the value of $f$ on each input $x \in \bits^n$, $f$ is now a deterministic function. Following the same arguments as in \Cref{prop:probabilistic-f}, with high probability over the random choices in defining $f$, the error of the optimal $4$-junta and of the optimal composed form $4$-junta for $g\circ f$ are within $\pm \eps(n)$ of what they are for $g \circ \boldf$, where $\eps(n)$ goes to $0$ as $n \to \infty$. Therefore, for sufficiently large $n$, there exists an $f$ meeting the desired criteria.
\end{proof}

\end{document}